\documentclass[11pt]{article}

\usepackage[margin=1in]{geometry}
\usepackage[T1]{fontenc}
\usepackage[utf8]{inputenc}
\usepackage{lmodern}
\usepackage{microtype}
\usepackage{parskip}
\usepackage{amsmath,amssymb,amsthm,mathtools,bm}
\usepackage{booktabs,threeparttable,array}
\usepackage{float}
\usepackage{enumitem}
\usepackage[authoryear,round]{natbib}
\usepackage[hidelinks]{hyperref}
\usepackage[nameinlink,noabbrev]{cleveref}

\newcommand{\R}{\mathbb{R}}
\newcommand{\E}{\mathbb{E}}
\newcommand{\Var}{\operatorname{var}}

\newcommand{\diag}{\operatorname{diag}}
\newcommand{\ones}{\mathbf{1}}

\newcommand{\rank}{\operatorname{rank}}

\newcommand{\vecop}{\operatorname{vec}}
\newcommand{\stackop}{\operatorname*{stack}}

\theoremstyle{plain}
\newtheorem{theorem}{Theorem}[section]
\newtheorem{proposition}[theorem]{Proposition}
\newtheorem{lemma}[theorem]{Lemma}
\newtheorem{corollary}[theorem]{Corollary}

\theoremstyle{definition}
\newtheorem{assumption}[theorem]{Assumption}

\theoremstyle{remark}
\newtheorem{remark}[theorem]{Remark}

\crefname{theorem}{Theorem}{Theorems}
\crefname{proposition}{Proposition}{Propositions}
\crefname{lemma}{Lemma}{Lemmas}
\crefname{corollary}{Corollary}{Corollaries}
\crefname{equation}{equation}{equations}

\title{Moment-Based Inference for Regression with Latent Dirichlet Covariates}
\author{Ziyu Jiang\thanks{Department of Economics, UCL. Email: \href{mailto:uctpzji@ucl.ac.uk}{\texttt{uctpzji@ucl.ac.uk}}.}}
\date{\today}

\begin{document}
\maketitle

\begin{abstract}
Topic models are often used as first-stage dimension-reduction tools before
regression, with estimated document-level topic shares treated as observed
covariates. This plug-in workflow creates two inferential difficulties:
valid inference requires a regular first-stage-to-second-stage expansion
that propagates topic-estimation uncertainty, and, at fixed document length,
a document's topic mixture is not consistently recoverable from its own
words even when the population topic matrix is known. Corrected spectral
moment methods for LDA provide a natural starting point: when the total
Dirichlet concentration parameter is known, low-order word moments can be
corrected to yield operators diagonal in the latent topic basis. We extend
this idea to downstream regression. Under a finite latent Dirichlet
allocation model with response residuals orthogonal to the low-order token
moments used for identification, response-weighted word moments admit the
same correction, and the resulting supervised operator identifies the
regression coefficient \(\beta\) directly, without estimating document-level
topic shares. The main theoretical obstacle is that the spectral correction
depends on the unknown total concentration \(\alpha_0\). We show that, for
\(k\ge3\) topics and under a generic finite-probe condition, \(\alpha_0\) is
identifiable by commutativity: at the true value, a family of corrected
word-moment operators commute, whereas away from the truth they generically
do not. This yields a feasible estimator and allows uncertainty in
\(\hat\alpha_0\) to be propagated into inference for \(\beta\). The estimator
is asymptotically linear as the number of documents grows with fixed document
length, with sandwich standard errors based on document-level moment
contributions. Simulations show near-nominal coverage where plug-in
topic-share regressions can undercover, and an application to top economics
journals illustrates contrast inference for latent topic effects.
\end{abstract}

\section{Introduction and Literature Review}\label{sec:intro}

Topic models are widely used as preprocessing tools for empirical
regression analysis. A corpus of high-dimensional text is first mapped
into a low-dimensional representation, typically estimated document-level
topic shares, and these topic-derived variables are then used as
regressors, controls, or predictors in a downstream model. Examples
include newspaper topics in regressions for political violence
\citep{MuellerRauh2018}, news-topic predictors in macroeconomic and
financial forecasting regressions \citep{LarsenThorsrud2019,Thorsrud2020},
topic-based measures of business news in vector autoregressions and
stock-market timing models \citep{BybeeKellyManelaXiu2024}, and topic
measures constructed from central-bank transcripts in econometric studies
of transparency and deliberation \citep{HansenMcMahonPrat2018}. In this
role, latent Dirichlet allocation and related topic models are best
viewed as dimension-reduction steps: they convert sparse word-count data
into a small number of latent coordinates intended for subsequent
statistical use.

This perspective connects topic-model preprocessing to the use of
estimated factors in factor-augmented regressions, where a high-dimensional
panel is compressed into a lower-dimensional latent representation before
a second-stage analysis \citep{StockWatson2002,Bai2003,BaiNg2006}. The
analogy is useful because it highlights the inferential issue. Once an
estimated latent representation is used as a regressor, uncertainty from
the first step must be reconciled with inference in the second step. In
the empirical topic-modeling workflow, this reconciliation is often
absent: researchers estimate topics, construct document-level topic
shares, and then condition on those estimated shares as if they were
ordinary observed covariates.

We consider a canonical finite-LDA version of this problem. Let document
\(i\) contain \(N\) word tokens \(x_{i1},\ldots,x_{iN}\), let \(k\) denote
the number of topics, let \(h_i\) denote its latent topic mixture, and let
\(O\) denote the topic matrix. Conditional on \(h_i\), the word tokens are
generated from the mixture \(Oh_i\). Suppose a scalar response can be written
as
\[
        Y_i=\beta^\top h_i+\varepsilon_i,
\]
where the residual is orthogonal to the low-order token moments used for
identification; a sufficient model-level condition is
\[
        \E(\varepsilon_i\mid h_i,x_{i1},\ldots,x_{iN})=0 .
\]
The object of inference is the downstream coefficient \(\beta\). We treat
the topic matrix, Dirichlet parameters and regression coefficient as fixed
unknown population parameters; the Dirichlet law for \(h_i\) is a
random-effects distribution for document heterogeneity, not a prior for a
posterior analysis over these unknown parameters. The usual plug-in workflow
estimates topic shares \(\hat h_i\) and then regresses \(Y_i\) on
\(\hat h_i\). This creates two difficulties.

The first is a first-stage regularity problem. Let \(n\) denote the number
of independent documents. For classical downstream inference, the relevant
object is not only the first-stage topic estimate but the entire map from
the observed corpus to the final regression coefficient. A standard
delta-method argument requires this map to be regular in the sense that
\[
    \sqrt n(\hat\beta-\beta)
    =
    n^{-1/2}\sum_{i=1}^n \phi_i+o_p(1),
\]
for an influence function \(\phi_i\) with consistently estimable variance.
Many topic-modeling procedures used in applied work, including Bayesian
and approximate-Bayesian implementations of LDA, produce posterior or
approximate-posterior summaries under chosen prior, hyperparameter and
computational specifications. Such summaries may be useful for exploration,
prediction or posterior inference under the specified model, but they do
not by themselves provide frequentist confidence intervals for the
downstream coefficient \(\beta\). Frequentist validity of the resulting
two-stage map must be established separately. In standard plug-in workflows,
the second-stage regression instead conditions on estimated topic shares as
if they were observed covariates. The resulting standard errors then use
only the second-stage regression variation and omit the first-stage
component of the influence function; unless that omitted component is
asymptotically negligible, the confidence intervals are not valid for the
sampling distribution of the full two-step procedure.

The second difficulty is more fundamental. Even if the population topic
matrix \(O\) were known, a document-specific topic mixture \(h_i\) cannot
be consistently recovered from the words in document \(i\) when document
length is fixed. The number of documents may grow, but the information
about any particular document's latent mixture remains bounded. Hence the
plug-in regressor \(\hat h_i\) contains non-vanishing document-level
error. This is the generated-regressor problem emphasized by
\citet{BattagliaChristensenHansenSacher2024} for regressions using
variables extracted from unstructured data. In the present setting it
arises even under perfect recovery of the topic matrix: increasing the
number of documents improves estimation of population moments, but it
does not make each individual \(h_i\) observed. A regression on \(\hat h_i\) can therefore have a wrong probability
limit or a persistent finite-document bias, not merely standard errors that
condition on a noisy generated regressor.

Integrated latent-variable likelihood procedures or fully Bayesian joint
models could also target \(\beta\) under a more complete specification by
integrating over the latent document mixtures. The contribution here is
complementary: we give a low-order moment route to frequentist confidence
intervals for \(\beta\), based on the supervised moment restrictions stated
below, without specifying a full likelihood for the joint distribution of
responses and words or a prior over the unknown population parameters. The
comparison with plug-in regressions is therefore aimed at the common two-step
topic-share regression workflow rather than at exhausting all possible
latent-variable estimators.

The spectral approach to LDA provides a useful starting point for
resolving both difficulties. \citet{AnandkumarGeHsuKakadeLiu2012} show
that, when the total Dirichlet concentration parameter is known, the
topic matrix can be recovered from low-order cross-token word moments
after applying Dirichlet-specific corrections. These corrections subtract
lower-order terms induced by the common document-level topic mixture,
leaving a population object that is diagonal in latent topic coordinates.
\citet{WangZhu2014} extend this corrected-moment logic to supervised
LDA under a related latent-mixture regression structure, introducing
response-weighted corrected moments that recover the regression weights.
\citet{RenWangZhu2018} develop this supervised spectral approach further,
giving both a two-stage method that recovers LDA parameters before
recovering the regression model and a single-phase method that jointly
recovers topic and regression parameters.

We take this supervised spectral moment structure as the starting point. Our
supervised identity uses the same response-weighted moment logic, but we
formulate it as a conceptually close and operationally different observed-space
operator. The response-weighted correction is formed in observed word space,
and after projection onto the recovered topic basis the representation
\(O^+H^y_{\alpha_0}O\) is diagonal with entries proportional to the components
of \(\beta\). The distinction is therefore not the underlying supervised-LDA
moment model, but the inferential target and estimation problem: we study
frequentist inference for a downstream regression coefficient at fixed document
length, without estimating document-level topic shares.

The remaining roadblock, common to the spectral LDA and supervised spectral
LDA procedures just discussed, is that the corrected moments are indexed by
the total Dirichlet concentration parameter. Write this concentration as
\(\alpha_0\). In these procedures, \(\alpha_0\) is supplied as an input when
forming the corrected moments. In empirical work this quantity is rarely known,
and treating it as a tuning parameter is unsatisfactory for two reasons. First,
the downstream procedure then rests on a choice of \(\alpha_0\) rather than
on an identification argument for \(\alpha_0\) from the observable word
distribution. Practical selection rules are therefore difficult to
interpret as estimators of a population parameter. Second, even if a
particular rule performs well computationally, its uncertainty has no
automatic route into the final confidence interval for \(\beta\). We show
that, when the number of topics satisfies \(k\ge3\) and a generic finite-probe
condition holds, \(\alpha_0\) is itself identified from observable moments:
at the true concentration value, a family of corrected word-moment operators
commute, whereas away from the truth they generically fail to commute. This
gives a feasible estimator of \(\alpha_0\) and allows its uncertainty to be
propagated through the estimator of the downstream coefficient.

We work in a fixed-document-length asymptotic regime: the number of
independent documents grows, while the number of words in each document
may remain bounded. In this regime, the target is inference on the
population coefficient \(\beta\), not recovery of every document-specific
topic mixture. Within this setting, the paper makes three contributions.

First, for known total concentration, we give an observed-space operator
formulation of the supervised spectral moment identity. This identity starts
from the same response-weighted corrected moments as the supervised spectral
LDA literature, but casts them in a form tailored to downstream inference:
the supervised operator is an observed-space matrix whose topic-basis
representation \(O^+H^y_{\alpha_0}O\) is diagonal, and those diagonal entries
are proportional to the components of \(\beta\). This identifies the downstream
coefficient directly from observable word--response moments, without
constructing document-level topic-share regressors.

Second, we identify and estimate the total Dirichlet concentration parameter.
The spectral LDA and supervised spectral LDA procedures discussed above form
their corrected moments conditional on \(\alpha_0\). We show that, when the
number of topics is at least three and a generic probe condition holds,
\(\alpha_0\) is the unique value for which a finite family of corrected
word-moment operators commute. This gives a feasible estimator of \(\alpha_0\)
from the same low-order word moments used for topic recovery.

Third, relative to the supervised spectral recovery results in
\citet{WangZhu2014} and \citet{RenWangZhu2018}, we establish frequentist
inference for the downstream coefficient in a fixed-dimensional regime. The
primitive empirical quantities are distinct-token word moments and
response-weighted word moments, averaged across independent documents. With
fixed vocabulary dimension \(d\), fixed topic dimension \(k\), and fixed document
length \(N\), or after conditioning on a fixed admissible compression dimension,
these moments satisfy an ordinary root-\(n\) central limit theorem over
documents. The concentration-parameter, spectral, and supervised coefficient
maps are smooth under rank and separation conditions. The resulting estimator
of \(\beta\) is asymptotically linear with a feasible sandwich variance
estimator, so the contribution is a first-order sampling expansion and feasible
standard errors rather than only consistency or sample-complexity recovery.

The Monte Carlo experiments are designed to match this inferential logic.
They evaluate the finite-sample performance of the proposed confidence
intervals and compare them with plug-in regressions using estimated
document-topic shares. The proposed intervals have coverage close to
nominal in the designs considered. Plug-in regressions undercover
substantially, including a known-\(O\) plug-in reconstruction procedure in
which the topic matrix is supplied but document-level mixtures are still
reconstructed from finite document counts. This isolates the fixed-document
generated-regressor problem: increasing the number of documents improves
population moment estimation, but it does not make the document-level topic
mixtures observed. The simulations are intended to isolate this mechanism
under favorable rank, separation and probe conditions, not to exhaust all
finite-sample stress tests of the feasible workflow.

The rest of the paper is organized as follows. Section~\ref{sec:model}
introduces the finite-LDA model, the downstream regression target, and
the observable word and word--response moments. Section~\ref{sec:identification}
derives the corrected moment operators, proves population identification
of the topic matrix and the downstream coefficient, and establishes
identification of \(\alpha_0\) through commutativity. Section~\ref{sec:estimation}
defines the proposed moment-based estimators and gives the asymptotic distribution and
sandwich variance estimator for the downstream coefficient. Section~\ref{sec:simulation}
reports Monte Carlo evidence on finite-sample performance and on the
failure of plug-in topic regressions at fixed document length. Section~\ref{sec:application}
gives a real-data illustration using articles from the five general-interest
economics journals. Section~\ref{sec:discussion} discusses limitations and
extensions. Proofs and extensions, including fixed-dimensional observed
controls and split-sample linear compression for high-dimensional preprocessing,
are collected in the Supplement.

\section{Model and target parameter}\label{sec:model}

This section specifies the finite-LDA sampling experiment and the downstream regression target. The point of the formulation is to separate the latent document-level topic mixture, which is useful for defining the model, from the population regression coefficient, which is the object of inference. The corrected spectral moments used for identification are introduced in Section~\ref{sec:identification}.

\subsection{Observed data and latent topic structure}

We observe independent documents
\[
    (Y_i,x_{i1},\ldots,x_{iN}),\qquad i=1,\ldots,n,
\]
where \(Y_i\in\R\) is a scalar document-level response. The vocabulary has
\(d\) terms, and each word token is represented by a one-hot vector
\(x_{ij}\in\{e_1,\ldots,e_d\}\subset\R^d\). The main theory is stated for
a common document length \(N\ge3\), fixed as \(n\to\infty\). The condition
\(N\ge3\) is used because the spectral construction relies on third-order
cross-token moments. Variable document lengths \(N_i\ge3\) can be handled by
normalizing the within-document moment averages document by document, as
recorded in Appendix~\ref{app:variable-lengths}; we maintain the same
sampled-document LDA moment structure throughout.

Let \(k\) denote the number of topics, and let
\[
    O=[O_{:1},\ldots,O_{:k}]\in\R^{d\times k}
\]
denote the topic matrix. Each column \(O_{:\ell}\) lies in the vocabulary
simplex \(\Delta^{d-1}\) and gives the word distribution for topic
\(\ell\). The matrix \(O\) and the Dirichlet parameter below are treated
as fixed unknown population parameters.

For document \(i\), the latent topic mixture is
\[
    h_i\sim\operatorname{Dirichlet}(\alpha),
    \qquad
    \alpha=(\alpha_1,\ldots,\alpha_k)^\top\in(0,\infty)^k,
\]
with total concentration
\begin{equation}\label{eq:alpha0-def}
    \alpha_0:=\sum_{\ell=1}^k\alpha_\ell .
\end{equation}
The usual LDA sampling scheme draws latent topic labels
\[
    z_{ij}\mid h_i \sim \operatorname{categorical}(h_i),
    \qquad j=1,\ldots,N,
\]
independently across token positions, and then draws words according to
\[
    x_{ij}\mid z_{ij}=\ell
    \sim \operatorname{categorical}(O_{:\ell}) .
\]
The topic labels \(z_{ij}\) are not observed and will not be used
directly. Integrating them out gives the equivalent conditional word
distribution
\begin{equation}\label{eq:cond-prob}
    \Pr(x_{ij}=e_v\mid h_i)=(Oh_i)_v,
    \qquad v=1,\ldots,d .
\end{equation}
Thus, conditional on \(h_i\), the word tokens are independent and
identically distributed with conditional mean
\begin{equation}\label{eq:cond-mean}
    \E[x_{ij}\mid h_i]=Oh_i .
\end{equation}
Marginally, tokens in the same document are dependent through the common
latent mixture \(h_i\), which is the dependence exploited by the
cross-token moment identities below.

The document-specific vector \(h_i\) is not observed. Because \(N\) is
fixed, the analysis does not require, and does not assume, that \(h_i\)
can be consistently recovered for each document. The inferential target
below is instead a finite-dimensional population parameter identified
from the joint law of \((Y_i,x_{i1},\ldots,x_{iN})\).

\subsection{Downstream response model}

The response is linked to the latent topic mixture by
\begin{equation}\label{eq:supervised-linear-model}
    Y_i=\beta^\top h_i+\varepsilon_i,
\end{equation}
for some \(\beta\in\R^k\), with \(\E[Y_i^2]<\infty\). The primitive
restriction used by the supervised moment identities is low-order
response-token orthogonality over the token positions entering the averaged moments:
\begin{equation}\label{eq:response-token-orthogonality}
    \E(\varepsilon_i)=0,\qquad
    \E(x_{ia}\varepsilon_i)=0\quad\text{for all }a,\qquad
    \E(x_{ia}x_{ib}^\top\varepsilon_i)=0\quad\text{for all }a\ne b .
\end{equation}
A sufficient model-level condition is
\begin{equation}\label{eq:strong-response-condition}
    \E(\varepsilon_i\mid h_i,x_{i1},\ldots,x_{iN})=0,
\end{equation}
or equivalently
\(\E(Y_i\mid h_i,x_{i1},\ldots,x_{iN})=\beta^\top h_i\). The weaker
condition \(\E(Y_i\mid h_i)=\beta^\top h_i\) alone is not enough for the
response-weighted word moments used below if residual variation is
systematically related to realized token choices beyond \(h_i\).

It should be noted that, since \(\ones^\top h_i=1\), a separate intercept is not identified from a
common shift of all topic coefficients. If a model is written as
\(c+\tilde\beta^\top h_i\), then it is observationally equivalent to
\((\tilde\beta+c\ones)^\top h_i\). We therefore adopt the normalization that absorbs the intercept into
\(\beta\). This is the same normalization issue that arises in a regression
with a full set of category indicators: the common level is a convention, while
contrasts such as \(\beta_a-\beta_b\) are invariant and often the substantively
meaningful objects. Topic labels are arbitrary, so \(\beta\) is always
interpreted in the same ordering as the columns of \(O\). Appendix~\ref{app:observed-controls}
gives the corresponding fixed-dimensional extension with observed controls.

\subsection{Observable low-order moments}

All population moments below refer to a generic document, and we suppress the document index. By exchangeability, any distinct token positions have the same joint distribution, so we use positions \(1,2,3\) for notation. Write
\[
    \pi:=\frac{\alpha}{\alpha_0},
    \qquad
    D:=\diag(\alpha).
\]
The first two Dirichlet moments are
\begin{equation}\label{eq:dir-first}
    \E[h]=\pi,
\end{equation}
and
\begin{equation}\label{eq:dir-second-matrix}
    \E[hh^\top]
    =\frac{D+\alpha\alpha^\top}{\alpha_0(\alpha_0+1)} .
\end{equation}
The corresponding first and second cross-token word moments are
\begin{equation}\label{eq:first-moment}
    \mu:=\E[x_1]=O\pi,
\end{equation}
\begin{equation}\label{eq:second-moment}
\begin{aligned}
    M_2:=\E[x_1x_2^\top]
    &=O\E[hh^\top]O^\top  \\
    &=\frac{1}{\alpha_0(\alpha_0+1)}ODO^\top
      +\frac{\alpha_0}{\alpha_0+1}\mu\mu^\top .
\end{aligned}
\end{equation}
The use of distinct positions in \(M_2\) is important: \(M_2\) is a cross-token moment, not the second moment of a single multinomial draw.

For a contraction direction \(\eta\in\R^d\), define the flattened third cross-token moment
\begin{equation}\label{eq:T-eta-def}
    T(\eta):=\E\{x_1x_2^\top\langle x_3,\eta\rangle\}.
\end{equation}
Let \(\gamma:=O^\top\eta\). Conditional independence and \eqref{eq:cond-mean} give
\begin{equation}\label{eq:T-eta-pullout}
    T(\eta)=O\,\E\{hh^\top(\gamma^\top h)\}\,O^\top .
\end{equation}
The contracted third Dirichlet moment is
\begin{equation}\label{eq:dir-contracted-third}
\begin{aligned}
\E\{hh^\top(\gamma^\top h)\}
=\frac{1}{\alpha_0(\alpha_0+1)(\alpha_0+2)}
\Big
(& (\alpha^\top\gamma)\alpha\alpha^\top
  +\alpha\alpha^\top\diag(\gamma)
  +\diag(\gamma)\alpha\alpha^\top  \\
& + (\alpha^\top\gamma)D
  +2\diag(\alpha\circ\gamma)
\Big).
\end{aligned}
\end{equation}
Thus
\begin{equation}\label{eq:T-eta-explicit}
\begin{aligned}
T(\eta)
=\frac{1}{\alpha_0(\alpha_0+1)(\alpha_0+2)}
O\Big
(& (\alpha^\top\gamma)\alpha\alpha^\top
  +\alpha\alpha^\top\diag(\gamma)
  +\diag(\gamma)\alpha\alpha^\top  \\
& + (\alpha^\top\gamma)D
  +2\diag(\alpha\circ\gamma)
\Big)O^\top .
\end{aligned}
\end{equation}
The corrected moments in the next section subtract lower-order terms induced by the common document mixture; at \(\tau=\alpha_0\), this removes the rank-one term in \eqref{eq:second-moment} and all lower-order third-moment terms in \eqref{eq:T-eta-explicit}, leaving only the diagonal factor proportional to \(\diag(\alpha\circ\gamma)\).

Finally, define the supervised observable moments
\begin{equation}\label{eq:supervised-response-moments}
    m_y:=\E[Y],
    \qquad
    v_y:=\E[x_1Y],
    \qquad
    T^y:=\E[x_1x_2^\top Y].
\end{equation}
Under \eqref{eq:supervised-linear-model} and
\eqref{eq:response-token-orthogonality},
\begin{equation}\label{eq:supervised-moment-factorizations}
    m_y=\beta^\top\pi,
    \qquad
    v_y=O\E[hh^\top]\beta,
    \qquad
    T^y=O\E\{hh^\top(\beta^\top h)\}O^\top .
\end{equation}
Consequently, whenever \(O\) has full column rank, there exists \(\eta_\beta\in\R^d\) such that
\begin{equation}\label{eq:eta-beta-bridge}
    O^\top\eta_\beta=\beta,
\end{equation}
and the supervised moments satisfy the bridge identities
\begin{equation}\label{eq:supervised-bridge-identities}
    m_y=\langle \eta_\beta,\mu\rangle,
    \qquad
    v_y=M_2\eta_\beta,
    \qquad
    T^y=T(\eta_\beta).
\end{equation}
These identities are the population reason why the downstream coefficient can be recovered from observable word--response moments without first constructing document-level estimates of \(h_i\).

\section{Population identification through corrected operators}\label{sec:identification}

This section gives the population identities on which the estimators are based. The central point is that the document-level mixture \(h_i\) need not be recovered document by document. Instead, low-order observable moments can be corrected so that the resulting operators are diagonal in the latent topic coordinates. One operator recovers the topic directions; its supervised analogue recovers the downstream coefficient vector. A second consequence of the same diagonalization is an identifying restriction for the unknown concentration mass \(\alpha_0\).

Throughout the section, expectations are under the model of Section~\ref{sec:model}. We write \(\tau>0\) for a candidate value of the total concentration and reserve \(\alpha_0\) for the true value. Let
\[
    \mu=\E[x_1],\qquad
    M_2=\E[x_1x_2^\top],\qquad
    T(\eta)=\E\{x_1x_2^\top\langle x_3,\eta\rangle\},
\]
where \(\eta\in\R^d\). Define the \(\tau\)-corrected second moment
\begin{equation}\label{eq:id-Btau}
    B_\tau
    :=M_2-\frac{\tau}{\tau+1}\mu\mu^\top,
\end{equation}
and the \(\tau\)-corrected contracted third moment
\begin{equation}\label{eq:id-Atau}
\begin{aligned}
    A_\tau(\eta)
    := {}&T(\eta)
    -\frac{\tau}{\tau+2}
    \{M_2\eta\mu^\top+\mu\eta^\top M_2+\langle \eta,\mu\rangle M_2\}  \\
    &\quad
    +\frac{2\tau^2}{(\tau+1)(\tau+2)}
    \langle \eta,\mu\rangle\mu\mu^\top .
\end{aligned}
\end{equation}
The associated observed-space operator is
\begin{equation}\label{eq:id-Htau}
    H_\tau(\eta):=A_\tau(\eta)B_\tau^+,
\end{equation}
where \(B_\tau^+\) is the Moore--Penrose inverse. The operator is generally not symmetric, so eigenvectors below are right eigenvectors.

\subsection{Corrected LDA moments}\label{subsec:corrected-moments}

The following lemma is the algebraic core of the paper. It states that the correction in \eqref{eq:id-Btau}--\eqref{eq:id-Atau} removes the off-diagonal Dirichlet terms exactly at \(\tau=\alpha_0\). Let
\[
    D:=\diag(\alpha),\qquad
    C_2:=\alpha_0(\alpha_0+1),\qquad
    C_3:=\alpha_0(\alpha_0+1)(\alpha_0+2).
\]

\begin{lemma}[Corrected moment factorization]\label{lem:id-corrected-factorization}
Assume \(O\in\R^{d\times k}\) has full column rank and \(\alpha\in(0,\infty)^k\). For any \(\tau>0\) and \(\eta\in\R^d\), put \(w=O^\top\eta\). Then
\begin{equation}\label{eq:id-B-factor}
    B_\tau=O S_B(\tau)O^\top,
    \qquad
    S_B(\tau)=\frac{1}{C_2}D+
    \frac{\alpha_0-\tau}{\alpha_0^2(\alpha_0+1)(\tau+1)}\alpha\alpha^\top .
\end{equation}
Moreover, \(S_B(\tau)\) is positive definite for every \(\tau>0\), with
\begin{equation}\label{eq:id-SB-inv}
    S_B(\tau)^{-1}=C_2D^{-1}-(\alpha_0-\tau)\ones\ones^\top .
\end{equation}
The third-order correction satisfies
\begin{equation}\label{eq:id-A-factor-general}
    A_\tau(\eta)=O S_A(\tau;w)O^\top,
\end{equation}
where \(S_A(\tau;w)\) is linear in \(w\). At the true concentration,
\begin{equation}\label{eq:id-true-diagonal-moments}
    B_{\alpha_0}=O\frac{D}{C_2}O^\top,
    \qquad
    A_{\alpha_0}(\eta)=O\frac{2}{C_3}\diag(\alpha\circ w)O^\top .
\end{equation}
Consequently,
\begin{equation}\label{eq:id-H-true-diagonal}
    H_{\alpha_0}(\eta)
    =O\left\{\frac{2}{\alpha_0+2}\diag(O^\top\eta)\right\}O^+ .
\end{equation}
\end{lemma}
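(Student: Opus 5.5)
The plan is direct substitution of the moment expressions from Section~\ref{sec:model} into the definitions \eqref{eq:id-Btau}--\eqref{eq:id-Atau}, pulling \(O\) out on both sides. Every observable quantity factors through \(O\): \(\mu=O\pi\) with \(\pi=\alpha/\alpha_0\), \(M_2=O\E[hh^\top]O^\top\) by \eqref{eq:second-moment}, \(M_2\eta=O\E[hh^\top]w\), \(\langle\eta,\mu\rangle=w^\top\pi\), and \(T(\eta)=O\E\{hh^\top(w^\top h)\}O^\top\) by \eqref{eq:T-eta-pullout}. Hence \(B_\tau\) and \(A_\tau(\eta)\) are of the form \(O(\cdot)O^\top\). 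The middle factors are built from \(\alpha\alpha^\top\), \(D\), \(\alpha w^\top\)-type terms and \(\diag(\alpha\circ w)\), and they are visibly linear in \(w\). This already gives \eqref{eq:id-A-factor-general}.

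For \(B_\tau\), I will combine the two \(\alpha\alpha^\top\) coefficients:
\[
\frac{1}{\alpha_0(\alpha_0+1)}-\frac{\tau}{(\tau+1)\alpha_0^2}=\frac{\alpha_0-\tau}{\alpha_0^2(\alpha_0+1)(\tau+1)},
\]
which yields \(S_B(\tau)\). To get the inverse \eqref{eq:id-SB-inv}, I will apply Sherman--Morrison to \(D/C_2+c\,\alpha\alpha^\top\) and use \(D^{-1}\alpha=\ones\) and \(\alpha^\top D^{-1}\alpha=\alpha_0\). Alternatively, I can simply verify directly that the product with the claimed inverse equals \(I\); this is a one-line check using the same two identities.

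Positive definiteness needs a short case split on the sign of \(\alpha_0-\tau\).
\begin{itemize}[leftmargin=*]
\item If \(\tau\le\alpha_0\), then \(S_B(\tau)\) is a positive diagonal matrix plus a PSD rank-one term, so it is positive definite.
\item If \(\tau>\alpha_0\), the rank-one term is negative. Here I will use the matrix determinant lemma, or Cauchy--Schwarz \((\alpha^\top x)^2\le\alpha_0\, x^\top Dx\). These reduce positivity to \(1+C_2 c\,\alpha_0>0\), where \(c\) is the rank-one coefficient. That quantity equals \((\alpha_0+1)/(\tau+1)>0\).
\end{itemize}

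The third-order computation is the main bookkeeping obstacle. I will substitute \eqref{eq:dir-contracted-third} together with
\[
\E[hh^\top]w=\frac{D w+\alpha(\alpha^\top w)}{C_2}=\frac{\alpha\circ w+\alpha(\alpha^\top w)}{C_2}
\]
into \(A_{\alpha_0}(\eta)\). I will then collect coefficients separately on the five structures \((\alpha^\top w)\alpha\alpha^\top\), \(\alpha\alpha^\top\diag(w)\), \(\diag(w)\alpha\alpha^\top\), \((\alpha^\top w)D\) and \(\diag(\alpha\circ w)\). One notational point is easy to get wrong: \(\alpha(\alpha\circ w)^\top=\alpha\alpha^\top\diag(w)\).

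At \(\tau=\alpha_0\) the correction prefactor is \(\alpha_0/(\alpha_0+2)\). Multiplied by \(1/C_2\) (and by \(1/\alpha_0\) from \(\pi\)), it gives exactly \(1/C_3\) times the corresponding terms. The three middle structures therefore cancel against the \(T\) terms. The \((\alpha^\top w)\alpha\alpha^\top\) coefficient needs more care: it collects
\[
\frac{1}{C_3}-\frac{3}{C_3}+\frac{2\alpha_0^2}{(\alpha_0+1)(\alpha_0+2)}\cdot\frac{1}{\alpha_0^3}=0,
\]
where the \(-3/C_3\) comes from the three correction pieces each contributing \(1/C_3\). Only \(2\diag(\alpha\circ w)/C_3\) survives. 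For general \(\tau\) I will not simplify further, since only linearity in \(w\) is claimed.

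Finally, for \eqref{eq:id-H-true-diagonal} I will compute \(B_{\alpha_0}^+=(O^\top)^+C_2D^{-1}O^+\). This holds because \(O\) has full column rank, so the Moore--Penrose inverse of \(OSO^\top\) with \(S\) invertible equals \((O^+)^\top S^{-1}O^+\). That identity is checked by the four Penrose conditions, using that \(OO^+\) is the orthogonal projector onto \(\operatorname{range}(O)\). Then \(O^\top(O^\top)^+=I_k\), and
\[
H_{\alpha_0}(\eta)=O\,\frac{2C_2}{C_3}\,\diag(\alpha\circ w)D^{-1}\,O^+=O\,\frac{2}{\alpha_0+2}\,\diag(w)\,O^+ .
\]
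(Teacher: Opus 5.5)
Your proposal is correct and follows the paper's proof essentially step for step. The shared steps are: substituting the Dirichlet moments to obtain \(S_B(\tau)\); Sherman--Morrison for the inverse; the Cauchy--Schwarz bound \((\alpha^\top a)^2\le\alpha_0\,a^\top Da\) for \(\tau>\alpha_0\); coefficient-by-coefficient cancellation of \(\alpha s^\top+s\alpha^\top\), \(rD\) and \(r\alpha\alpha^\top\) at \(\tau=\alpha_0\); and the Penrose-condition formula \((OSO^\top)^+=O^{+\top}S^{-1}O^+\). The one loose spot is your determinant-lemma alternative for positive definiteness: by itself it only gives \(\det S_B(\tau)>0\), and you would also need that a negative rank-one perturbation of a positive definite matrix has at most one nonpositive eigenvalue — the Cauchy--Schwarz route you also offer (and the paper uses) avoids this.
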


The proof is a direct calculation from the Dirichlet second and third moments and is given in the Supplement. The number of topics is also identified at this point: since \(\E[hh^\top]\) is positive definite and \(O\) has full column rank, \(\rank(M_2)=k\).

\subsection{Topic directions and supervised coefficients when \texorpdfstring{\(\alpha_0\)}{alpha0} is known}\label{subsec:known-alpha-identification}

Equation~\eqref{eq:id-H-true-diagonal} gives the usual spectral recovery of the topic directions, but in observed coordinates rather than whitened coordinates. If \(\eta\) is chosen so that the coordinates of \(O^\top\eta\) are distinct and nonzero, the right eigenvectors associated with the \(k\) nonzero simple eigenvalues of \(H_{\alpha_0}(\eta)\) are exactly the topic vectors.

\begin{theorem}[Topic identification with known concentration]\label{thm:id-known-alpha-topic}
Assume \(O\in\R^{d\times k}\) has full column rank and \(\alpha\in(0,\infty)^k\). Let \(\eta\) be drawn from any distribution on \(\R^d\) that is absolutely continuous with respect to Lebesgue measure. Then, with probability one over \(\eta\), the matrix \(H_{\alpha_0}(\eta)\) has exactly \(k\) nonzero simple eigenvalues,
\begin{equation}\label{eq:id-topic-eigenvalues}
    \lambda_j(\eta)=\frac{2}{\alpha_0+2}(O^\top\eta)_j,
    \qquad j=1,\ldots,k,
\end{equation}
and the associated right eigenvectors are \(O_{:1},\ldots,O_{:k}\). Thus \(O\) is identified up to column permutation and scaling; the simplex normalization \(\ones^\top O_{:j}=1\) fixes the scaling.
\end{theorem}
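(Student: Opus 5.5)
The plan is to read everything off the diagonal representation \eqref{eq:id-H-true-diagonal} in Lemma~\ref{lem:id-corrected-factorization}, and to handle the ``almost surely'' part with a measure-zero argument on the linear map \(\eta\mapsto O^\top\eta\). Write
\[
H_{\alpha_0}(\eta)=O\Lambda O^+,\qquad \Lambda=\tfrac{2}{\alpha_0+2}\diag(O^\top\eta).
\]
Since \(O\) has full column rank, \(O^+O=I_k\). Hence for each \(j\),
\[
H_{\alpha_0}(\eta)O_{:j}=O\Lambda e_j=\lambda_j(\eta)O_{:j},
\]
so each topic vector is a right eigenvector with eigenvalue \(\lambda_j(\eta)\) as in \eqref{eq:id-topic-eigenvalues}. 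It remains to account for the rest of the spectrum and to show the \(\lambda_j\) are almost surely nonzero and distinct.

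\textbf{Spectrum.} The matrix \(H_{\alpha_0}(\eta)\) has rank at most \(k\), and its range lies in \(\operatorname{col}(O)\). Its kernel contains \(\ker O^+=\ker O^\top=\operatorname{col}(O)^\perp\), which has dimension \(d-k\). I will decompose \(\R^d=\operatorname{col}(O)\oplus\operatorname{col}(O)^\perp\). Both summands are invariant: the first because the range lies in it, the second because \(H\) kills it. So \(H\) is block diagonal, with block \(0\) on the complement and a block similar to \(\Lambda\) on \(\operatorname{col}(O)\), via the basis \(O\). The characteristic polynomial is therefore \(t^{d-k}\prod_j(t-\lambda_j)\). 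Consequently, once every \(\lambda_j\) is nonzero and they are pairwise distinct, the nonzero eigenvalues are exactly \(\lambda_1,\dots,\lambda_k\), each algebraically simple. The eigenspace of \(\lambda_j\) is then one-dimensional and must be spanned by \(O_{:j}\).

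\textbf{Genericity.} Let \(o_j\) denote the \(j\)th column of \(O\). The exceptional set is the union of the hyperplanes \(\{\eta:o_j^\top\eta=0\}\) and \(\{\eta:(o_j-o_l)^\top\eta=0\}\) for \(j\ne l\). Each is a proper hyperplane, because full column rank makes \(o_j\ne0\) and \(o_j\ne o_l\). A finite union of proper hyperplanes has Lebesgue measure zero, so it has probability zero under any absolutely continuous law for \(\eta\). This is the only place where the distributional assumption on \(\eta\) enters.

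\textbf{Identification.} The eigenvectors of \(H_{\alpha_0}(\eta)\) are determined by the population moments and \(\eta\). Simple eigenvalues determine eigenvectors only up to nonzero scalars, and the eigenvalue ordering is arbitrary, so \(O\) is recovered up to column permutation and scaling. Because \(\ones^\top O_{:j}=1\neq0\), each recovered eigenvector \(u\) has \(\ones^\top u\ne0\), and \(u/\ones^\top u=O_{:j}\) fixes the scale.

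The main obstacle is simplicity in the \(d\times d\) matrix, since \(H\) is not symmetric. That is why the kernel must be shown to carry only the zero eigenvalue with no Jordan interaction, and the invariant-subspace decomposition above handles exactly this.
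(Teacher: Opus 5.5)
Your proposal is correct and follows the paper's route: read the eigenpairs off $H_{\alpha_0}(\eta)=O\Lambda O^+$, and exclude the finite union of proper hyperplanes $O_{:j}^\top\eta=0$ and $(O_{:j}-O_{:\ell})^\top\eta=0$. Your invariant-subspace decomposition $\R^d=\operatorname{col}(O)\oplus\operatorname{col}(O)^\perp$, which yields the characteristic polynomial $t^{d-k}\prod_j(t-\lambda_j)$, simply makes explicit the algebraic simplicity and separation from the zero eigenvalue that the paper asserts in a single sentence.
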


The downstream coefficient is identified by a parallel supervised correction. Using the response moments from Section~\ref{sec:model}, define for a candidate \(\tau\)
\begin{equation}\label{eq:id-Aytau}
\begin{aligned}
    A_\tau^y
    :={}&T^y
    -\frac{\tau}{\tau+2}
    \{v_y\mu^\top+\mu v_y^\top+m_y M_2\}  \\
    &\quad
    +\frac{2\tau^2}{(\tau+1)(\tau+2)}m_y\mu\mu^\top,
\end{aligned}
\end{equation}
and
\begin{equation}\label{eq:id-Hytau}
    H_\tau^y:=A_\tau^y B_\tau^+ .
\end{equation}
The key identity is obtained at \(\tau=\alpha_0\).

\begin{theorem}[Direct identification of the downstream coefficient]\label{thm:id-beta-supervised}
Assume the model of Section~\ref{sec:model}, including the response-token
orthogonality conditions in \eqref{eq:response-token-orthogonality}. If
\(O\) has full column rank, then
\begin{equation}\label{eq:id-Ay-diagonal}
    A_{\alpha_0}^y
    =O\frac{2}{C_3}\diag(\alpha\circ\beta)O^\top
\end{equation}
and
\begin{equation}\label{eq:id-Hy-diagonal}
    H_{\alpha_0}^y
    =O\left\{\frac{2}{\alpha_0+2}\diag(\beta)\right\}O^+ .
\end{equation}
Consequently, once the topic columns have been labeled by the unsupervised spectral step,
\begin{equation}\label{eq:id-beta-projection}
    \beta_j
    =\frac{\alpha_0+2}{2}\{O^+H_{\alpha_0}^yO\}_{jj},
    \qquad j=1,\ldots,k .
\end{equation}
This identification does not require the entries of \(\beta\) to be distinct and does not require document-level estimates of \(h_i\).
\end{theorem}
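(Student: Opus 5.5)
The plan is to reduce the supervised statement to the unsupervised Lemma~\ref{lem:id-corrected-factorization} through the bridge identities \eqref{eq:eta-beta-bridge}--\eqref{eq:supervised-bridge-identities}. Since \(O\) has full column rank, \(O^\top\) is surjective, so there is \(\eta_\beta\in\R^d\) with \(O^\top\eta_\beta=\beta\). For instance, take \(\eta_\beta=(O^+)^\top\beta\); then \(O^\top\eta_\beta=(O^+O)^\top\beta=\beta\). I first verify the factorizations \eqref{eq:supervised-moment-factorizations} under the orthogonality conditions \eqref{eq:response-token-orthogonality}.

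Substituting \(Y=\beta^\top h+\varepsilon\) gives \(\E[Y]=\beta^\top\E[h]+\E[\varepsilon]=\beta^\top\pi\). Similarly, \(\E[x_1Y]=\E[x_1h^\top]\beta+\E[x_1\varepsilon]=O\E[hh^\top]\beta\). For the third moment, \(\E[x_1x_2^\top Y]=\E[x_1x_2^\top(\beta^\top h)]+\E[x_1x_2^\top\varepsilon]\). The \(\varepsilon\) terms vanish by \eqref{eq:response-token-orthogonality}. The \(h\) terms factor because of conditional independence of tokens given \(h\) and \eqref{eq:cond-mean}. Comparing with \(\mu=O\pi\), \(M_2=O\E[hh^\top]O^\top\) and \eqref{eq:T-eta-pullout} evaluated at \(\gamma=O^\top\eta_\beta=\beta\) then yields
\[
m_y=\langle\eta_\beta,\mu\rangle,\qquad v_y=M_2\eta_\beta,\qquad T^y=T(\eta_\beta).
\]
For \(v_y\), this uses \(O\E[hh^\top]\beta=O\E[hh^\top]O^\top\eta_\beta\). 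The one subtlety to check is that \(M_2\eta_\beta\mu^\top\) and \(\mu\eta_\beta^\top M_2\) in \eqref{eq:id-Atau} correspond to \(v_y\mu^\top\) and \(\mu v_y^\top\) in \eqref{eq:id-Aytau}. This follows from the symmetry of \(M_2\), since \(\eta_\beta^\top M_2=(M_2\eta_\beta)^\top=v_y^\top\).

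Term by term, this gives \(A^y_\tau=A_\tau(\eta_\beta)\) for every \(\tau>0\), and hence \(H^y_\tau=H_\tau(\eta_\beta)\). Applying \eqref{eq:id-true-diagonal-moments} with \(w=O^\top\eta_\beta=\beta\) gives \eqref{eq:id-Ay-diagonal}, and applying \eqref{eq:id-H-true-diagonal} gives \eqref{eq:id-Hy-diagonal}. The result does not depend on which preimage \(\eta_\beta\) is chosen, because every \(\eta_\beta\) with \(O^\top\eta_\beta=\beta\) yields the same observable moments.

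For \eqref{eq:id-beta-projection}, use \(O^+O=I_k\), which holds because \(O\) has full column rank. Then
\[
O^+H^y_{\alpha_0}O=\frac{2}{\alpha_0+2}\diag(\beta)\,O^+O=\frac{2}{\alpha_0+2}\diag(\beta).
\]
Reading off the diagonal gives \(\beta_j\). Here \(O\) is the matrix identified by Theorem~\ref{thm:id-known-alpha-topic}, and its column ordering fixes the labeling of \(\beta\). Permuting the columns of \(O\) permutes the entries of \(\beta\) consistently. Scaling is already fixed by the simplex normalization, and in any case a scaling \(OS\) with \(S\) diagonal would still give \(S^{-1}\diag(\beta)S=\diag(\beta)\).

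The argument never uses distinctness of the \(\beta_j\), because no eigen-decomposition of \(H^y\) is needed: the eigenvectors come from the unsupervised operator. It also uses no document-level \(h_i\). The main obstacle is the bookkeeping that matches the supervised correction terms exactly to the unsupervised ones, in particular the symmetry step for \(M_2\). Everything else is inherited from the lemma.
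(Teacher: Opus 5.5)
Your proposal is correct and follows essentially the same route as the paper's proof. Both choose a preimage \(\eta_\beta\) with \(O^\top\eta_\beta=\beta\), use the bridge identities to get \(A^y_\tau=A_\tau(\eta_\beta)\), invoke the corrected-moment factorization at \(\tau=\alpha_0\), and read off the diagonal via \(O^+O=I_k\); your additional checks (the symmetry of \(M_2\) matching the cross terms, and invariance of the diagonal projection under column rescaling and permutation) are details the paper leaves implicit and are consistent with it.
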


The theorem follows from the bridge identities
\(m_y=\langle\eta_\beta,\mu\rangle\), \(v_y=M_2\eta_\beta\), and
\(T^y=T(\eta_\beta)\) for any \(\eta_\beta\) satisfying
\(O^\top\eta_\beta=\beta\). These identities are implied by the low-order
orthogonality conditions in \eqref{eq:response-token-orthogonality}; the
stronger conditional mean restriction \eqref{eq:strong-response-condition}
is only one sufficient primitive condition. Thus
\(A_{\alpha_0}^y=A_{\alpha_0}(\eta_\beta)\), and
Lemma~\ref{lem:id-corrected-factorization} applies. Equation~\eqref{eq:id-Hy-diagonal}
shows that the supervised operator is diagonal in the topic basis. If the
entries of \(\beta\) are distinct, this relation may be read as an eigenvalue
statement. For inference, however, we use the projection formula
\eqref{eq:id-beta-projection}, which avoids the complications caused by
repeated or zero coefficient values once the topic basis has been fixed.

\subsection{Identification of the concentration mass}\label{subsec:alpha0-identification}

The preceding identities still require the total concentration \(\alpha_0\)
as an input. This is a central practical limitation of corrected spectral LDA
moments: the correction that diagonalizes the word moments depends on a
quantity that is rarely known in applications. We now show that the same
operator representation identifies \(\alpha_0\) from observable word moments.
The identifying restriction is commutativity.

At the true value, \eqref{eq:id-H-true-diagonal} implies that every corrected
operator is diagonal in the same topic basis. Hence any two such operators
commute: for every pair \(\eta_1,\eta_2\),
\begin{equation}\label{eq:id-commute-true}
    [H_{\alpha_0}(\eta_1),H_{\alpha_0}(\eta_2)]=0,
\end{equation}
where \([R,S]=RS-SR\). We show that this property is generically unique to
the true concentration: away from \(\alpha_0\), mean-orthogonal contractions
produce nonzero commutators. Thus commutativity supplies an identifying
restriction for the concentration mass. Let
\begin{equation}\label{eq:id-projection-def}
    P_\mu:=I_d-\frac{\mu\mu^\top}{\|\mu\|^2}.
\end{equation}
If \(\eta\in\mu^\perp\), then \(w=O^\top\eta\in\alpha^\perp\), because
\[
    \alpha^\top O^\top\eta=\alpha_0\mu^\top\eta=0.
\]

\begin{theorem}[Identification of \texorpdfstring{\(\alpha_0\)}{alpha0} by commutativity]\label{thm:id-alpha-commutativity}
Assume \(k\ge3\), \(O\in\R^{d\times k}\) has full column rank, and \(\alpha\in(0,\infty)^k\). For \(\tau>0\), write
\[
    H_\tau(\eta)=O T_\tau(O^\top\eta)O^+,
    \qquad
    T_\tau(w):=S_A(\tau;w)S_B(\tau)^{-1}.
\]
Then the following statements hold.

\begin{enumerate}[label=\textnormal{(\roman*)},leftmargin=2.5em]
\item At \(\tau=\alpha_0\), the family \(\{H_{\alpha_0}(\eta):\eta\in\R^d\}\) is pairwise commuting.

\item If \(\tau\ne\alpha_0\) and \(w_1,w_2\in\alpha^\perp\) are non-collinear, then
\begin{equation}\label{eq:id-latent-commutator}
    [T_\tau(w_1),T_\tau(w_2)]
    =c(\tau,\alpha_0)
    \{(\alpha\circ w_1)w_2^\top-(\alpha\circ w_2)w_1^\top\},
\end{equation}
where
\begin{equation}\label{eq:id-c-commutator}
    c(\tau,\alpha_0)
    =\frac{4(\alpha_0-\tau)(\alpha_0\tau+\alpha_0+\tau)}
    {\alpha_0(\alpha_0+1)(\alpha_0+2)^2(\tau+2)^2} .
\end{equation}
In particular, \([T_\tau(w_1),T_\tau(w_2)]\ne0\), and hence any \(\eta_1,\eta_2\in\mu^\perp\) satisfying \(O^\top\eta_j=w_j\) give
\[
    [H_\tau(\eta_1),H_\tau(\eta_2)]\ne0 .
\]
\end{enumerate}
Consequently, \(\alpha_0\) is the unique value of \(\tau>0\) for which the full family \(\{H_\tau(\eta):\eta\in\mu^\perp\}\) is pairwise commuting.
\end{theorem}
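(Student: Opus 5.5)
Part (i) follows directly from Lemma~\ref{lem:id-corrected-factorization}. By \eqref{eq:id-H-true-diagonal}, every $H_{\alpha_0}(\eta)$ has the form $O\Lambda(\eta)O^+$ with $\Lambda(\eta)$ diagonal. Since $O$ has full column rank, $O^+O=I_k$, so
\[
H_{\alpha_0}(\eta_1)H_{\alpha_0}(\eta_2)=O\Lambda(\eta_1)\Lambda(\eta_2)O^+ .
\]
Diagonal matrices commute, which gives (i). The factorization $H_\tau(\eta)=OT_\tau(O^\top\eta)O^+$ for general $\tau$ follows the same way. By \eqref{eq:id-B-factor}, $B_\tau=OS_B(\tau)O^\top$ with $S_B(\tau)$ positive definite, so $B_\tau^+=(O^\top)^+S_B(\tau)^{-1}O^+$. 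Combining this with \eqref{eq:id-A-factor-general} and $O^\top(O^\top)^+=I_k$ gives
\[
A_\tau B_\tau^+=OS_A(\tau;w)S_B(\tau)^{-1}O^+ .
\]
The same computation shows $[H_\tau(\eta_1),H_\tau(\eta_2)]=O[T_\tau(w_1),T_\tau(w_2)]O^+$. Because $O$ is injective and $O^+$ is surjective onto $\R^k$, the observed commutator vanishes if and only if the latent one does. This reduces the final claim of (ii) to the latent identity \eqref{eq:id-latent-commutator}.

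For (ii), I would first compute $S_A(\tau;w)$ explicitly for $w\in\alpha^\perp$. Use \eqref{eq:T-eta-explicit} with $\alpha^\top w=0$, together with $M_2\eta=O\E[hh^\top]w=O\frac{D w+\alpha(\alpha^\top w)}{C_2}=O\frac{\alpha\circ w}{C_2}$, $\langle\eta,\mu\rangle=0$ and $\mu=O\alpha/\alpha_0$. The terms involving $\alpha^\top w$ or $\langle\eta,\mu\rangle$ drop out, and what remains is
\[
S_A(\tau;w)=\frac{2}{C_3}\diag(\alpha\circ w)+a(\tau)\{(\alpha\circ w)\alpha^\top+\alpha(\alpha\circ w)^\top\},
\qquad
a(\tau)=\frac{1}{C_3}-\frac{\tau}{(\tau+2)\alpha_0C_2},
\]
where the first term uses $\diag(\gamma)\alpha=\alpha\circ\gamma$. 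Note that $a(\alpha_0)=0$, consistent with the lemma.

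Next, multiply by $S_B(\tau)^{-1}=C_2D^{-1}-(\alpha_0-\tau)\ones\ones^\top$ from \eqref{eq:id-SB-inv}. The useful identities are $\diag(\alpha\circ w)D^{-1}=\diag(w)$, $\alpha^\top D^{-1}=\ones^\top$, $\ones^\top(\alpha\circ w)=\alpha^\top w=0$ and $\ones^\top\alpha=\alpha_0$. With these, $T_\tau(w)$ becomes
\[
T_\tau(w)=\frac{2C_2}{C_3}\diag(w)+a C_2\,(\alpha\circ w)\ones^\top+aC_2\,\alpha w^\top-a(\alpha_0-\tau)\alpha_0\,(\alpha\circ w)\ones^\top-\frac{2(\alpha_0-\tau)}{C_3}(\alpha\circ w)\ones^\top .
\]
I would collect this as $T_\tau(w)=p\,\diag(w)+q\,(\alpha\circ w)\ones^\top+r\,\alpha w^\top$, with scalars $p,q,r$ depending only on $(\tau,\alpha_0)$.

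Then I would expand the commutator term by term, using $\ones^\top\alpha=\alpha_0$, $w_j^\top\alpha=0$, $\ones^\top(\alpha\circ w)=0$, $w_1^\top(\alpha\circ w_2)=w_2^\top(\alpha\circ w_1)$ and $\diag(w_1)\alpha=\alpha\circ w_1$. The diagonal parts commute with each other. Each remaining product that is symmetric in $(w_1,w_2)$ cancels in the commutator. For example, $(\alpha\circ w_1)\ones^\top\alpha w_2^\top=\alpha_0(\alpha\circ w_1)w_2^\top$, while $\alpha w_1^\top(\alpha\circ w_2)\ones^\top$ is a symmetric scalar times $\alpha\ones^\top$ and so cancels. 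The survivors should all be multiples of $(\alpha\circ w_1)w_2^\top-(\alpha\circ w_2)w_1^\top$ or of $(\alpha\circ w_1)(\alpha\circ w_2)^\top$-type terms. The main obstacle is the bookkeeping that shows everything collapses to a single coefficient $c=pr+\alpha_0qr+\dots$ equal to \eqref{eq:id-c-commutator}. In particular, I need to check that terms such as $\diag(w_1)(\alpha\circ w_2)\ones^\top$, which produce $(\alpha\circ w_1\circ w_2)\ones^\top$, cancel by symmetry; they do. I would verify the final simplification symbolically, checking that $c$ has the factor $(\alpha_0-\tau)$ since $a,r\propto(\alpha_0-\tau)$.

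Finally, I would show the right-hand side of \eqref{eq:id-latent-commutator} is nonzero. The factor $c\neq0$ for $\tau\ne\alpha_0$ because $\alpha_0\tau+\alpha_0+\tau>0$. If $(\alpha\circ w_1)w_2^\top=(\alpha\circ w_2)w_1^\top$ with $w_1,w_2$ non-collinear, then by comparing column spaces $\alpha\circ w_1\parallel \alpha\circ w_2$, so $w_1\parallel w_2$ because $\alpha>0$. That is a contradiction; the degenerate case $w_1=0$ or $w_2=0$ is excluded by non-collinearity. Non-collinear $w_1,w_2\in\alpha^\perp$ exist exactly when $\dim\alpha^\perp\ge2$, that is, when $k\ge3$. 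They are realized by $\eta_j\in\mu^\perp$: take $\eta_j=(O^\top)^+w_j$, so that $O^\top\eta_j=w_j$ and $\mu^\top\eta_j=\alpha^\top w_j/\alpha_0=0$. Combining this with (i) gives uniqueness of $\alpha_0$.
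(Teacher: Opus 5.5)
Your proposal is correct and follows the paper's proof essentially step for step. Like the paper, you restrict to $w\in\alpha^\perp$, write $T_\tau(w)=p\,\diag(w)+q\,(\alpha\circ w)\ones^\top+r\,\alpha w^\top$ (your $p,q,r$ are the paper's $\lambda,\rho_\tau,\beta_\tau$), expand the commutator using $\alpha^\top w_j=0$ and $\ones^\top(\alpha\circ w_j)=0$, and lift to observed space via $O(\cdot)O^+$. For the bookkeeping you deferred, the three cross-commutators give $c=pq+pr+\alpha_0 qr$, and this simplifies exactly to the stated $c(\tau,\alpha_0)$; note that the $pq$ term comes from the $\diag(w)$ versus $(\alpha\circ w)\ones^\top$ cross terms, which your partial sum omits.
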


The restriction \(k\ge3\) is used only here. It ensures that \(\alpha^\perp\) has dimension at least two, so non-collinear mean-orthogonal contractions exist. The present commutator argument is therefore formulated for topic models with at least three topics.

For estimation, one does not need to check commutativity over all \(\eta\in\mu^\perp\). Let \(s_1,\ldots,s_L\in\R^d\), define
\begin{equation}\label{eq:id-finite-probes}
    v_\ell=P_\mu s_\ell,
    \qquad \ell=1,\ldots,L,
\end{equation}
and let \(\mathcal I\subset\{(\ell,q):1\le \ell<q\le L\}\) be non-empty. Consider the population finite-probe criterion
\begin{equation}\label{eq:id-pop-criterion}
    Q_{\mathcal I}(\tau)
    :=\sum_{(\ell,q)\in\mathcal I}
    \big\|[H_\tau(v_\ell),H_\tau(v_q)]\big\|_F^2 .
\end{equation}
If there is at least one pair \((\ell,q)\in\mathcal I\) for which \(O^\top v_\ell\) and \(O^\top v_q\) are non-collinear, then
\begin{equation}\label{eq:id-Q-identifies-alpha}
    Q_{\mathcal I}(\tau)=0
    \quad\Longleftrightarrow\quad
    \tau=\alpha_0 .
\end{equation}
If two raw probes \(s_1,s_2\) are drawn independently from distributions that are absolutely continuous on \(\R^d\), this non-collinearity condition holds with probability one. The sample estimator in Section~\ref{sec:estimation} is the plug-in version of \eqref{eq:id-pop-criterion}, using empirical moments and empirical mean-orthogonal projections.

The commutativity result also has a purely unsupervised implication. Combined
with the corrected-moment eigenvector identity in Theorem~\ref{thm:id-known-alpha-topic},
it implies that the finite-dimensional LDA model is identified from low-order
word moments without supplying the total concentration as an input. We record
this consequence explicitly.

\begin{corollary}[Unsupervised finite-LDA identification]\label{cor:unsupervised-lda-identification}
Under the assumptions of \Cref{thm:id-alpha-commutativity}, the word-side parameters of the finite LDA model are identified from the joint law of three distinct tokens, without supplying \(\alpha_0\) as an input. In particular, \(k=\rank(M_2)\), \(\alpha_0\) is identified by the commutativity condition in \Cref{thm:id-alpha-commutativity}, the topic matrix \(O\) is identified up to column permutation by \Cref{thm:id-known-alpha-topic} using any contraction direction with distinct nonzero latent coordinates, and the simplex normalizations \(\ones^\top O_{:j}=1\) fix the column scales. The Dirichlet parameter is then identified by
\[
    \pi=O^+\mu,\qquad \alpha=\alpha_0\pi .
\]
\end{corollary}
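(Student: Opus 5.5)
The plan is to assemble the corollary as a chain of identification steps. At each step the quantity in question is written as a functional of the population moments \(\mu\), \(M_2\) and \(T(\cdot)\), which are themselves functionals of the joint law of three distinct tokens. I will take "identified" to mean that two parameter configurations \((O,\alpha)\) and \((\tilde O,\tilde\alpha)\) satisfying the assumptions and inducing the same law of \((x_1,x_2,x_3)\) must agree up to a column permutation.

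First, \(\mu=\E[x_1]\), \(M_2=\E[x_1x_2^\top]\) and \(T(\eta)=\E\{x_1x_2^\top\langle x_3,\eta\rangle\}\) are determined by the three-token law for every \(\eta\). By \eqref{eq:dir-second-matrix}, \(\E[hh^\top]\) equals \(\{D+\alpha\alpha^\top\}/C_2\), which is positive definite. Since \(O\) has full column rank, \(M_2=O\E[hh^\top]O^\top\) has rank \(k\), so \(k=\rank(M_2)\) is identified.

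Second, for each candidate \(\tau>0\), the operators \(B_\tau\), \(A_\tau(\eta)\) and \(H_\tau(\eta)\) in \eqref{eq:id-Btau}--\eqref{eq:id-Htau} are explicit functions of \((\mu,M_2,T(\cdot),\tau)\) alone. Hence the set \(\{\tau>0:[H_\tau(\eta_1),H_\tau(\eta_2)]=0\ \text{for all}\ \eta_1,\eta_2\in\mu^\perp\}\) is a functional of the observable law. \Cref{thm:id-alpha-commutativity} shows this set is exactly \(\{\alpha_0\}\); this uses \(k\ge3\), so that non-collinear \(w_1,w_2\in\alpha^\perp\) exist and are realized as \(O^\top\eta_j\) with \(\eta_j\in\mu^\perp\). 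Therefore \(\alpha_0\) is identified.

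Third, with \(\alpha_0\) in hand, I form \(H_{\alpha_0}(\eta)\) from observables and choose any \(\eta\) for which \(O^\top\eta\) has distinct nonzero coordinates. One point needs care here: this condition refers to the unknown \(O\). The way around it is to select \(\eta\) through an observable property, namely that \(H_{\alpha_0}(\eta)\) has exactly \(k\) nonzero simple eigenvalues. By \eqref{eq:id-H-true-diagonal} that property is equivalent to \(O^\top\eta\) having distinct nonzero coordinates, and \Cref{thm:id-known-alpha-topic} guarantees such \(\eta\) exist (almost every \(\eta\)). The right eigenvectors for these eigenvalues span the lines through \(O_{:1},\ldots,O_{:k}\), and the normalization \(\ones^\top O_{:j}=1\) fixes each scale. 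A nonzero eigenvector never has \(\ones^\top v=0\) here, because it is a multiple of a probability vector. So \(O\) is identified up to column permutation.

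Finally, \(\mu=O\pi\) by \eqref{eq:first-moment}, and full column rank gives \(\pi=O^+\mu\). Then \(\alpha=\alpha_0\pi\), which is consistent with \eqref{eq:alpha0-def} since \(\ones^\top\pi=1\). Under a column permutation of \(O\), \(\pi\) and \(\alpha\) are permuted in the same way, so the pair \((O,\alpha)\) is identified up to a common relabeling.

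The main obstacle is the third step: making sure the choice of a "good" contraction direction and the resulting labeling are themselves determined by observables rather than by the unknown \(O\). I handle this with the observable simple-spectrum criterion described above. Every other step is a direct invocation of the earlier results.
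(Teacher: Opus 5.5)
Your proposal is correct and follows the same chain as the paper's proof. That chain is: \(k=\rank(M_2)\) from positive definiteness of \(\E(hh^\top)\); \(\alpha_0\) from the commutator uniqueness in \Cref{thm:id-alpha-commutativity}; \(O\) from \Cref{thm:id-known-alpha-topic} together with the simplex normalization; and then \(\pi=O^+\mu\) and \(\alpha=\alpha_0\pi\). Your extra step, certifying an admissible contraction direction from observables alone by requiring that the nonzero spectrum of \(H_{\alpha_0}(\eta)\) consist of \(k\) simple eigenvalues, is a valid tightening of a point the paper leaves implicit.
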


\begin{proof}
The rank statement follows from \(M_2=O\E(hh^\top)O^\top\), since \(\E(hh^\top)\) is positive definite and \(O\) has full column rank. The commutator condition identifies \(\alpha_0\). Given \(\alpha_0\), \Cref{thm:id-known-alpha-topic} identifies the columns of \(O\) up to permutation and scale, and the simplex normalizations fix the scale. Finally, \(\mu=O\alpha/\alpha_0\), so \(\pi=\alpha/\alpha_0=O^+\mu\), and \(\alpha=\alpha_0\pi\). Equivalently, the same correction can be viewed as the supervised bridge with pseudo-response \(Y_\eta=\langle x_3,\eta\rangle\), using \(x_1,x_2\) for the word moments. Conditional independence of distinct tokens gives
\[
    \E(Y_\eta)=\langle\eta,\mu\rangle,\qquad
    \E(x_1Y_\eta)=M_2\eta,\qquad
    \E(x_1x_2^\top Y_\eta)=T(\eta),
\]
so the corrected pseudo-response operator is exactly \(H_{\alpha_0}(\eta)\).
\end{proof}

\section{Estimation and inference}\label{sec:estimation}

This section turns the population identities of Section~\ref{sec:identification} into a feasible estimator. The construction has three steps. First, estimate the observable cross-token moments by within-document averages over distinct token positions. Second, estimate the concentration mass \(\alpha_0\) by minimizing a finite set of empirical commutators. Third, recover the topic matrix from one unsupervised spectral operator and recover the downstream coefficient from the supervised operator. The limit theory treats these steps as one smooth map of empirical moments, after rank selection and topic order have been determined.

\subsection{Empirical cross-token moments}\label{subsec:empirical-moments}

For document \(i\), define the within-document empirical mean
\begin{equation}\label{eq:est-mu-i}
    \hat\mu_i:=\frac1N\sum_{a=1}^N x_{ia}.
\end{equation}
The second cross-token moment is estimated by averaging over ordered distinct pairs,
\begin{equation}\label{eq:est-M2-i}
    \hat M_{2,i}
    :=\frac{1}{N(N-1)}\sum_{a\ne b}x_{ia}x_{ib}^\top .
\end{equation}
Because ordered pairs include both \((a,b)\) and \((b,a)\), \(\hat M_{2,i}\)
is symmetric. Equivalently, if \(c_i=\sum_{a=1}^N x_{ia}\) is the word-count vector, then
\begin{equation}\label{eq:est-M2-counts}
    \hat M_{2,i}
    =\frac{c_ic_i^\top-\diag(c_i)}{N(N-1)} .
\end{equation}
For a contraction direction \(v\in\R^d\), the third cross-token moment is estimated by
\begin{equation}\label{eq:est-T-i}
    \hat T_i(v)
    :=\frac{1}{N(N-1)(N-2)}
      \sum_{\substack{a,b,c=1\\ a,b,c\;\mathrm{distinct}}}^N
      x_{ia}x_{ib}^\top\langle x_{ic},v\rangle .
\end{equation}
The corresponding sample averages are
\begin{equation}\label{eq:est-raw-averages}
    \hat\mu:=\frac1n\sum_{i=1}^n\hat\mu_i,
    \qquad
    \hat M_2:=\frac1n\sum_{i=1}^n\hat M_{2,i},
    \qquad
    \hat T(v):=\frac1n\sum_{i=1}^n\hat T_i(v).
\end{equation}
The hats in this section refer to the sample size \(n\), which is suppressed when no ambiguity can arise. The estimators in \eqref{eq:est-raw-averages} are unbiased for \(\mu\), \(M_2\), and \(T(v)\), respectively. The use of distinct token positions removes the single-token multinomial contribution that would otherwise appear on the diagonal of the second moment.

The supervised moments are estimated in the same way. Define
\begin{equation}\label{eq:est-supervised-moments}
    \hat m_y:=\frac1n\sum_{i=1}^nY_i,
    \qquad
    \hat v_y:=\frac1n\sum_{i=1}^nY_i\hat\mu_i,
    \qquad
    \hat T^y:=\frac1n\sum_{i=1}^nY_i\hat M_{2,i}.
\end{equation}
Thus \(\hat T^y\) is the response-weighted analogue of the second cross-token moment. The full third-order tensor need not be stored in computation; only the finite number of contractions \(\hat T(v)\) used by the estimator are required.

\emph{Large-vocabulary preprocessing.} The formulae above are written in the
original vocabulary dimension. In large vocabularies, one can instead compress
tokens linearly before forming the corrected moments. Appendix~\ref{app:pca-compression}
shows that if a fixed-dimensional compression matrix does not lose a topic
direction, in the sense that \(\rank(R^\top O)=k\), then the corrected moment
identities and the coefficient \(\beta\) are unchanged. When observed controls
are included after compression, the control adjustment also requires the latent
topic-share scale. Appendix~\ref{app:pca-compression} gives a compressed
scale-recovery step based on the compressed first and corrected second moments;
this step is used before constructing the control cross-moment \(M_{qh}\). The
split-sample PCA version estimates an \(m\)-dimensional compression, with fixed
\(m\ge k\), from the uncentered cross-token second moment on an independent
split, holds it fixed, and runs the moment estimator on the compressed tokens in
the second split. This is intended as a compatibility result for fixed \(k\) and
fixed \(m\), not a rigorous growing-vocabulary or growing-rank asymptotic theory.

\subsection{Plug-in corrected operators}\label{subsec:plugin-operators}

For each candidate \(\tau>0\), define the empirical corrected moments
\begin{equation}\label{eq:est-Btau}
    \hat B_\tau
    :=\hat M_2-\frac{\tau}{\tau+1}\hat\mu\hat\mu^\top,
\end{equation}
and
\begin{equation}\label{eq:est-Atau}
\begin{aligned}
    \hat A_\tau(v)
    :={}&\hat T(v)
    -\frac{\tau}{\tau+2}
    \{\hat M_2v\hat\mu^\top+\hat\mu v^\top\hat M_2
      +\langle v,\hat\mu\rangle\hat M_2\}  \\
    &\quad
    +\frac{2\tau^2}{(\tau+1)(\tau+2)}
    \langle v,\hat\mu\rangle\hat\mu\hat\mu^\top .
\end{aligned}
\end{equation}
Similarly, the supervised corrected moment is
\begin{equation}\label{eq:est-Aytau}
\begin{aligned}
    \hat A_\tau^y
    :={}&\hat T^y
    -\frac{\tau}{\tau+2}
    \{\hat v_y\hat\mu^\top+\hat\mu\hat v_y^\top+\hat m_y\hat M_2\}  \\
    &\quad
    +\frac{2\tau^2}{(\tau+1)(\tau+2)}
    \hat m_y\hat\mu\hat\mu^\top .
\end{aligned}
\end{equation}
These are the sample analogues of \eqref{eq:id-Btau}, \eqref{eq:id-Atau}, and \eqref{eq:id-Aytau}.

The corrected second moment has rank \(k\) in population, so the inverse in the empirical operator is taken on an estimated rank-\(k\) subspace. Let
\[
    \hat B_\tau=\sum_{j=1}^d\hat\lambda_j(\tau)\hat u_j(\tau)\hat u_j(\tau)^\top,
    \qquad
    \hat\lambda_1(\tau)\ge\cdots\ge\hat\lambda_d(\tau),
\]
be a spectral decomposition. For a candidate rank \(m\), set
\begin{equation}\label{eq:est-truncated-inverse}
    \hat B_{\tau,m}^+
    :=\sum_{j=1}^m\hat\lambda_j(\tau)^{-1}
      \hat u_j(\tau)\hat u_j(\tau)^\top,
\end{equation}
whenever \(\hat\lambda_m(\tau)>0\). If \(m=0\), set \(\hat B_{\tau,0}^+=0\). Values of \(\tau\) for which \(\hat\lambda_m(\tau)\le0\) are excluded from the minimization below. The empirical unsupervised and supervised operators are
\begin{equation}\label{eq:est-Htau}
    \hat H_{\tau,m}(v):=\hat A_\tau(v)\hat B_{\tau,m}^+,
    \qquad
    \hat H_{\tau,m}^y:=\hat A_\tau^y\hat B_{\tau,m}^+ .
\end{equation}

If \(k\) is not specified in advance, estimate it from the rank of \(M_2\). Let \(\tilde\lambda_1\ge\cdots\ge\tilde\lambda_d\) be the eigenvalues of \(\hat M_2\), and let \(a_n\) satisfy
\begin{equation}\label{eq:est-threshold-rate}
    a_n\to0,
    \qquad
    \sqrt n\,a_n\to\infty .
\end{equation}
Define
\begin{equation}\label{eq:est-khat}
    \hat k:=\sum_{j=1}^d\mathbf 1\{\tilde\lambda_j>a_n\}.
\end{equation}
When the number of topics is chosen by design or by an external model-selection step, the same formulas apply with \(\hat k\) replaced by that chosen value.

In finite samples, implementing the rank rule in \eqref{eq:est-khat} still requires a numerical choice of threshold, and nearby candidate ranks may have similar empirical second-moment spectra. When a split-sample implementation is used, as in the application below, rank choice can be made on the first split before the response-weighted moment estimation step. We use the commutator identity as a practical response-free rule of thumb: for each candidate \(k\), compute the commutator-based estimate \(\hat\alpha_0(k)\) across several admissible probe draws on the first split, and prefer ranks for which this estimate is interior and stable. This is a finite-sample rule of thumb for rank selection, not an additional rank-consistency theorem. The motivation is that, under the maintained finite-LDA model and the correct rank, admissible finite-probe commutator criteria have the same population minimizer \(\alpha_0\). Thus a working rank for which \(\hat\alpha_0(k)\) moves substantially across reasonable probe draws is less attractive than one for which the concentration estimate is interior and stable. Subsequent estimation and standard errors are computed on the independent second split, conditional on the selected rank, compression, and realized probes.

This split-sample design also allows the commutator calculation to be used a second time as a diagnostic step. Since the working rank is chosen on the first split, the commutator profile can be inspected again on the second split as a holdout diagnostic for the selected finite-dimensional specification. If the same sample were used both to choose the rank and to assess commutator stability, this second calculation would mainly repeat the selection criterion. With an independent second split, boundary behavior, flatness, multimodality, or probe sensitivity provides a separate warning about the stability of the selected spectral diagonalization. We discuss these post-selection diagnostics in Section~\ref{subsec:estimate-alpha0}.

\subsection{Estimating the concentration mass}\label{subsec:estimate-alpha0}

Let \(P(u)=I_d-uu^\top/\|u\|^2\) for \(u\ne0\). Choose raw probe vectors \(s_1,\ldots,s_L\in\R^d\), with \(L\ge2\), and define their empirical mean-orthogonal projections
\begin{equation}\label{eq:est-probes}
    \hat v_\ell:=P(\hat\mu)s_\ell,
    \qquad
    \ell=1,\ldots,L .
\end{equation}
Let \(\mathcal I\subset\{(\ell,q):1\le\ell<q\le L\}\) be a non-empty set of probe pairs. For fixed rank \(m\), define the empirical commutator criterion
\begin{equation}\label{eq:est-Q}
    \hat Q_{\mathcal I}^{(m)}(\tau)
    :=\sum_{(\ell,q)\in\mathcal I}
      \left\|
      [\hat H_{\tau,m}(\hat v_\ell),\hat H_{\tau,m}(\hat v_q)]
      \right\|_F^2,
\end{equation}
where \([R,S]=RS-SR\). If \(\hat\lambda_m(\tau)\le0\), set \(\hat Q_{\mathcal I}^{(m)}(\tau)=+\infty\). The estimator of the total concentration is
\begin{equation}\label{eq:est-alpha0}
    \hat\alpha_0
    :=\min\arg\min_{\tau\in\Theta}
      \hat Q_{\mathcal I}^{(\hat k)}(\tau),
\end{equation}
where \(\Theta=[\underline\tau,\overline\tau]\subset(0,\infty)\) is compact and contains the true \(\alpha_0\) in its interior. For the true rank \(m=k\), the finite-probe identification result in Section~\ref{sec:identification} implies that the population version of \eqref{eq:est-Q} is uniquely minimized at \(\alpha_0\), provided at least one tested pair of projected probes is non-collinear in the latent coordinates. Thus the sample criterion has the correct local target on the event \(\hat k=k\). No uniqueness claim is made here for a misspecified truncation rank.

The probability-one genericity statements are identification results, not
finite-sample guarantees. In applied work one should inspect the profile
of \(\hat Q_{\mathcal I}^{(m)}(\tau)\), whether \(\hat\alpha_0\) lies near
the boundary of \(\Theta\), local curvature around the minimizer, eigenvalue
gaps for the ordering operator, and sensitivity to the realized probe and
ordering directions. A flat, multimodal or boundary-attaining commutator
criterion, or estimates that move materially across admissible probes, should
be treated as a substantive warning. It may indicate weak finite-sample
identification, an unstable rank choice, ill-conditioned topic separation,
an inadmissible compression, or misspecification of the maintained LDA moment
structure.

\subsection{Topic and downstream coefficient estimators}\label{subsec:topic-beta-estimators}

The concentration estimator is then plugged into a single spectral operator to recover the topic columns. Choose an ordering direction \(r\in\R^d\), independent of the data if random, and set
\begin{equation}\label{eq:est-ordering-direction}
    \hat\eta:=P(\hat\mu)r .
\end{equation}
Define
\begin{equation}\label{eq:est-topic-operator}
    \hat H^{o}:=\hat H_{\hat\alpha_0,\hat k}(\hat\eta).
\end{equation}
On the high-probability event that \(\hat H^{o}\) has \(\hat k\) eigenvalues
separated from the zero cluster, and that these selected eigenvalues are real,
simple and mutually separated, let
\(\hat o_1(r),\ldots,\hat o_{\hat k}(r)\) be the associated right eigenvectors.
The selected nonzero eigenpairs are labeled by decreasing real eigenvalue within this selected cluster, and the columns are normalized by
\begin{equation}\label{eq:est-topic-normalization}
    \ones^\top\hat o_j(r)=1 .
\end{equation}
Off this event, the estimator may be defined by any fixed deterministic convention; this off-event convention is asymptotically irrelevant.
Set
\begin{equation}\label{eq:est-Ohat}
    \hat O(r):=[\hat o_1(r),\ldots,\hat o_{\hat k}(r)].
\end{equation}
The role of \(r\) is only to select a stable ordering of the topic columns. At the population value, the eigenvectors are the columns of \(O\), ordered by the entries of \(O^\top P(\mu)r\).

Finally, form the supervised operator at the estimated concentration,
\begin{equation}\label{eq:est-Hy-final}
    \hat H^y:=\hat H^y_{\hat\alpha_0,\hat k}
    =\hat A_{\hat\alpha_0}^y\hat B_{\hat\alpha_0,\hat k}^+ .
\end{equation}
The downstream coefficient estimator is
\begin{equation}\label{eq:est-beta}
    \hat\beta(r)
    :=\frac{\hat\alpha_0+2}{2}
      \diag\{\hat O(r)^+\hat H^y\hat O(r)\},
\end{equation}
where \(\diag(\cdot)\) extracts the diagonal as a vector. This is the direct sample analogue of \eqref{eq:id-beta-projection}. It does not estimate document-level topic shares and does not run a second-stage regression on generated regressors.

\subsection{Asymptotic linearity and standard errors}\label{subsec:asymptotics-inference}

The asymptotic argument is fixed-dimensional: \(d\), \(k\), and \(N\) are fixed while \(n\to\infty\). Let \(\hat{\mathcal T}_i\) denote the ordered-distinct triple tensor
\[
    \hat{\mathcal T}_i
    :=\frac{1}{N(N-1)(N-2)}
      \sum_{\substack{a,b,c=1\\ a,b,c\;\mathrm{distinct}}}^N
      x_{ia}\otimes x_{ib}\otimes x_{ic},
\]
so that \(\hat T_i(v)=\hat{\mathcal T}_i(I,I,v)\). Collect the document-level moments needed by the estimator in
\begin{equation}\label{eq:est-Zi}
    Z_i:=\{\hat\mu_i,\hat M_{2,i},\hat{\mathcal T}_i,
            Y_i,Y_i\hat\mu_i,Y_i\hat M_{2,i}\}.
\end{equation}
Let \(Z_0=\E Z_i\) and \(\bar Z=n^{-1}\sum_i Z_i\). Since the word components are bounded, the condition \(\E Y^2<\infty\) is enough for
\begin{equation}\label{eq:est-Z-clt}
    \sqrt n(\bar Z-Z_0)\rightsquigarrow N(0,\Omega),
    \qquad
    \Omega:=\Var(Z_i).
\end{equation}
On the event that the rank estimate is correct and the selected nonzero-cluster ordering eigenvalues are real, simple, mutually separated, and separated from the zero cluster, all estimators above are functions of \(\bar Z\). The commutator minimization is differentiable at the population value because the stacked commutator has a nonzero derivative with respect to \(\tau\) at \(\alpha_0\). The eigenvector and pseudoinverse maps are differentiable on the corresponding fixed-rank, separated-cluster stratum.

The following theorem records the resulting first-order expansion. If the probes or ordering direction are random, the statement is conditional on their realized values.

This inferential expansion is one of the key distinctions from the supervised spectral recovery literature:
the goal is not only consistent recovery of supervised-LDA parameters, but a first-order sampling
expansion and feasible variance estimator for the downstream coefficient.

\begin{theorem}[Feasible asymptotic inference]\label{thm:est-main-asymptotics}
Assume the model of Section~\ref{sec:model} with fixed \(d\), fixed \(k\ge3\), fixed \(N\ge3\), \(O\) full column rank, \(\alpha\in(0,\infty)^k\), and \(\E Y^2<\infty\). Let \(\Theta\) be compact with \(\alpha_0\in\mathrm{int}(\Theta)\), and let the rank threshold satisfy \eqref{eq:est-threshold-rate}. Suppose the finite-probe criterion has at least one pair \((\ell,q)\in\mathcal I\) for which \(O^\top P(\mu)s_\ell\) and \(O^\top P(\mu)s_q\) are non-collinear. Suppose also that the ordering direction \(r\) satisfies that the coordinates of \(O^\top P(\mu)r\) are distinct and nonzero.

Then
\begin{equation}\label{eq:est-consistency}
    \Pr(\hat k=k)\to1,
    \qquad
    \hat\alpha_0\xrightarrow{p}\alpha_0,
    \qquad
    \hat O(r)\xrightarrow{p}O_r,
    \qquad
    \hat\beta(r)\xrightarrow{p}\beta_r,
\end{equation}
where \(O_r\) is the topic matrix in the ordering induced by \(r\), and \(\beta_r\) is \(\beta\) in the same ordering. Moreover, there exists a mean-zero influence function \(\phi_i(r)\) with finite second moment such that
\begin{equation}\label{eq:est-joint-al}
    \sqrt n
    \begin{pmatrix}
        \hat\alpha_0-\alpha_0\\
        \vecop\{\hat O(r)-O_r\}\\
        \hat\beta(r)-\beta_r
    \end{pmatrix}
    =\frac1{\sqrt n}\sum_{i=1}^n\phi_i(r)+o_p(1).
\end{equation}
Consequently,
\begin{equation}\label{eq:est-joint-clt}
    \sqrt n
    \begin{pmatrix}
        \hat\alpha_0-\alpha_0\\
        \vecop\{\hat O(r)-O_r\}\\
        \hat\beta(r)-\beta_r
    \end{pmatrix}
    \rightsquigarrow N\{0,V(r)\},
    \qquad
    V(r):=\Var\{\phi_i(r)\}.
\end{equation}
If the probes and \(r\) are drawn independently from distributions that are absolutely continuous with respect to Lebesgue measure, the finite-probe non-collinearity condition and the distinct nonzero ordering coordinates hold with probability one over those draws. The positive spectral separation of the corrected second-moment cluster follows from the model and compactness of \(\Theta\); the ordering-operator separation from the zero cluster follows from the nonzero ordering coordinates.
\end{theorem}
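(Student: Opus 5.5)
The approach is the standard delta-method route for smooth functionals of a sample mean, applied to the map \(\bar Z\mapsto(\hat\alpha_0,\hat O(r),\hat\beta(r))\). Throughout, the probes \(s_\ell\) and the ordering direction \(r\) are held at their realized values.

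\emph{Step 1: rank consistency.} By \eqref{eq:est-Z-clt}, \(\hat M_2-M_2=O_p(n^{-1/2})\), so Weyl's inequality gives \(|\tilde\lambda_j-\lambda_j(M_2)|=O_p(n^{-1/2})\). The population \(M_2=O\E(hh^\top)O^\top\) has exactly \(k\) positive eigenvalues, bounded below by some \(\lambda_k>0\), and \(d-k\) zeros. The threshold rate \eqref{eq:est-threshold-rate} then gives two facts. First, \(\Pr(\tilde\lambda_k>a_n)\to1\), since \(a_n\to0\). Second, \(\Pr(\tilde\lambda_{k+1}>a_n)\to0\), since \(\sqrt n a_n\to\infty\). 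Hence \(\Pr(\hat k=k)\to1\), and from here on I work on this event.

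\emph{Step 2: consistency of \(\hat\alpha_0\).} By Lemma~\ref{lem:id-corrected-factorization}, \(B_\tau=OS_B(\tau)O^\top\) with \(S_B(\tau)\) positive definite. Continuity in \(\tau\) and compactness of \(\Theta\) give a uniform gap \(\inf_{\tau\in\Theta}\lambda_k(B_\tau)>0\), with \(\lambda_{k+1}(B_\tau)=0\). Since \(\hat B_\tau\) is affine in \(\bar Z\) with coefficients continuous in \(\tau\), it converges uniformly over \(\Theta\). Two consequences follow with probability tending to one:
\begin{itemize}
\item \(\hat\lambda_k(\tau)>0\) for all \(\tau\in\Theta\);
\item \(\hat B^+_{\tau,k}\to B_\tau^+\) uniformly in \(\tau\), because the truncated inverse is a smooth function of the rank-\(k\) spectral projector on the separated cluster.
\end{itemize}
Also \(\hat v_\ell=P(\hat\mu)s_\ell\to P(\mu)s_\ell\). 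Therefore \(\hat Q^{(k)}_{\mathcal I}(\tau)\to Q_{\mathcal I}(\tau)\) uniformly on \(\Theta\). The population criterion is continuous and, by \eqref{eq:id-Q-identifies-alpha} together with the assumed non-collinear pair, uniquely minimized at \(\alpha_0\). The standard argmin consistency theorem then yields \(\hat\alpha_0\xrightarrow{p}\alpha_0\). The \(\min\arg\min\) selection is harmless here.

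\emph{Step 3: asymptotic linearity of \(\hat\alpha_0\).} This is the main obstacle. I would write \(\hat Q(\tau)=\|\hat G(\tau)\|^2\), where \(\hat G(\tau)=g(\tau,\bar Z)\) stacks the commutators and \(g\) is \(C^2\) near \((\alpha_0,Z_0)\). The population value satisfies \(g(\alpha_0,Z_0)=0\).

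The key step is to show that \(\dot G:=\partial_\tau g(\alpha_0,Z_0)\ne0\). Theorem~\ref{thm:id-alpha-commutativity}(ii) gives an exact formula for the latent commutator. Since \(c(\tau,\alpha_0)\) has a simple zero at \(\tau=\alpha_0\) with
\[
\partial_\tau c=-\frac{4(\alpha_0^2+2\alpha_0)}{\alpha_0(\alpha_0+1)(\alpha_0+2)^4}\ne0,
\]
and the bracket \((\alpha\circ w_1)w_2^\top-(\alpha\circ w_2)w_1^\top\) is nonzero for the non-collinear pair, we get \(\dot G\ne0\). The observed-space commutator equals \(O[\cdot]O^+\) near \(\tau=\alpha_0\), which preserves this nonvanishing.

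With \(\dot G\ne0\) in hand, \(\hat\alpha_0\) is interior with probability tending to one. The first-order condition is \(\partial_\tau g^\top g=0\). A mean-value expansion around \((\alpha_0,Z_0)\) then gives
\[
\sqrt n(\hat\alpha_0-\alpha_0)=-\frac{\dot G^\top\partial_Z g}{\|\dot G\|^2}\sqrt n(\bar Z-Z_0)+o_p(1).
\]
The second-order terms vanish because \(g(\alpha_0,Z_0)=0\) and \(\hat\alpha_0-\alpha_0=O_p(n^{-1/2})\). The \(O_p(n^{-1/2})\) rate itself follows from the local quadratic lower bound \(Q(\tau)\ge c\|\dot G\|^2(\tau-\alpha_0)^2\).

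\emph{Step 4: \(\hat O(r)\), \(\hat\beta(r)\), and the joint expansion.} At the population values, \(H_{\alpha_0}(P(\mu)r)\) has the eigenvalues given in Theorem~\ref{thm:id-known-alpha-topic}. These are real, simple, nonzero and separated from the \(d-k\) zero eigenvalues, by the assumption on the coordinates of \(O^\top P(\mu)r\).

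Analytic perturbation theory (Kato) then shows that, near the population value, the selected right eigenvectors normalized by \(\ones^\top o=1\) and ordered by eigenvalue are \(C^1\) functions of the operator. The perturbed eigenvalues remain real: they are simple, and the characteristic polynomial has real coefficients, so complex roots would come in pairs. The operator itself is a \(C^1\) function of \((\tau,\bar Z)\) through the truncated pseudoinverse. Hence \(\hat O(r)=F_O(\hat\alpha_0,\bar Z)\) and \(\hat\beta(r)=F_\beta(\hat\alpha_0,\bar Z)\). The map \(F_\beta\) is \(C^1\) as well, since the Moore--Penrose inverse is smooth at full column rank. The population values are \(O_r\) and \(\beta_r\) by \eqref{eq:id-beta-projection}.

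Composing with Step 3 via the chain rule gives \eqref{eq:est-joint-al} with \(\phi_i(r)=J(Z_i-Z_0)\) for a fixed Jacobian \(J\). The event \(\{\hat k\ne k\}\) and the off-event conventions contribute \(o_p(1)\). Then \eqref{eq:est-joint-clt} follows from \eqref{eq:est-Z-clt}.

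\emph{Step 5: probability-one statements.} For absolutely continuous draws, each failure set is a null set:
\begin{itemize}
\item Non-collinearity of \(O^\top P(\mu)s_1\) and \(O^\top P(\mu)s_2\) fails only on a proper algebraic subset. The map \(s\mapsto O^\top P(\mu)s\) has rank \(k-1\ge2\) onto \(\alpha^\perp\).
\item Distinct nonzero coordinates of \(O^\top P(\mu)r\) fail only on a finite union of hyperplanes. None of these hyperplanes is the whole space, since \(e_j\) and \(e_j-e_l\) are not proportional to \(\alpha\) when \(k\ge3\).
\end{itemize}
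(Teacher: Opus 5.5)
Your proposal is correct and follows essentially the same route as the paper's proof: Weyl-based rank consistency, uniform convergence with a well-separated minimizer for consistency of \(\hat\alpha_0\), a one-dimensional minimum-distance expansion driven by \(\partial_\tau c(\tau,\alpha_0)|_{\tau=\alpha_0}\neq0\) (your expression simplifies to the paper's \(-4/\{(\alpha_0+1)(\alpha_0+2)^3\}\)), simple-eigenvector perturbation, and the delta method. The only difference is that you get the \(\hat\beta(r)\) expansion from generic \(C^1\) composition, whereas the paper invokes its diagonal-commutator cancellation lemma, which serves only to simplify the influence function and is not needed for its existence; also, \(\hat B_\tau\) is quadratic rather than affine in \(\bar Z\), which does not affect the argument.
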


\begin{corollary}[Conditional inference after split compression and controls]\label{cor:compressed-control-inference}
Under the conditions of Theorem~\ref{thm:est-main-asymptotics} and
Assumption~\ref{ass:split-compressed-controls}, consider the split-sample
compressed estimator with observed controls described in
Appendices~\ref{app:observed-controls}--\ref{app:pca-compression}. Let
\(\mathcal F_1\) denote the sigma-field generated by the first-split compression,
selected working rank, finite probes, and ordering direction, and let
\(n_2=|I_2|\to\infty\). Then, conditionally on \(\mathcal F_1\),
\begin{equation}\label{eq:est-compressed-control-corollary}
    \sqrt{n_2}\{\hat\beta^c_{\hat R_1}(r)-\beta_{\hat R_1,r}\}
    =\frac{1}{\sqrt{n_2}}\sum_{i\in I_2}\phi^c_{\hat R_1,i}(r)
      +o_p(1\mid \mathcal F_1),
\end{equation}
where
\[
    \phi^c_{\hat R_1,i}(r)
    =D^c_{\beta,r}\{Z^c_{\hat R_1,i}-Z^c_{\hat R_1,0}\},
    \qquad
    \E\{\phi^c_{\hat R_1,i}(r)\mid\mathcal F_1\}=0 .
\]
The empirical sandwich estimator obtained from the enlarged compressed control
moment vector \(Z^c_{\hat R_1,i}\) is consistent for the conditional covariance
of the right side of \eqref{eq:est-compressed-control-corollary}. The first-split
compression and first-split rank choice add no separate first-order term after
conditioning. If the conditional covariance converges in probability to a
nonrandom limit, the same normal approximation holds unconditionally.
\end{corollary}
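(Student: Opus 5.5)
The proof follows the template of Theorem~\ref{thm:est-main-asymptotics}, run on the second split conditionally on \(\mathcal F_1\). First, conditioning on \(\mathcal F_1\) fixes everything chosen on the first split: the compression \(\hat R_1\), the working rank, the probes and the ordering direction. Because the two splits are independent, the second-split documents \(i\in I_2\) remain i.i.d. conditionally on \(\mathcal F_1\), and their law is the unconditional law. The compressed tokens \(\hat R_1^\top x_{ia}\) therefore follow the finite-LDA model with topic matrix \(\hat R_1^\top O\). On the event \(\rank(\hat R_1^\top O)=k\), which Assumption~\ref{ass:split-compressed-controls} supplies, the compressed corrected-moment identities of Appendix~\ref{app:pca-compression} hold. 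So do Lemma~\ref{lem:id-corrected-factorization}, Theorems~\ref{thm:id-beta-supervised} and \ref{thm:id-alpha-commutativity}, and the control adjustment of Appendix~\ref{app:observed-controls}. The population target is \(\beta_{\hat R_1,r}\): the \(\beta\) in the ordering induced by \(r\) and the compressed basis.

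Second, I would write the enlarged compressed control moment vector \(Z^c_{\hat R_1,i}\) as a fixed-dimensional vector. It contains the compressed analogues of \(Z_i\), the control moments, and the cross-moments needed for \(M_{qh}\) and the scale-recovery step. Its word components are bounded, and its control and response components have finite second moments by the assumption. A conditional Lindeberg CLT then gives \(\sqrt{n_2}(\bar Z^c-Z^c_{\hat R_1,0})\rightsquigarrow N(0,\Omega^c_{\hat R_1})\) given \(\mathcal F_1\). Next, I would express \(\hat\beta^c_{\hat R_1}(r)\) as a composite map \(g_{\hat R_1}(\bar Z^c)\), built from four pieces in order: the rank-\(k\) truncated pseudoinverse; the commutator minimizer \(\hat\alpha_0\), handled by the implicit function theorem with the nonzero \(\tau\)-derivative from \eqref{eq:id-c-commutator}; the separated eigenvector map; and the scale-recovery and control-partialling steps. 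Each piece is differentiable at \(Z^c_{\hat R_1,0}\) on the fixed-rank, separated-cluster stratum. A conditional delta method then yields \eqref{eq:est-compressed-control-corollary} with \(D^c_{\beta,r}=\nabla g_{\hat R_1}(Z^c_{\hat R_1,0})\). By construction \(\E\{\phi^c\mid\mathcal F_1\}=0\).

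The main obstacle is uniformity. The map \(g_{\hat R_1}\), its derivative, and the separation margins all depend on the random \(\hat R_1\). For \(o_p(1\mid\mathcal F_1)\) to be meaningful, the remainder bounds must hold with constants controlled on high-probability \(\mathcal F_1\)-events. My first attempt would be to assume, or derive from the assumption, that \(\hat R_1\) lies in a compact set on which \(\sigma_k(\hat R_1^\top O)\), the eigengaps and the commutator curvature are bounded away from zero. On that set the Taylor remainders are uniformly \(o(\|\bar Z^c-Z^c_0\|)\), and \(\|\bar Z^c-Z^c_0\|=O_p(n_2^{-1/2})\) conditionally. The consistency steps carry over in the same way. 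The argument of Theorem~\ref{thm:est-main-asymptotics} uses uniform convergence of \(\hat Q\) over \(\Theta\), and this extends uniformly over that compact set.

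Finally, consistency of the sandwich estimator \(\hat D^c\,\widehat{\Var}(Z^c_i)\,\hat D^{c\top}\) follows in three steps: a conditional law of large numbers for the empirical second moments, continuity of \(\nabla g\), and consistency of the plug-in estimates. The first-split choices contribute no first-order term because they are held fixed by conditioning. If \(V^c_{\hat R_1}\to V^c\) in probability with \(V^c\) nonrandom, bounded convergence applied to conditional characteristic functions gives the unconditional normal limit.
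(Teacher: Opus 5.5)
Your proposal is correct and follows essentially the same route as the paper: condition on \(\mathcal F_1\), use the compressed corrected-moment identities (Proposition~\ref{prop:app-linear-compression}) and the compressed scale-recovery and control map (Proposition~\ref{prop:app-compressed-controls}), then apply the fixed-dimensional delta-method expansion to the independent second split (Proposition~\ref{prop:app-split-compression}), with sandwich consistency from the conditional iid law and continuity of the derivative. Your explicit treatment of uniformity over the random compression goes beyond the paper, which simply treats each realization of \(\hat R_1\) as fixed; giving the ordering gap, probe non-collinearity and nonsingularity of \(\Gamma_q\) quantitative margins on a compact set of admissible compressions, as you suggest, is the natural way to make \(o_p(1\mid\mathcal F_1)\) meaningful as \(\hat R_1\) changes with \(n\).
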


The proof of the theorem and corollary is given in the Supplement. Its main steps are standard but useful to separate. The document-level moments satisfy the central limit theorem \eqref{eq:est-Z-clt}. The minimizer \(\hat\alpha_0\) has the usual one-dimensional minimum-distance expansion obtained by differentiating the stacked commutator equations at \(\alpha_0\). The topic estimator then follows from first-order perturbation theory for simple right eigenvectors. Finally, \eqref{eq:est-beta} is a smooth map of \((\hat\alpha_0,\hat B_{\hat\alpha_0,\hat k}^+,\hat A_{\hat\alpha_0}^y,\hat O)\); the first-order perturbation of \(\hat O\) contributes only off-diagonal terms inside \(\hat O^+\hat H^y\hat O\), so the diagonal map recovers the coefficient influence function without requiring distinct entries of \(\beta\).

For implementation, let \(G_{\beta,r}\) denote the population map from \(Z_0\) to \(\beta_r\) induced by \eqref{eq:est-alpha0}--\eqref{eq:est-beta}, with the realized probes and ordering direction held fixed. Let
\begin{equation}\label{eq:est-D-beta}
    D_{\beta,r}:=\left.\frac{\partial G_{\beta,r}(z)}{\partial z^\top}\right|_{z=Z_0} .
\end{equation}
Then the lower block of \(V(r)\) is
\begin{equation}\label{eq:est-Vbeta}
    V_\beta(r)=D_{\beta,r}\Omega D_{\beta,r}^\top .
\end{equation}
All reported standard errors use the influence-function implementation of this derivative.  Write the finite-probe commutator map as \(g(\tau,z)\), and let \(\alpha(z)\) be the local minimum-distance solution.  At an interior solution,
\begin{equation}\label{eq:est-D-alpha-impl}
    D_\alpha
    :=\left.\frac{\partial \alpha(z)}{\partial z^\top}\right|_{z=Z_0}
    =-\{G_\tau^\top G_\tau\}^{-1}G_\tau^\top G_z,
    \qquad
    G_\tau=\left.\frac{\partial g(\tau,z)}{\partial \tau}\right|_{(\alpha_0,Z_0)},\quad
    G_z=\left.\frac{\partial g(\tau,z)}{\partial z^\top}\right|_{(\alpha_0,Z_0)} .
\end{equation}
For fixed \(\alpha_0\), the derivative of the coefficient map is obtained by differentiating the corrected moment operators, the truncated inverse of the corrected second moment, the normalized right eigenvectors used to form \(\hat O(r)\), and the final diagonal map in \eqref{eq:est-beta}.  The full derivative has the chain-rule form
\begin{equation}\label{eq:est-D-beta-chain}
    D_{\beta,r}=D_{\beta,r}^{\rm fix}+D_{\beta,\alpha}D_\alpha,
\end{equation}
where \(D_{\beta,r}^{\rm fix}\) is the derivative with \(\alpha_0\) held fixed and \(D_{\beta,\alpha}\) is the derivative of the coefficient map with respect to \(\alpha_0\).  The empirical influence values are therefore
\begin{equation}\label{eq:est-emp-if-beta}
    \hat\phi_{\beta,i}(r)=\hat D_{\beta,r}\{Z_i-\bar Z\} .
\end{equation}
Equivalently, with
\begin{equation}\label{eq:est-sandwich}
    \hat\Omega:=\frac1n\sum_{i=1}^n\{Z_i-\bar Z\}\{Z_i-\bar Z\}^\top,
    \qquad
    \hat V_\beta(r)=\hat D_{\beta,r}\hat\Omega\hat D_{\beta,r}^\top
    =\frac1n\sum_{i=1}^n\hat\phi_{\beta,i}(r)\hat\phi_{\beta,i}(r)^\top,
\end{equation}
where \(\hat D_{\beta,r}\) is the sample analogue of \eqref{eq:est-D-beta-chain}.  Equations~\eqref{eq:est-emp-if-beta}--\eqref{eq:est-sandwich} describe the no-control moment vector. If observed controls are included, the same formulas are applied after replacing \(Z_i\), \(\bar Z\), \(D_{\beta,r}\), and \(\Omega\) by the enlarged control moment vector \(Z_i^c\), its mean, the derivative of the control-adjusted map, and \(\Var(Z_i^c)\). If the data are first compressed, \(Z_i^c\) is formed from the compressed tokens and includes the compressed scale-recovery moments used by the control adjustment. A pointwise \((1-a)\) confidence interval for the \(j\)-th ordered coefficient is
\begin{equation}\label{eq:est-ci}
    \hat\beta_j(r)
    \pm
    z_{1-a/2}\left\{\frac{\hat V_{\beta,jj}(r)}{n}\right\}^{1/2}.
\end{equation}
The same covariance estimate is used for Wald tests and linear contrasts of the ordered coefficient vector. For pointwise confidence intervals we assume \(V_{\beta,jj}(r)>0\). For Wald tests of \(C\beta=c\), we assume \(CV_\beta(r)C^\top\) is nonsingular.

\section{Monte Carlo experiments}\label{sec:simulation}

This section evaluates the finite-sample behavior of the proposed estimator in the
setting for which the theory is designed: downstream regression with a fixed number
of words per document.  The experiments are organized around two questions.  First,
does the direct moment-based estimator deliver calibrated inference for the
coefficient vector \(\beta\)?  Second, what goes wrong if the analyst instead
constructs document-level topic estimates and treats them as observed regressors?
The number of topics is supplied to all methods, so the exercise isolates the
inferential problem rather than the separate rank-selection step.

To make the fixed-document plug-in issue explicit, consider any deterministic reconstruction rule
\(\tilde h_i=g_N(c_i;O)\) based on the document counts and the known population topic matrix.
If a no-intercept population least-squares regression of \(Y_i\) on \(\tilde h_i\) is nonsingular and
\(\E(\tilde h_i\varepsilon_i)=0\), its probability limit is
\begin{equation}\label{eq:plugin-plim}
    \beta_g
    =\{\E(\tilde h_i\tilde h_i^\top)\}^{-1}\E(\tilde h_i h_i^\top)\beta .
\end{equation}
Thus the plug-in coefficient equals \(\beta\) only under additional restrictions, such as
\(\E(\tilde h_i h_i^\top)=\E(\tilde h_i\tilde h_i^\top)\). The correctly specified posterior mean
\(\tilde h_i=\E(h_i\mid c_i)\) is a special case satisfying this identity by iterated expectations, but a generic finite-\(N\) reconstruction rule, including the constrained least-squares rule below, need not satisfy it. The simulations therefore use the known-\(O\) plug-in regression as an analytic and numerical benchmark for the persistent generated-regressor problem at fixed document length.

\subsection{Design and methods}

The vocabulary size is \(d=100\), the number of topics is \(k=10\), and each
document contains \(N=100\) words.  The topic matrix \(O\) is generated once and
then held fixed across Monte Carlo replications.  Its columns are independently
generated from a Dirichlet distribution with concentration parameter \(0.3\),
subject to screening restrictions that remove nearly duplicate topics and nearly
anchor-word designs.  For the realized matrix, the minimum pairwise Hellinger
distance between topic columns is \(0.588\), the maximum row dominance is
\(0.879\), the minimum Hellinger distance from any row-normalized word profile to
a simplex vertex is \(0.250\), and there are no exact zeros.

For document \(i\),
\[
    h_i\sim\operatorname{Dirichlet}(\alpha),\qquad
    c_i\mid h_i\sim\operatorname{Multinomial}(N,Oh_i),
\]
and the response is
\[
    Y_i=\beta^\top h_i+\varepsilon_i,
    \qquad
    \varepsilon_i\sim N(0,\sigma^2),
    \qquad
    \beta=(1.0,0.9,\ldots,0.1)^\top .
\]
The noise variance is calibrated to give population \(R^2=0.35\).  We consider a
symmetric Dirichlet design, \(\alpha_j=0.5\) for \(j=1,\ldots,10\), so that
\(\alpha_0=5\), and an asymmetric design, \(\alpha_j=2j/55\), so that
\(\alpha_0=2\).  The main downstream comparison uses \(n\in\{3000,5000,15000\}\); each design cell uses \(1000\) Monte Carlo replications.  The concentration-parameter table below also reports the original \(n=30000\) design point.

We compare four procedures.  The first is an infeasible no-intercept latent-oracle regression
that uses the true \(h_i\)'s.  The second is the proposed direct spectral estimator,
which estimates \(\alpha_0\), recovers the topic matrix, and estimates \(\beta\)
from the supervised corrected moments.  Its confidence intervals use the
influence-function implementation of the first-order variance formula developed
above: the derivative of \(\hat\alpha_0\) is obtained from the implicit
minimum-distance expansion of the commutator criterion, and the derivative of
\(\hat\beta\) is the fixed-\(\alpha_0\) spectral derivative plus the chain-rule
contribution from estimating \(\alpha_0\).  The third
procedure is a known-\(O\) plug-in regression: it uses the true topic matrix \(O\), but
estimates document-level topic mixtures by
\[
    \tilde h_i
    =\arg\min_{h\in\Delta^{k-1}}
    \left\|\frac{c_i}{N}-Oh\right\|_2^2,
\]
and then applies no-intercept ordinary least squares treating \(\tilde h_i\) as
observed.  The no-intercept specification is natural because
\(\mathbf 1^\top h_i=1\).  The fourth procedure is the analogous spectral plug-in regression, replacing \(O\) by
the unsupervised spectral estimate \(\hat O\).  Thus the final two procedures
mimic the modular workflow used in many empirical applications.  Topic labels are
aligned to the true labels by Hungarian matching only when computing simulation
errors and coverage.

For each method, Table~\ref{tab:sim-beta} reports the root mean squared
error
\[
    \operatorname{RMSE}_\beta
    =
    \left\{
    \frac{1}{1000}\sum_{r=1}^{1000}
    \frac{1}{k}\|\hat\beta^{(r)}-\beta\|_2^2
    \right\}^{1/2},
\]
the average coordinate-wise coverage of nominal \(95\%\) confidence intervals, the
minimum coordinate-wise coverage, average interval length, and average Hellinger
distance between matched topic columns.

\begin{table}[!htbp]
\centering
\small
\setlength{\tabcolsep}{4.5pt}
\begin{threeparttable}
\caption{Downstream inference at fixed document length}
\label{tab:sim-beta}
\begin{tabular}{@{}llrrrrr@{}}
\toprule
\(n\) & Method
& \(\operatorname{RMSE}_\beta\)
& Cov.
& Min cov.
& Length
& Topic \(H\) \\
\midrule
\multicolumn{7}{@{}l}{\textit{Panel A: symmetric Dirichlet, \(\alpha_0=5\)}} \\
3,000 & Latent oracle              & 0.022 & 0.949 & 0.935 & 0.085 & -- \\
3,000 & Direct spectral            & 0.073 & 0.978 & 0.957 & 0.452 & 0.205 \\
3,000 & Plug-in, true \(O\)        & 0.046 & 0.607 & 0.126 & 0.086 & 0.000 \\
3,000 & Plug-in, \(\hat O\)        & 0.068 & 0.529 & 0.270 & 0.088 & 0.205 \\
\addlinespace
5,000 & Latent oracle              & 0.017 & 0.954 & 0.943 & 0.066 & -- \\
5,000 & Direct spectral            & 0.058 & 0.969 & 0.944 & 0.331 & 0.178 \\
5,000 & Plug-in, true \(O\)        & 0.044 & 0.492 & 0.016 & 0.067 & 0.000 \\
5,000 & Plug-in, \(\hat O\)        & 0.063 & 0.461 & 0.226 & 0.068 & 0.178 \\
\addlinespace
15,000 & Latent oracle             & 0.010 & 0.951 & 0.934 & 0.038 & -- \\
15,000 & Direct spectral           & 0.027 & 0.962 & 0.937 & 0.113 & 0.125 \\
15,000 & Plug-in, true \(O\)       & 0.042 & 0.272 & 0.000 & 0.039 & 0.000 \\
15,000 & Plug-in, \(\hat O\)       & 0.052 & 0.288 & 0.028 & 0.038 & 0.125 \\
\addlinespace[1mm]
\multicolumn{7}{@{}l}{\textit{Panel B: asymmetric Dirichlet, \(\alpha_0=2\)}} \\
3,000 & Latent oracle              & 0.024 & 0.950 & 0.940 & 0.087 & -- \\
3,000 & Direct spectral            & 0.072 & 0.966 & 0.948 & 0.447 & 0.168 \\
3,000 & Plug-in, true \(O\)        & 0.045 & 0.818 & 0.422 & 0.091 & 0.000 \\
3,000 & Plug-in, \(\hat O\)        & 0.079 & 0.647 & 0.328 & 0.091 & 0.168 \\
\addlinespace
5,000 & Latent oracle              & 0.018 & 0.954 & 0.938 & 0.067 & -- \\
5,000 & Direct spectral            & 0.056 & 0.967 & 0.947 & 0.300 & 0.145 \\
5,000 & Plug-in, true \(O\)        & 0.042 & 0.746 & 0.168 & 0.070 & 0.000 \\
5,000 & Plug-in, \(\hat O\)        & 0.069 & 0.613 & 0.252 & 0.070 & 0.145 \\
\addlinespace
15,000 & Latent oracle             & 0.011 & 0.952 & 0.936 & 0.039 & -- \\
15,000 & Direct spectral           & 0.029 & 0.962 & 0.943 & 0.141 & 0.103 \\
15,000 & Plug-in, true \(O\)       & 0.039 & 0.506 & 0.000 & 0.041 & 0.000 \\
15,000 & Plug-in, \(\hat O\)       & 0.051 & 0.460 & 0.059 & 0.040 & 0.103 \\
\bottomrule
\end{tabular}
\begin{tablenotes}
\footnotesize
\item[] \textit{Notes:} Cov. is the average coordinate-wise coverage of nominal
\(95\%\) confidence intervals; Min cov. is the minimum over the ten coordinates.
Length is the average coordinate-wise interval length.  Topic \(H\) is the
average Hellinger distance between matched topic columns; it is not applicable for
the latent-oracle regression and equals zero for methods using the true topic
matrix.
\end{tablenotes}
\end{threeparttable}
\end{table}

The contrast between the direct estimator and the plug-in regressions is the main
finding.  In the symmetric design, the direct spectral estimator has average
coverage \(0.978\), \(0.969\), and \(0.962\) as \(n\) increases from \(3000\) to
\(15000\).  Over the same sample sizes, the spectral plug-in regression has
coverage \(0.529\), \(0.461\), and \(0.288\).  The known-\(O\) plug-in regression,
which is given the true topic matrix, also undercovers severely: its coverage
falls from \(0.607\) to \(0.272\).  Thus the failure is not primarily a topic
recovery error.  It is the fixed-document generated-regressor error induced by
estimating \(h_i\) from only \(N=100\) words and then treating \(\tilde h_i\) as
observed.

The asymmetric design gives the same qualitative conclusion.  The direct spectral
estimator remains close to nominal coverage, whereas the true-topic plug-in
coverage falls from \(0.818\) to \(0.506\), and the spectral plug-in coverage
falls from \(0.647\) to \(0.460\).  The plug-in intervals have lengths comparable
to the latent-oracle intervals, so they shrink at the usual \(n^{-1/2}\) rate even
though their centers retain finite-document bias.  With \(1000\) replications, the
Monte Carlo standard error of a single coordinate's nominal \(95\%\) coverage
estimate is about \(\{0.95(0.05)/1000\}^{1/2}=0.0069\), far smaller than the
reported plug-in undercoverage.

As a further diagnostic, we reran the spectral plug-in regression while giving it
the true value of \(\alpha_0\).  Across the six design cells in
Table~\ref{tab:sim-beta}, this changed average coverage by at most
\(0.039\).  The plug-in failure is therefore not explained by uncertainty in the
concentration parameter; it comes from using finite-document topic estimates as
regressors.

\subsection{Estimation of the concentration parameter}

Table~\ref{tab:sim-alpha} reports the behavior of the commutator estimator of
\(\alpha_0\).  Bias and RMSE decrease steadily with \(n\) in both Dirichlet
designs.  At the two largest sample sizes, the empirical standard deviation and
the average estimated standard error are close, supporting the first-order
linearization used in the downstream variance formula.  Together with
Table~\ref{tab:sim-beta}, this indicates that the proposed method learns
\(\alpha_0\) from word moments and delivers calibrated downstream inference
without plug-in document-topic regressors.

\begin{table}[!htbp]
\centering
\scriptsize
\setlength{\tabcolsep}{5pt}
\begin{threeparttable}
\caption{Estimation of \(\alpha_0\)}
\label{tab:sim-alpha}
\begin{tabular}{@{}lrrrrrrr@{}}
\toprule
\(n\)
& Mean
& Bias
& RMSE
& Cov.
& Emp. sd
& Mean se
& se/sd \\
\midrule
\multicolumn{8}{@{}l}{\textit{Panel A: symmetric Dirichlet, \(\alpha_0=5\)}} \\
3,000  & 5.984 & 0.984 & 1.364 & 0.961 & 0.944 & 1.109 & 1.174 \\
5,000  & 5.634 & 0.634 & 1.000 & 0.949 & 0.774 & 0.812 & 1.050 \\
15,000 & 5.209 & 0.209 & 0.472 & 0.951 & 0.423 & 0.429 & 1.012 \\
30,000 & 5.112 & 0.112 & 0.322 & 0.941 & 0.302 & 0.297 & 0.984 \\
\addlinespace
\multicolumn{8}{@{}l}{\textit{Panel B: asymmetric Dirichlet, \(\alpha_0=2\)}} \\
3,000  & 2.284 & 0.284 & 0.413 & 0.928 & 0.300 & 0.324 & 1.080 \\
5,000  & 2.167 & 0.167 & 0.273 & 0.940 & 0.216 & 0.236 & 1.095 \\
15,000 & 2.054 & 0.054 & 0.134 & 0.951 & 0.123 & 0.127 & 1.034 \\
30,000 & 2.028 & 0.028 & 0.091 & 0.948 & 0.087 & 0.088 & 1.018 \\
\bottomrule
\end{tabular}
\begin{tablenotes}
\footnotesize
\item[] \textit{Notes:} Results are for the proposed direct spectral estimator.
Cov. is the empirical coverage of the nominal \(95\%\) confidence interval for
\(\alpha_0\).  The last column is the ratio of the average estimated standard
error to the Monte Carlo standard deviation of \(\hat\alpha_0\).
\end{tablenotes}
\end{threeparttable}
\end{table}

The simulation designs deliberately hold fixed the topic dimension, topic
separation, document length, concentration search interval and probe
construction. They are therefore best read as tests of the fixed-document
generated-regressor mechanism and of the first-order variance formula in
well-separated designs. They do not replace application-specific diagnostics
for the commutator criterion, ordering eigengaps, rank selection or
compression stability.

\section{Application: abstract topics and citations in economics}\label{sec:application}

We illustrate the method on articles published in the five general-interest economics journals
\emph{American Economic Review}, \emph{Econometrica}, \emph{Journal of Political Economy},
\emph{Quarterly Journal of Economics}, and \emph{Review of Economic Studies}. Article metadata and
citation counts are taken from OpenAlex \citep{PriemPiwowarOrr2022,OpenAlexData2026}. The response is
\[
        Y_i=\log\{1+\text{citation count}_i\},
\]
and the document consists of the article title and abstract. The analysis is descriptive: a topic
coefficient is interpreted as an association between latent abstract content and citation impact,
not as a causal effect of writing on that topic. The exercise should also be read under the maintained
low-order response-token orthogonality restrictions, which cannot be verified from the observed text alone.

The maintained response equation in this application is
\[
        Y_i = \beta^\top h_i + \delta^\top q_i + \varepsilon_i,
\]
where \(h_i\) is the latent topic mixture and \(q_i\) contains publication-year indicators. Since
\(\ones^\top h_i=1\), a common shift in all topic coefficients is observationally equivalent to an
intercept shift. Thus the absolute level of an individual coefficient \(\beta_j\), and a test of
\(\beta_j=0\), depends on the normalization implicit in the intercept. The invariant objects are
topic contrasts,
\[
        \beta_a-\beta_b,
\]
which compare the conditional mean response when latent topic mass is shifted from topic \(b\) to
topic \(a\). A reallocation of 10 percentage points of topic mass from \(b\) to \(a\) corresponds to
\(0.1(\beta_a-\beta_b)\) log-points in \(Y_i\), holding the year controls fixed. We therefore report
the topic coefficient estimates to orient the reader, but interpret the empirical results through
joint and pairwise contrast tests.

The raw corpus contains records published between 2000 and 2018. We exclude records whose titles or
metadata indicate comments, replies, corrections, editorial material, or very short documents. We
lowercase tokens, remove standard English and scholarly stopwords, and keep terms whose document
frequency is at least 40. The main specification does not stem words. The resulting analysis sample
has 5409 documents and 1771 terms. Because the vocabulary dimension is too large for direct dense
third-order moment calculations, we use the split-sample compression described in
Appendix~\ref{app:pca-compression}. A 10\% split of documents is used to estimate a 100-dimensional
orthonormal compression from the uncentered cross-token second moment, and the remaining 90\% is
used for estimation and inference. All reported confidence intervals and standard errors are computed
on the second split, conditional on the first-split compression. This follows the split-compression
construction in Appendix~\ref{app:pca-compression}, under which the first-stage compression is held
fixed for second-split inference. The empirical moments use document-specific length normalizations
as in Appendix~\ref{app:variable-lengths}.

Following the split-sample finite-rank diagnostic described in Section~\ref{subsec:plugin-operators}, the working topic dimension is selected using only first-split word moments, without using citation outcomes. For each candidate \(k\in\{10,\ldots,20\}\), we estimate \(\alpha_0\) from the commutator criterion across 20 independent sets of probe directions, using five mean-orthogonal commutator partner directions for each run. Candidates with boundary or failed concentration estimates are discarded. Among the remaining candidates, we choose the value of \(k\) with the smallest standard deviation of \(\log \hat\alpha_0\). This first-split commutator-stability diagnostic selects \(k=15\).

After fixing this rank and the first-split compression, we run the estimator on the independent second split and inspect the second-split commutator profile, boundary behavior, compressed scale recovery, and control-adjustment conditioning as diagnostics for the selected specification. In a second-split probe diagnostic using five admissible probe draws, \(\hat\alpha_0\) ranges from 5.09 to 7.59, and none of the corresponding commutator profiles has a boundary minimum. The second-split estimator uses year fixed effects, with year 2000 as the omitted category. The coefficient estimates and standard errors are computed in compressed coordinates after the moment-based topic-scale recovery used in the observed-control adjustment; lifted word-space directions are used only to label topics. The selected specification has
\[
        \hat\alpha_0=5.09,
        \qquad
        n_{\rm est}=4868,
        \qquad
        \operatorname{cond}(\hat\Gamma_q)=6.56,
\]
where \(\hat\Gamma_q\) is the matrix in the observed-control adjustment. The minimum absolute recovered scale factor is 1.78, and the value of \(\operatorname{cond}(\hat\Gamma_q)\) is moderate in this application, so the year-control adjustment is not close to singular by this diagnostic.

Table~\ref{tab:top5-main} reports the topic labels and centered coefficient estimates in the selected ordering. The labels are assigned from the leading positive words of the lifted word-space directions. The centered estimates subtract the average topic coefficient and are shown only to orient the reader; the invariant inferential objects are the joint equality test and pairwise contrasts reported below.

\begin{table}[H]
\centering
\begin{threeparttable}
\caption{Topic labels and centered coefficients for top-five economics articles}
\label{tab:top5-main}
\scriptsize
\setlength{\tabcolsep}{3.6pt}
\begin{tabular}{r>{\raggedright\arraybackslash}p{0.24\linewidth}>{\raggedright\arraybackslash}p{0.38\linewidth}rr}
\toprule
Topic & Label & Leading words & Centered est. & S.e. \\
\midrule
1  & Risk / insurance / asset preferences
   & risk, insurance, financial, market, sharing, utility
   & 0.317 & 0.364 \\
2  & Capital / growth / inequality
   & capital, growth, tax, model, market, income
   & 0.035 & 0.289 \\
3  & Tax / income / health / public finance
   & tax, income, taxes, consumption, health, reform
   & -0.016 & 0.327 \\
4  & Macro / monetary / shocks
   & model, rate, policy, monetary, shocks, inflation
   & 1.096 & 0.216 \\
5  & Markets / policy / welfare
   & market, policy, price, markets, optimal, welfare
   & -0.124 & 0.244 \\
6  & Monetary policy / inflation
   & policy, monetary, rate, inflation, growth, shocks
   & 0.442 & 0.273 \\
7  & Labor search / wages
   & market, model, price, labor, workers, search
   & 0.138 & 0.384 \\
8  & Consumer choice / prices / demand
   & price, prices, choice, preferences, utility, consumers
   & 0.287 & 0.422 \\
9  & Auctions / prices / products
   & price, prices, auction, auctions, firms, demand
   & -0.381 & 0.189 \\
10 & Trade / international
   & trade, countries, free, country, international, growth
   & 0.563 & 2.203 \\
11 & Game theory / auctions / equilibrium
   & equilibrium, games, trade, game, players, auctions
   & -1.334 & 2.381 \\
12 & Mechanism design / principal--agent
   & optimal, agent, principal, mechanisms, contracts, risk
   & -0.750 & 0.904 \\
13 & Firms / trade / productivity
   & firms, firm, model, trade, workers, productivity
   & 0.962 & 0.274 \\
14 & Private information / mechanism design
   & information, private, agents, agent, models, equilibrium
   & -0.460 & 0.302 \\
15 & Econometrics / identification / asymptotics
   & models, estimator, estimators, estimation, asymptotic, identification
   & -0.773 & 0.230 \\
\bottomrule
\end{tabular}
\begin{tablenotes}[flushleft]
\scriptsize
\item Notes: The response is \(\log\{1+\text{citation count}\}\). Documents are title and abstract text.
The sample contains top-five economics articles from 2000--2018. The estimator uses a 10\%/90\%
split-sample PCA compression with \(m=100\), first-split commutator-stability selection of \(k\)
using word moments only, and year fixed effects. The coefficient estimates and standard errors are computed in compressed coordinates after moment-based topic-scale recovery. The displayed estimates are \(\hat\beta_j-\bar{\hat\beta}\), with standard errors computed from the full analytic sandwich covariance. Centering removes the common level that absorbs the intercept normalization because topic shares sum to one; pairwise contrasts are unchanged by this centering.
\end{tablenotes}
\end{threeparttable}
\end{table}

The topics align with recognizable fields and methods in economics. The recovered topic labeled
``Econometrics / identification / asymptotics'', for example, has leading words
\emph{models}, \emph{estimator}, \emph{estimators}, \emph{estimation}, \emph{asymptotic},
and \emph{identification}. Other topics correspond to macroeconomic shocks, monetary policy,
labor search, public finance, firms and productivity, auctions, mechanism design, and private
information. This is a qualitative check that the spectral directions are not merely recovering common
article boilerplate.

The natural omnibus null in this setting is not \(\beta=0\), but equality of all topic coefficients,
\[
        H_0:\beta_1=\cdots=\beta_{15}.
\]
This null says that, after controlling for publication year, latent topic composition has no association
with citation impact. Using the full influence-function covariance matrix, the Wald statistic for this
14-dimensional restriction is 56.26, with \(p=5.25\times10^{-7}\). Thus, within the maintained
latent-topic model, abstract topic composition is associated with citation impact.

Table~\ref{tab:top5-contrasts} reports the pairwise contrasts that remain significant after Holm
adjustment over all 105 pairwise comparisons. The largest contrast is between the macro/monetary
shocks topic and the econometrics/identification topic. A ten-percentage-point shift of latent topic
mass from the latter to the former corresponds to 0.187 log-points in the citation outcome, holding
publication year fixed. The firms/trade/productivity topic also has a larger citation association than
the econometrics/identification topic and the auctions/prices/products topic. These are descriptive
contrasts in latent topic coordinates, not causal comparisons across research fields.

\begin{table}[H]
\centering
\begin{threeparttable}
\caption{Holm-significant topic contrasts}
\label{tab:top5-contrasts}
\scriptsize
\setlength{\tabcolsep}{3.0pt}
\begin{tabular}{>{\raggedright\arraybackslash}p{0.255\linewidth}>{\raggedright\arraybackslash}p{0.285\linewidth}rrrr}
\toprule
Higher topic & Lower topic & Difference & S.e. & \(p\) & Holm \(p\) \\
\midrule
Macro / monetary / shocks
  & Econometrics / identification / asymptotics
  & 1.868 & 0.323 & \(7.6\times10^{-9}\) & \(8.0\times10^{-7}\) \\
Macro / monetary / shocks
  & Auctions / prices / products
  & 1.477 & 0.292 & \(4.2\times10^{-7}\) & \(4.36\times10^{-5}\) \\
Firms / trade / productivity
  & Econometrics / identification / asymptotics
  & 1.734 & 0.364 & \(2.0\times10^{-6}\) & \(2.01\times10^{-4}\) \\
Macro / monetary / shocks
  & Private information / mechanism design
  & 1.556 & 0.394 & \(8.0\times10^{-5}\) & 0.0082 \\
Macro / monetary / shocks
  & Markets / policy / welfare
  & 1.220 & 0.310 & \(8.2\times10^{-5}\) & 0.0083 \\
Firms / trade / productivity
  & Auctions / prices / products
  & 1.343 & 0.356 & \(1.6\times10^{-4}\) & 0.016 \\
\bottomrule
\end{tabular}
\begin{tablenotes}[flushleft]
\scriptsize
\item Notes: Contrasts are differences between the corresponding topic coefficients in
Table~\ref{tab:top5-main}. Holm \(p\)-values adjust over all \(15\times14/2=105\) pairwise topic
comparisons. Standard errors use the full analytic sandwich covariance, including the contribution
from estimating \(\alpha_0\) and the compressed scale-recovery step used for the observed-control adjustment.
\end{tablenotes}
\end{threeparttable}
\end{table}

The direction of these contrasts is consistent with existing evidence on citation heterogeneity across
fields in economics. \citet{CardDellaVigna2013} study the same five general-interest journals and
document substantial differences in citation patterns across fields, including relatively high citation
performance for more recent Development and International Economics papers and lower relative
citation performance for more recent Econometrics and Theory papers. \citet{AnauatiGalianiGalvez2016}
classify top-five articles into applied, applied theory, econometric methods, and theory, and find
that citation life cycles differ markedly across these categories: applied and applied-theory papers
have more favorable citation profiles, while theory and econometric-method papers have lower
typical citation profiles, with econometric methods displaying a more heterogeneous upper tail. Our
topic-level estimates should not be read as a replication of those field classifications, but the broad
pattern is similar: topic content is associated with citation outcomes, and the more applied macro and
firm/trade/productivity directions have higher citation associations than some econometric and
theory-oriented directions.

The application is intended as a numerical illustration of the moment-based inferential workflow. It
uses a single first-split selection rule, fixed before estimating the second-split coefficients, and all
reported uncertainty is computed on the independent second split, conditional on the first-split
compression and selected working dimension. Appendix~\ref{app:application-robustness} reports second-split diagnostics, adjacent-rank checks, and a preprocessing robustness check. The main text is kept to one specification to avoid turning the illustration into a model-selection exercise.

\section{Discussion}\label{sec:discussion}

This paper develops direct inference for regression with latent Dirichlet
topic covariates. The main distinction from the usual plug-in workflow is
that document-level topic shares are not estimated and then treated as
observed regressors. At fixed document length, those document-level
quantities remain noisy even when the topic matrix is known. The proposed
response-weighted corrected moments instead identify the downstream
coefficient \(\beta\) directly from observable word--response moments under
explicit low-order response-token orthogonality conditions. The same
corrected moment structure also yields an estimating equation for the
unknown total concentration parameter \(\alpha_0\), through the
commutativity of corrected word-moment operators.

The results are model-based and should be interpreted within the stated
asymptotic regime. The formal theory keeps the vocabulary dimension, number
of topics and document length fixed while the number of documents increases.
The commutativity identification of \(\alpha_0\) is formulated for
\(k\ge3\) and relies on rank, separation and generic probe conditions.
These genericity conditions should not be read as a substitute for numerical
diagnostics in applied use. Boundary solutions, flat commutator profiles,
small ordering eigengaps, unstable rank choices or substantial sensitivity to
admissible probes are substantive warnings. They may reflect weak finite-sample
identification, ill-conditioned topic separation, an unstable compression, or
misspecification of the maintained LDA moment structure. The split-compression
results in the Supplement identify \(\beta\) conditional on an admissible
projection; compressed directions are not, by themselves, simplex-normalized
topics.

Integrated latent-variable likelihood or Bayesian joint-model procedures
could also be used to target \(\beta\) under fuller specifications by
integrating over the latent document mixtures. The contribution here is a
different tradeoff: low-order moment-based frequentist inference for
\(\beta\), without document-level topic-share regression and without a full
likelihood or prior specification for all unknown population parameters.

Several extensions remain. A theory allowing the vocabulary dimension or
the compression dimension to grow with the number of documents would better
match large-scale text applications. It would also be useful to study
robustness under departures from the Dirichlet topic-mixture assumption,
including non-Dirichlet latent mixtures, correlated topics and misspecified
topic dimension. Finally, the same moment logic may be useful beyond topic
models, in other latent simplex models where the inferential target is a
downstream regression coefficient rather than recovery of each unit's latent
coordinate.

\newpage
\appendix

\section{Supplementary material}\label{app:supplement}

This supplement gives the details behind Sections~\ref{sec:model}--\ref{sec:estimation} and records application diagnostics and robustness checks for Section~\ref{sec:application}. All limits are taken with fixed vocabulary size \(d\), fixed number of topics \(k\), and fixed document length \(N\ge3\), while the number of documents \(n\) diverges. If probe vectors or the ordering direction are generated at random, probability statements in the asymptotic theory are conditional on their realized values, except where the full-measure properties of such random draws are stated explicitly.

For a vector \(a\), \(\diag(a)\) is the diagonal matrix with diagonal \(a\). For a square matrix \(A\), \(\diag(A)\) is the vector formed from the diagonal of \(A\). The Moore--Penrose inverse of a full-column-rank matrix \(O\) is \(O^+=(O^\top O)^{-1}O^\top\). The commutator is \([A,B]=AB-BA\). Constants denoted by \(C\) are finite and may change from line to line.

\subsection{Application diagnostics and robustness}\label{app:application-robustness}

This subsection records the finite-sample diagnostics and robustness checks for
Section~\ref{sec:application}. All entries use the moment-based compressed scale
recovery in Proposition~\ref{prop:app-compressed-controls}. The main specification
is the non-stemmed document-frequency-40 corpus, the 10\%/90\% split-sample PCA
compression with \(m=100\), year fixed effects, and the first-split selected rank
\(k=15\).

Table~\ref{tab:app-diagnostics} gives second-split diagnostics for the selected
specification. The selected probe draw gives an interior concentration estimate, a
moderate control-adjustment condition number, and no near-zero compressed scale
factor. Repeating the second-split commutator calculation over five admissible
probe draws gives concentration estimates between 5.09 and 7.59, with no boundary
minima.

\begin{table}[!htbp]
\centering
\begin{threeparttable}
\caption{Second-split diagnostics for the main application specification}
\label{tab:app-diagnostics}
\scriptsize
\setlength{\tabcolsep}{4.0pt}
\begin{tabular}{@{}lrrrrrr@{}}
\toprule
Specification & \(\hat\alpha_0\) & Probe range & Boundary hits & \(\operatorname{cond}(\hat\Gamma_q)\) & Min. scale & Omnibus \(p\) \\
\midrule
Main \(k=15\) & 5.09 & 5.09--7.59 & 0/5 & 6.56 & 1.78 & \(5.25\times10^{-7}\) \\
\bottomrule
\end{tabular}
\begin{tablenotes}[flushleft]
\scriptsize
\item Notes: The probe range is the range of \(\hat\alpha_0\) over five second-split
commutator calculations with different admissible probe draws. ``Min. scale'' is the
minimum absolute compressed scale-recovery factor in the selected full-inference run.
\end{tablenotes}
\end{threeparttable}
\end{table}

Table~\ref{tab:app-adjacent-rank} checks adjacent ranks using the same corpus,
compression, controls, probe seed, and moment-scale recovery as the main specification.
The selected rank \(k=15\) gives the sharpest pairwise contrast evidence, but the
neighboring ranks also reject the omnibus null that all topic coefficients are equal.
Thus the evidence that topic composition is associated with citation outcomes is not
confined to a single rank choice.

\begin{table}[!htbp]
\centering
\begin{threeparttable}
\caption{Adjacent-rank robustness for the application}
\label{tab:app-adjacent-rank}
\scriptsize
\setlength{\tabcolsep}{4.0pt}
\begin{tabular}{@{}rrrrrrrr@{}}
\toprule
\(k\) & \(\hat\alpha_0\) & s.e.\((\hat\alpha_0)\) & \(\operatorname{cond}(\hat\Gamma_q)\) & Min. scale & Omnibus \(p\) & Nominal pairs & Holm pairs \\
\midrule
14 & 5.67 & 0.58 & 4.76 & 0.73 & \(1.33\times10^{-4}\) & 23 & 1 \\
15 & 5.09 & 0.55 & 6.56 & 1.78 & \(5.25\times10^{-7}\) & 21 & 6 \\
16 & 6.94 & 0.72 & 4.97 & 0.32 & \(6.21\times10^{-4}\) & 14 & 2 \\
\bottomrule
\end{tabular}
\begin{tablenotes}[flushleft]
\scriptsize
\item Notes: ``Nominal pairs'' is the number of pairwise topic contrasts significant at nominal 5\%.
``Holm pairs'' is the number that remain significant after Holm adjustment over all pairwise
contrasts for the given rank. All runs use year fixed effects and the moment-scale control adjustment.
\end{tablenotes}
\end{threeparttable}
\end{table}

The Holm-significant adjacent-rank contrasts are also directionally consistent with the
main display. For \(k=14\), the firms/trade/productivity topic is larger than a
private-information/mechanism topic. For \(k=16\), the firms/trade/productivity topic and an
education/health/empirical-effects topic are both larger than the econometrics/identification/asymptotics
topic. The exact pairwise contrast table changes with the working rank, as expected, but applied
macro, firm, trade and empirical directions remain the high-association families relative to several
methodological or theory-oriented directions.

As a preprocessing robustness check, we repeated the analysis using a light-stemmed vocabulary, a
30-document-frequency threshold, and the same 10\%/90\% split-compression design. The first-split
commutator-stability diagnostic selected \(k=17\). This alternative specification gives
\(\hat\alpha_0=6.96\), \(\operatorname{cond}(\hat\Gamma_q)=3.90\), a minimum absolute scale factor
of 0.52, and an omnibus test \(p\)-value of 0.0045. It rejects equality of all topic coefficients and
recovers many broad economics families, including public finance, macro policy, labor, firms and
productivity, trade, mechanism-design or contract topics, risk or asset themes, auctions or demand,
and econometrics. Its pairwise contrasts are less sharp after multiplicity adjustment, with no
Holm-significant pairwise contrast. We therefore use the non-stemmed document-frequency-40
specification as the main display.

\subsection{Assumptions used in the asymptotic theory}\label{app:assumptions}

The main theorem in Section~\ref{sec:estimation} is stated in compact form. The following assumptions record the regularity conditions used in the proof.

\begin{assumption}[Finite-LDA model and response]\label{ass:finite-lda}
The observations \((Y_i,x_{i1},\ldots,x_{iN})\), \(i=1,\ldots,n\), are independent and identically distributed according to the model in Section~\ref{sec:model}. The integers \(d\), \(k\), and \(N\ge3\) are fixed. The topic matrix \(O\in\R^{d\times k}\) has full column rank, each column lies in the vocabulary simplex, \(\alpha\in(0,\infty)^k\), and \(\E Y^2<\infty\). The response satisfies \eqref{eq:supervised-linear-model} and the low-order orthogonality conditions \eqref{eq:response-token-orthogonality}; a sufficient condition is \eqref{eq:strong-response-condition}.
\end{assumption}

\begin{assumption}[Concentration parameter and finite probes]\label{ass:alpha-probes}
The number of topics satisfies \(k\ge3\). The compact interval \(\Theta=[\underline\tau,\overline\tau]\subset(0,\infty)\) contains \(\alpha_0\) in its interior. For the probe vectors \(s_1,\ldots,s_L\), \(L\ge2\), and the tested pairs \(\mathcal I\), at least one pair \((\ell,q)\in\mathcal I\) satisfies that
\[
    O^\top P(\mu)s_\ell
    \quad\hbox{and}\quad
    O^\top P(\mu)s_q
\]
are non-collinear, where \(P(u)=I_d-uu^\top/\|u\|^2\).
\end{assumption}

\begin{assumption}[Topic ordering]\label{ass:ordering}
The ordering direction \(r\in\R^d\) satisfies that the entries of \(O^\top P(\mu)r\) are distinct and nonzero.
\end{assumption}

\begin{assumption}[Rank threshold]\label{ass:rank-threshold}
If \(k\) is estimated by \eqref{eq:est-khat}, the threshold satisfies
\[
    a_n\to0,
    \qquad
    \sqrt n\,a_n\to\infty .
\]
\end{assumption}

\begin{assumption}[Split compression and observed controls]\label{ass:split-compressed-controls}
For the compressed control-adjusted estimator, split the sample into an auxiliary
part \(I_1\) and an estimation part \(I_2\), with \(|I_2|=n_2\to\infty\) and
\(n_2/n\) bounded away from zero. The compression dimension \(m\), the number of
controls \(p\), and the selected working rank are fixed for the second-split
analysis. The compression matrix \(\hat R_1\), the working rank, the finite probe
set, and the ordering direction are measurable with respect to \(I_1\), or are
chosen by external randomization independent of \(I_2\). Conditional on these
first-split choices, the selected working rank equals the population rank \(k\),
\(\sigma_{\min}(\hat R_1^\top O)\) is bounded away from zero, the compressed
finite-probe non-collinearity condition holds, and the selected nonzero-cluster
ordering eigenvalues are real, simple, mutually separated, and separated from
zero. For the observed-control adjustment, the compressed scale-recovery
denominators are bounded away from zero and the corresponding matrix
\(\Gamma_q(\hat R_1)\) is nonsingular. The enlarged second-split moment vector
\(Z^c_{\hat R_1,i}\), containing the compressed word, response-weighted word,
control, control-response, control-word, and scale-recovery moments entering the
estimator, has finite conditional second moment. With unequal document lengths,
\(Z^c_{\hat R_1,i}\) is formed using the length-normalized moments in
Appendix~\ref{app:variable-lengths}.
\end{assumption}

\begin{remark}[Random probes and ordering directions]\label{rem:random-probes}
If the probes \(s_1,s_2\) are drawn independently from distributions that are absolutely continuous on \(\R^d\), Assumption~\ref{ass:alpha-probes} holds with probability one. The map \(s\mapsto O^\top P(\mu)s\) has image \(\alpha^\perp\). Since \(k\ge3\), this image has dimension at least two, so two independent absolutely continuous projected draws are non-collinear with probability one. The same finite-union-of-hyperplanes argument gives Assumption~\ref{ass:ordering} with probability one for an absolutely continuous draw of \(r\).
\end{remark}

\subsection{Dirichlet and observable word moments}\label{app:dirichlet-word-moments}

Let \(h\sim\operatorname{Dirichlet}(\alpha)\), write \(\alpha_0=\ones^\top\alpha\), \(D=\diag(\alpha)\), and let \((a)_m=a(a+1)\cdots(a+m-1)\) be the rising factorial. The Dirichlet factorial-moment formula is
\begin{equation}\label{eq:app-dirichlet-factorial}
    \E\left\{\prod_{j=1}^k h_j^{m_j}\right\}
    =\frac{\prod_{j=1}^k(\alpha_j)_{m_j}}{(\alpha_0)_{m_+}},
    \qquad
    m_+=\sum_{j=1}^k m_j .
\end{equation}
Consequently,
\begin{equation}\label{eq:app-dirichlet-second}
    \E h=\frac{\alpha}{\alpha_0},
    \qquad
    \E(hh^\top)=\frac{D+\alpha\alpha^\top}{\alpha_0(\alpha_0+1)} .
\end{equation}
For \(\gamma\in\R^k\), the contracted third moment is
\begin{equation}\label{eq:app-dirichlet-third}
\begin{aligned}
\E\{hh^\top(\gamma^\top h)\}
=\frac{1}{\alpha_0(\alpha_0+1)(\alpha_0+2)}
\Big
(& (\alpha^\top\gamma)\alpha\alpha^\top
  +\alpha\alpha^\top\diag(\gamma)
  +\diag(\gamma)\alpha\alpha^\top  \\
& +(\alpha^\top\gamma)D
  +2\diag(\alpha\circ\gamma)
\Big).
\end{aligned}
\end{equation}
For example, the \((i,j)\) entry of the right side is \(\alpha_i\alpha_j\{\alpha^\top\gamma+\gamma_i+\gamma_j\}/(\alpha_0)_3\) when \(i\ne j\), and \(\alpha_i(\alpha_i+1)\{\alpha^\top\gamma+2\gamma_i\}/(\alpha_0)_3\) when \(i=j\), which is exactly \eqref{eq:app-dirichlet-factorial}.

Under the word model, conditional independence of distinct tokens gives
\[
    \E(x_a\mid h)=Oh,
    \qquad
    \E(x_ax_b^\top\mid h)=(Oh)(Oh)^\top,
    \quad a\ne b .
\]
Thus
\begin{equation}\label{eq:app-M2-mu-topic}
    \mu=O\frac{\alpha}{\alpha_0},
    \qquad
    M_2=O\frac{D+\alpha\alpha^\top}{\alpha_0(\alpha_0+1)}O^\top .
\end{equation}
For \(\eta\in\R^d\), write \(w=O^\top\eta\), \(r=\alpha^\top w\), and \(s=\alpha\circ w=Dw\). Equation~\eqref{eq:app-dirichlet-third} implies
\begin{equation}\label{eq:app-T-topic}
    T(\eta)
    =\frac1{C_3}
      O\{r\alpha\alpha^\top+\alpha s^\top+s\alpha^\top+rD+2\diag(s)\}O^\top,
    \qquad
    C_3=\alpha_0(\alpha_0+1)(\alpha_0+2).
\end{equation}
The supervised moments obey the bridge identities in Section~\ref{sec:model}. Indeed, under \eqref{eq:supervised-linear-model} and the low-order orthogonality conditions \eqref{eq:response-token-orthogonality},
\begin{align}
    m_y&=\beta^\top\alpha/\alpha_0,\label{eq:app-supervised-my}\\
    v_y&=O\E(hh^\top)\beta,\label{eq:app-supervised-vy}\\
    T^y&=O\E\{hh^\top(\beta^\top h)\}O^\top .\label{eq:app-supervised-Ty}
\end{align}
Therefore, whenever \(O^\top\eta_\beta=\beta\),
\begin{equation}\label{eq:app-bridge}
    m_y=\langle \eta_\beta,\mu\rangle,
    \qquad
    v_y=M_2\eta_\beta,
    \qquad
    T^y=T(\eta_\beta).
\end{equation}

\subsection{Corrected moment factorization}\label{app:corrected-factorization}

This subsection proves Lemma~\ref{lem:id-corrected-factorization}. Put \(C_2=\alpha_0(\alpha_0+1)\). Substituting \eqref{eq:app-M2-mu-topic} into \eqref{eq:id-Btau} gives
\begin{equation}\label{eq:app-SB}
    B_\tau=OS_B(\tau)O^\top,
    \qquad
    S_B(\tau)=\frac{D}{C_2}
    +\frac{\alpha_0-\tau}{\alpha_0^2(\alpha_0+1)(\tau+1)}\alpha\alpha^\top .
\end{equation}
The inverse of \(S_B(\tau)\) is
\begin{equation}\label{eq:app-SB-inverse}
    S_B(\tau)^{-1}=C_2D^{-1}-(\alpha_0-\tau)\ones\ones^\top .
\end{equation}
This follows from the Sherman--Morrison formula and the identities \(D^{-1}\alpha=\ones\) and \(\alpha^\top D^{-1}\alpha=\alpha_0\). Positive definiteness holds for every \(\tau>0\). If \(\tau\le\alpha_0\), it is immediate from \eqref{eq:app-SB}. If \(\tau>\alpha_0\), Cauchy's inequality gives \((a^\top\alpha)^2\le\alpha_0 a^\top Da\), hence
\[
    a^\top S_B(\tau)a
    \ge
    \left\{\frac{1}{C_2}
    +\frac{\alpha_0(\alpha_0-\tau)}{\alpha_0^2(\alpha_0+1)(\tau+1)}\right\}a^\top Da
    =\frac{a^\top Da}{\alpha_0(\tau+1)}>0 .
\]

The remaining terms in \(A_\tau(\eta)\) can be written in topic coordinates as
\begin{align}
    M_2\eta\mu^\top
    &=O\frac{(s+r\alpha)\alpha^\top}{C_2\alpha_0}O^\top,\label{eq:app-third-corr-1}\\
    \mu\eta^\top M_2
    &=O\frac{\alpha(s+r\alpha)^\top}{C_2\alpha_0}O^\top,\label{eq:app-third-corr-2}\\
    \langle\eta,\mu\rangle M_2
    &=O\frac{r(D+\alpha\alpha^\top)}{C_2\alpha_0}O^\top,\label{eq:app-third-corr-3}\\
    \langle\eta,\mu\rangle\mu\mu^\top
    &=O\frac{r\alpha\alpha^\top}{\alpha_0^3}O^\top .\label{eq:app-third-corr-4}
\end{align}
Combining \eqref{eq:app-T-topic} and \eqref{eq:app-third-corr-1}--\eqref{eq:app-third-corr-4},
\begin{equation}\label{eq:app-SA-general-expanded}
    A_\tau(\eta)=OS_A(\tau;w)O^\top,
\end{equation}
where
\begin{equation}\label{eq:app-SA-general}
\begin{aligned}
S_A(\tau;w)
={}&\frac{1}{C_3}
  \{r\alpha\alpha^\top+\alpha s^\top+s\alpha^\top+rD+2\diag(s)\}  \\
&-\frac{\tau}{(\tau+2)\alpha_0^2(\alpha_0+1)}
  \{\alpha s^\top+s\alpha^\top+3r\alpha\alpha^\top+rD\} \\
&+\frac{2\tau^2r}{(\tau+2)(\tau+1)\alpha_0^3}\alpha\alpha^\top .
\end{aligned}
\end{equation}
The cancellation at the true concentration is the essential point. At \(\tau=\alpha_0\), the coefficients of \(\alpha s^\top+s\alpha^\top\), \(rD\), and \(r\alpha\alpha^\top\) in \eqref{eq:app-SA-general} are respectively
\[
    \frac{1}{C_3}-\frac{1}{\alpha_0(\alpha_0+1)(\alpha_0+2)},
\]
\[
    \frac{1}{C_3}-\frac{1}{\alpha_0(\alpha_0+1)(\alpha_0+2)},
\]
 and
\[
    \frac{1}{C_3}
    -\frac{3}{\alpha_0(\alpha_0+1)(\alpha_0+2)}
    +\frac{2}{\alpha_0(\alpha_0+1)(\alpha_0+2)},
\]
all of which are zero. Only the diagonal term remains:
\begin{equation}\label{eq:app-SA-true-diagonal}
    S_A(\alpha_0;w)=\frac{2}{C_3}\diag(\alpha\circ w).
\end{equation}

It remains to pass from topic coordinates to the observed operator. Since \(O\) has full column rank and \(S_B(\tau)\succ0\),
\begin{equation}\label{eq:app-B-pinv}
    (OS_B(\tau)O^\top)^+=O^{+\top}S_B(\tau)^{-1}O^+ .
\end{equation}
The Moore--Penrose equations follow directly from \(O^+O=I_k\) and from the fact that \(OO^+\) is the orthogonal projector onto the column space of \(O\). Therefore
\begin{equation}\label{eq:app-H-topic-conjugacy}
    H_\tau(\eta)=A_\tau(\eta)B_\tau^+
    =O\{S_A(\tau;w)S_B(\tau)^{-1}\}O^+ .
\end{equation}
Using \eqref{eq:app-SB-inverse} and \eqref{eq:app-SA-true-diagonal} at \(\tau=\alpha_0\),
\begin{equation}\label{eq:app-H-true-diagonal}
    H_{\alpha_0}(\eta)
    =O\left\{\frac{2}{\alpha_0+2}\diag(O^\top\eta)\right\}O^+ .
\end{equation}
This proves Lemma~\ref{lem:id-corrected-factorization}.

\subsection{Population identification proofs}\label{app:population-identification}

\begin{proof}[Proof of Theorem~\ref{thm:id-known-alpha-topic}]
Equation~\eqref{eq:app-H-true-diagonal} implies
\[
    H_{\alpha_0}(\eta)O_{:j}
    =\frac{2}{\alpha_0+2}(O^\top\eta)_jO_{:j},
    \qquad j=1,\ldots,k .
\]
Thus the topic columns are right eigenvectors. If the coordinates of \(O^\top\eta\) are distinct and nonzero, these \(k\) nonzero eigenvalues are simple and are separated from the zero eigenvalue on the null space of \(H_{\alpha_0}(\eta)\). The exceptional set is a finite union of proper hyperplanes in \(\R^d\): the hyperplanes \((O_{:j}-O_{:\ell})^\top\eta=0\), \(j\ne\ell\), and \(O_{:j}^\top\eta=0\). An absolutely continuous draw of \(\eta\) avoids this set with probability one. The simplex constraint \(\ones^\top O_{:j}=1\) fixes the scale of each eigenvector.
\end{proof}

\begin{proof}[Proof of Theorem~\ref{thm:id-beta-supervised}]
Choose \(\eta_\beta\) with \(O^\top\eta_\beta=\beta\). The bridge identities \eqref{eq:app-bridge} imply that \(A_\tau^y=A_\tau(\eta_\beta)\) for every \(\tau\), after replacing \(\langle\eta_\beta,\mu\rangle\), \(M_2\eta_\beta\), and \(T(\eta_\beta)\) by \(m_y\), \(v_y\), and \(T^y\). At \(\tau=\alpha_0\), \eqref{eq:app-SA-true-diagonal} gives
\[
    A_{\alpha_0}^y
    =O\frac{2}{C_3}\diag(\alpha\circ\beta)O^\top .
\]
Multiplication by \(B_{\alpha_0}^+=O^{+\top}C_2D^{-1}O^+\) yields
\[
    H_{\alpha_0}^y
    =O\left\{\frac{2}{\alpha_0+2}\diag(\beta)\right\}O^+ .
\]
Therefore \(O^+H_{\alpha_0}^yO=2\diag(\beta)/(\alpha_0+2)\), and \eqref{eq:id-beta-projection} follows.
\end{proof}

\begin{proof}[Proof of Theorem~\ref{thm:id-alpha-commutativity}]
At \(\tau=\alpha_0\), \eqref{eq:app-H-true-diagonal} gives simultaneous diagonalization in the topic basis, so the family \(\{H_{\alpha_0}(\eta):\eta\in\R^d\}\) is pairwise commuting.

Now fix \(\tau\ne\alpha_0\) and let \(w\in\alpha^\perp\). Then \(r=\alpha^\top w=0\), and \eqref{eq:app-SA-general} reduces to
\begin{equation}\label{eq:app-SA-alpha-perp}
    S_A(\tau;w)
    =\frac{2}{C_3}\diag(\alpha\circ w)
     +\kappa_\tau\{\alpha(\alpha\circ w)^\top+(\alpha\circ w)\alpha^\top\},
\end{equation}
where
\begin{equation}\label{eq:app-kappa-tau}
    \kappa_\tau=
\frac{2(\alpha_0-\tau)}{\alpha_0^2(\alpha_0+1)(\alpha_0+2)(\tau+2)} .
\end{equation}
Set \(s=\alpha\circ w\). Since \(s^\top\ones=\alpha^\top w=0\), \(s^\top D^{-1}=w^\top\), and \(D^{-1}\alpha=\ones\), direct multiplication with \eqref{eq:app-SB-inverse} gives
\begin{equation}\label{eq:app-Ttau-alpha-perp}
    T_\tau(w):=S_A(\tau;w)S_B(\tau)^{-1}
    =\lambda\diag(w)+\beta_\tau\alpha w^\top+\rho_\tau(\alpha\circ w)\ones^\top,
\end{equation}
where
\begin{equation}\label{eq:app-lambda-beta-rho}
    \lambda=\frac{2}{\alpha_0+2},
    \qquad
    \beta_\tau=\frac{2(\alpha_0-\tau)}{\alpha_0(\alpha_0+2)(\tau+2)},
    \qquad
    \rho_\tau=-\frac{2(\alpha_0-\tau)}{\alpha_0(\alpha_0+1)(\alpha_0+2)(\tau+2)} .
\end{equation}
For \(w_1,w_2\in\alpha^\perp\), define \(s_j=\alpha\circ w_j\), \(\Delta_j=\diag(w_j)\), \(R_j=\alpha w_j^\top\), and \(Q_j=s_j\ones^\top\). Then \(T_\tau(w_j)=\lambda\Delta_j+\beta_\tau R_j+\rho_\tau Q_j\). The like-term commutators vanish:
\[
    [\Delta_1,\Delta_2]=[R_1,R_2]=[Q_1,Q_2]=0,
\]
because \(w_j^\top\alpha=0\) and \(\ones^\top s_j=0\). The cross terms are
\begin{align}
    [\Delta_1,R_2]+[R_1,\Delta_2]
    &=s_1w_2^\top-s_2w_1^\top,\label{eq:app-cross-DR}\\
    [\Delta_1,Q_2]+[Q_1,\Delta_2]
    &=s_1w_2^\top-s_2w_1^\top,\label{eq:app-cross-DQ}\\
    [R_1,Q_2]+[Q_1,R_2]
    &=\alpha_0(s_1w_2^\top-s_2w_1^\top).\label{eq:app-cross-RQ}
\end{align}
Combining \eqref{eq:app-Ttau-alpha-perp}--\eqref{eq:app-cross-RQ},
\begin{equation}\label{eq:app-commutator-formula}
    [T_\tau(w_1),T_\tau(w_2)]
    =c(\tau,\alpha_0)
    \{(\alpha\circ w_1)w_2^\top-(\alpha\circ w_2)w_1^\top\},
\end{equation}
with
\begin{equation}\label{eq:app-c-tau-alpha}
    c(\tau,\alpha_0)
    =\frac{4(\alpha_0-\tau)(\alpha_0\tau+\alpha_0+\tau)}
    {\alpha_0(\alpha_0+1)(\alpha_0+2)^2(\tau+2)^2} .
\end{equation}
The scalar \(c(\tau,\alpha_0)\) is nonzero for every \(\tau>0\) with \(\tau\ne\alpha_0\). Moreover,
\[
    (\alpha\circ w_1)w_2^\top-(\alpha\circ w_2)w_1^\top
    =D(w_1w_2^\top-w_2w_1^\top),
\]
which is nonzero if and only if \(w_1\) and \(w_2\) are non-collinear. This proves the latent-space noncommutativity claim.

To lift the result to observed coordinates, note that \(\eta\in\mu^\perp\) implies \(O^\top\eta\in\alpha^\perp\), since \(\mu=O\alpha/\alpha_0\). Conversely, every \(w\in\alpha^\perp\) is of the form \(w=O^\top\eta\) for some \(\eta\in\mu^\perp\): take \(\eta=O(O^\top O)^{-1}w\). Finally, \eqref{eq:app-H-topic-conjugacy} gives
\[
    [H_\tau(\eta_1),H_\tau(\eta_2)]
    =O[T_\tau(w_1),T_\tau(w_2)]O^+ .
\]
Premultiplication by \(O^+\) and postmultiplication by \(O\) show that the observed-space commutator is zero only if the latent-space commutator is zero. This proves the theorem.
\end{proof}

\begin{lemma}[Finite-probe identification and local rank condition]\label{lem:finite-probe-local}
Under Assumption~\ref{ass:alpha-probes}, the population finite-probe criterion \eqref{eq:id-pop-criterion} satisfies
\[
    Q_{\mathcal I}(\tau)=0
    \quad\Longleftrightarrow\quad
    \tau=\alpha_0 .
\]
Moreover, if
\[
    g_{\mathcal I}(\tau)
    :=\stackop_{(\ell,q)\in\mathcal I}
      \vecop\{[H_\tau(v_\ell),H_\tau(v_q)]\},
    \qquad
    v_\ell=P(\mu)s_\ell,
\]
then \(G_\tau:=\partial_\tau g_{\mathcal I}(\alpha_0)\ne0\).
\end{lemma}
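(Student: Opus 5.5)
The identification claim follows directly from Theorem~\ref{thm:id-alpha-commutativity}; the derivative claim follows from the explicit commutator formula \eqref{eq:app-commutator-formula}.

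\emph{Step 1 (probes lie in the right space).} Since \(v_\ell=P(\mu)s_\ell\in\mu^\perp\), we have \(w_\ell:=O^\top v_\ell\in\alpha^\perp\), because \(\alpha^\top O^\top v_\ell=\alpha_0\mu^\top v_\ell=0\).

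\emph{Step 2 (the equivalence).} If \(\tau=\alpha_0\), every commutator vanishes by part (i) of Theorem~\ref{thm:id-alpha-commutativity}, so \(Q_{\mathcal I}(\alpha_0)=0\). Conversely, suppose \(\tau\ne\alpha_0\). Take the pair \((\ell,q)\in\mathcal I\) from Assumption~\ref{ass:alpha-probes} for which \(w_\ell,w_q\) are non-collinear. By \eqref{eq:app-H-topic-conjugacy},
\[
[H_\tau(v_\ell),H_\tau(v_q)]=O[T_\tau(w_\ell),T_\tau(w_q)]O^+=c(\tau,\alpha_0)\,OD(w_\ell w_q^\top-w_qw_\ell^\top)O^+ .
\]
This matrix is nonzero. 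The scalar \(c\) is nonzero for \(\tau\ne\alpha_0\). The matrix \(D\) is invertible. The antisymmetric matrix \(w_\ell w_q^\top-w_qw_\ell^\top\) is nonzero exactly when \(w_\ell,w_q\) are non-collinear. Finally, sandwiching by \(O\) and \(O^+\) is injective, since \(O^+(\cdot)O\) recovers the middle factor. Hence \(Q_{\mathcal I}(\tau)>0\).

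\emph{Step 3 (the derivative).} For every \(\tau>0\) and every pair \((\ell,q)\in\mathcal I\), the same formula gives
\[
\vecop[H_\tau(v_\ell),H_\tau(v_q)]=c(\tau,\alpha_0)\,\vecop\{OD(w_\ell w_q^\top-w_qw_\ell^\top)O^+\}.
\]
This also holds at \(\tau=\alpha_0\), where both sides vanish. The formula therefore holds identically in \(\tau\), including for collinear pairs, where the bracket is zero. Consequently \(g_{\mathcal I}(\tau)=c(\tau,\alpha_0)\,K\) for a \(\tau\)-free vector \(K\). The vector \(K\) is nonzero because of the non-collinear pair from Step 2. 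It follows that \(G_\tau=\partial_\tau c(\tau,\alpha_0)|_{\tau=\alpha_0}\,K\).

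Write \(c=(\alpha_0-\tau)f(\tau)\), where
\[
f(\tau)=\frac{4(\alpha_0\tau+\alpha_0+\tau)}{\alpha_0(\alpha_0+1)(\alpha_0+2)^2(\tau+2)^2}.
\]
Then
\[
\partial_\tau c(\tau,\alpha_0)\big|_{\tau=\alpha_0}=-f(\alpha_0)=-\frac{4(\alpha_0^2+2\alpha_0)}{\alpha_0(\alpha_0+1)(\alpha_0+2)^4}=-\frac{4}{(\alpha_0+1)(\alpha_0+2)^3}\ne0 .
\]
Hence \(G_\tau\ne0\).

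The only delicate point is Step 3, namely that the commutator formula \eqref{eq:app-commutator-formula} is valid for all \(\tau\) and not only for \(\tau\ne\alpha_0\). The derivation of \eqref{eq:app-Ttau-alpha-perp} is pure algebra valid for every \(\tau>0\), so this holds. One should also confirm that \(B_\tau^+\) equals \(O^{+\top}S_B(\tau)^{-1}O^+\) near \(\alpha_0\), which follows from \(S_B(\tau)\succ0\) in Lemma~\ref{lem:id-corrected-factorization}. With that, \(g_{\mathcal I}\) is smooth in \(\tau\) and the derivative computation is legitimate.
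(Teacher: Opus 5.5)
Your proposal is correct and follows the paper's own proof. As in the paper, you obtain the equivalence from Theorem~\ref{thm:id-alpha-commutativity} using the non-collinear tested pair, and \(G_\tau\ne0\) by differentiating \eqref{eq:app-commutator-formula} at \(\tau=\alpha_0\) with \(\partial_\tau c(\tau,\alpha_0)|_{\tau=\alpha_0}=-4/\{(\alpha_0+1)(\alpha_0+2)^3\}\). Your observation that the formula holds for every \(\tau>0\), so that \(g_{\mathcal I}(\tau)=c(\tau,\alpha_0)K\) with a fixed nonzero \(K\), is a useful clarification that the paper leaves implicit.
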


\begin{proof}
The equivalence follows from Theorem~\ref{thm:id-alpha-commutativity} and from the existence of one tested pair whose latent projections are non-collinear. For the derivative, differentiate \eqref{eq:app-commutator-formula} at \(\tau=\alpha_0\). Since
\begin{equation}\label{eq:app-cprime-alpha}
    \left.\partial_\tau c(\tau,\alpha_0)\right|_{\tau=\alpha_0}
    =-\frac{4}{(\alpha_0+1)(\alpha_0+2)^3}\ne0,
\end{equation}
the derivative of the tested commutator for a non-collinear pair is nonzero. Hence the stacked derivative \(G_\tau\) is nonzero.
\end{proof}

\subsection{Empirical moment estimators and primitive CLT}\label{app:empirical-moments-clt}

For one document define
\[
    \hat\mu_i=\frac1N\sum_{a=1}^N x_{ia},
    \qquad
    \hat M_{2,i}=\frac{1}{N(N-1)}\sum_{a\ne b}x_{ia}x_{ib}^\top,
\]
and the ordered-distinct third tensor
\[
    \hat{\mathcal T}_i
    =\frac{1}{N(N-1)(N-2)}
      \sum_{a,b,c\;\mathrm{distinct}}x_{ia}\otimes x_{ib}\otimes x_{ic} .
\]
Then \(\hat T_i(v)=\hat{\mathcal T}_i(I,I,v)\). Conditional on \(h_i\), exchangeability and conditional independence give
\begin{align*}
    \E(\hat\mu_i\mid h_i)&=Oh_i,\\
    \E(\hat M_{2,i}\mid h_i)&=(Oh_i)(Oh_i)^\top,\\
    \E\{\hat T_i(v)\mid h_i\}&=(Oh_i)(Oh_i)^\top\{(O^\top v)^\top h_i\}.
\end{align*}
Unconditioning gives \(\E\hat\mu_i=\mu\), \(\E\hat M_{2,i}=M_2\), and \(\E\hat T_i(v)=T(v)\). Likewise, using the response model and the positionwise orthogonality conditions in \eqref{eq:response-token-orthogonality},
\[
    \E(Y_i)=m_y,
    \qquad
    \E(Y_i\hat\mu_i)=v_y,
    \qquad
    \E(Y_i\hat M_{2,i})=T^y .
\]
The use of ordered distinct token positions removes the single-token multinomial diagonal contribution from \(\hat M_{2,i}\).

Fix any linear vectorization of matrices and third-order tensors and define the finite-dimensional document-level vector
\begin{equation}\label{eq:app-Zi-vector}
    Z_i
    :=\left(
        \hat\mu_i,
        \hat M_{2,i},
        \hat{\mathcal T}_i,
        Y_i,
        Y_i\hat\mu_i,
        Y_i\hat M_{2,i}
      \right) .
\end{equation}
Let \(Z_0=\E Z_i\), \(\bar Z=n^{-1}\sum_iZ_i\), and \(\Omega=\Var(Z_i)\). The word components are bounded because tokens are one-hot and \(N\) is fixed. Hence \(\E\|Z_i\|^2<\infty\) whenever \(\E Y^2<\infty\). Since documents are iid,
\begin{equation}\label{eq:app-Z-clt}
    \sqrt n(\bar Z-Z_0)\rightsquigarrow N(0,\Omega).
\end{equation}
The empirical covariance
\begin{equation}\label{eq:app-Omega-hat}
    \hat\Omega=\frac1n\sum_{i=1}^n(Z_i-\bar Z)(Z_i-\bar Z)^\top
\end{equation}
satisfies \(\hat\Omega\xrightarrow{p}\Omega\). This is an iid-document CLT. The within-document averages are U-statistic-type unbiased moment estimators, but no asymptotics are taken in \(N\).

\subsection{Finite-dimensional plug-in maps}\label{app:finite-dimensional-maps}

This subsection makes the estimator a deterministic finite-dimensional map of empirical moments. The map is defined on the symmetric moment subspace
\[
\mathcal Z_{\rm sym}=\{(u,M,{\cal T},m,v,S):
M=M^\top,\; S=S^\top,\; {\cal T}(a)={\cal T}(a)^\top\text{ for every }a\in\R^d\}.
\]
A generic point in this relative moment space is written as
\[
    z=(u,M,{\cal T},m,v,S),
\]
where \(u\in\R^d\), \(M\in\R^{d\times d}\), \({\cal T}\) is a third-order tensor represented through its contractions \({\cal T}(a)={\cal T}(I,I,a)\), \(m\in\R\), \(v\in\R^d\), and \(S\in\R^{d\times d}\), subject to the symmetry restrictions above. The population and empirical moments lie in \(\mathcal Z_{\rm sym}\), and all neighborhoods and derivatives below are relative to this finite-dimensional subspace. At the population value,
\[
    z_0=(\mu,M_2,{\cal T},m_y,v_y,T^y),
    \qquad
    {\cal T}(a)=T(a).
\]
At the sample value, \(z=\bar Z\). For \(u\ne0\), set \(v_\ell(u)=P(u)s_\ell\). For candidate \(\tau\), define
\begin{align*}
    B_\tau(z)&=M-\frac{\tau}{\tau+1}uu^\top,\\
    A_\tau(a;z)
    &={\cal T}(a)
      -\frac{\tau}{\tau+2}\{Mau^\top+ua^\top M+\langle a,u\rangle M\}
      +\frac{2\tau^2}{(\tau+1)(\tau+2)}\langle a,u\rangle uu^\top,\\
    A_\tau^y(z)
    &=S-\frac{\tau}{\tau+2}\{vu^\top+uv^\top+mM\}
      +\frac{2\tau^2}{(\tau+1)(\tau+2)}muu^\top .
\end{align*}
When \(B_\tau(z)\) has a positive \(k\)-th eigenvalue separated from the remaining eigenvalues, let \(B_\tau(z)^{(k)+}\) denote the rank-\(k\) truncated pseudoinverse. Then
\[
    H_\tau(a;z)=A_\tau(a;z)B_\tau(z)^{(k)+},
    \qquad
    H_\tau^y(z)=A_\tau^y(z)B_\tau(z)^{(k)+}.
\]
The finite-probe commutator map is
\begin{equation}\label{eq:app-g-finite-map}
    g(\tau,z)
    :=\stackop_{(\ell,q)\in\mathcal I}
      \vecop\left(
      [H_\tau\{v_\ell(u);z\},H_\tau\{v_q(u);z\}]
      \right),
\end{equation}
and the population criterion is \(Q_{\mathcal I}(\tau)=\|g(\tau,z_0)\|^2\). On \(\{\hat k=k\}\), the empirical criterion in \eqref{eq:est-Q} is \(\|g(\tau,\bar Z)\|^2\), apart from the finite-sample convention that it is set to \(+\infty\) when the estimated \(k\)-th corrected eigenvalue is nonpositive. That convention is asymptotically irrelevant because the positive spectral cluster of \(B_\tau\) is uniformly separated from zero on \(\Theta\).

The topic-ordering map is defined by
\[
    h^o(z,\tau):=H_\tau\{P(u)r;z\}.
\]
On a neighborhood of \((z_0,\alpha_0)\) where the \(k\) nonzero ordering eigenvalues are real, simple, mutually separated, and separated from the zero cluster, the selected nonzero right eigenvectors are labeled by decreasing real eigenvalue within this selected cluster. Their simplex-normalized columns define a matrix-valued map \(O_r(z,\tau)\). Finally,
\begin{equation}\label{eq:app-beta-full-map}
    b_r(z,\tau)
    :=\frac{\tau+2}{2}\diag\left[
       O_r(z,\tau)^+ H_\tau^y(z)O_r(z,\tau)
    \right].
\end{equation}
The estimator \(\hat\beta(r)\) is \(b_r(\bar Z,\hat\alpha_0)\).

\subsection{Smoothness lemmas}\label{app:smoothness}

The following facts replace a coordinate-by-coordinate derivative catalogue. They are standard finite-dimensional perturbation results; they are stated here to make clear where differentiability enters the delta-method proof.

\begin{lemma}[Smooth truncated pseudoinverse]\label{lem:app-truncated-pinv}
Let \(\mathcal U\) be an open subset of a finite-dimensional Euclidean space, and let
\(B:\mathcal U\to\R^{d\times d}\) be a \(C^r\) map, \(r\ge1\), whose values are real
symmetric matrices. Suppose that, for every \(\theta\in\mathcal U\), the top \(k\)
eigenvalues of \(B(\theta)\) form a positive spectral cluster separated from the
remaining spectrum: \(\lambda_k\{B(\theta)\}>0\) and, if \(k<d\),
\(\lambda_k\{B(\theta)\}>\lambda_{k+1}\{B(\theta)\}\). Then the rank-\(k\)
truncated pseudoinverse \(B(\theta)^{(k)+}\) is \(C^r\) on \(\mathcal U\). In
particular, if \(B\) is \(C^2\) in a scalar coordinate and \(C^1\) jointly in all
coordinates, then \(B(\theta)^{(k)+}\) has the same regularity.
\end{lemma}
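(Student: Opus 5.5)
The plan is to write the truncated pseudoinverse as a spectral projector composed with a smooth inverse on the selected cluster, and to use the Riesz contour-integral representation. Fix \(\theta_0\in\mathcal U\). By continuity of eigenvalues (Weyl's inequality), there is a neighborhood \(\mathcal V\ni\theta_0\) and a positively oriented contour \(\Gamma\) in the complex plane with the following properties. The contour encloses the top \(k\) eigenvalues of \(B(\theta)\) for every \(\theta\in\mathcal V\). It excludes \(\lambda_{k+1}\{B(\theta)\},\ldots,\lambda_d\{B(\theta)\}\). It stays away from \(0\). Concretely, I will take \(\Gamma\) to be a circle or rectangle whose real part ranges over \([c,C]\), where
\[
    \max\bigl(0,\lambda_{k+1}\{B(\theta_0)\}\bigr)<c<\lambda_k\{B(\theta_0)\}
    \quad\text{and}\quad
    C>\lambda_1\{B(\theta_0)\},
\]
and shrink \(\mathcal V\) so that the spectra stay a fixed distance from \(\Gamma\). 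This uses both hypotheses: \(\lambda_k>0\) and the gap \(\lambda_k>\lambda_{k+1}\).

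On \(\mathcal V\), the truncated pseudoinverse can be written as
\[
    B(\theta)^{(k)+}=\frac{1}{2\pi i}\oint_\Gamma z^{-1}\{zI-B(\theta)\}^{-1}\,dz .
\]
To verify this formula I will use the spectral decomposition \(B=\sum_j\lambda_ju_ju_j^\top\). The integrand equals \(\sum_j z^{-1}(z-\lambda_j)^{-1}u_ju_j^\top\). The residue calculus then gives \(\lambda_j^{-1}u_ju_j^\top\) for each enclosed \(\lambda_j\), and zero for each excluded \(\lambda_j\), because \(z=0\) lies outside \(\Gamma\). So the integral reproduces exactly the definition of \(B^{(k)+}\) in \eqref{eq:est-truncated-inverse}, and it is insensitive to repeated eigenvalues inside the cluster or the choice of eigenbasis.

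Smoothness then follows in three steps. First, the map \((\theta,z)\mapsto\{zI-B(\theta)\}^{-1}\) is \(C^r\) on \(\mathcal V\times\Gamma\), since matrix inversion is real-analytic on invertible matrices and \(B\) is \(C^r\). Second, the resolvent is uniformly bounded on the compact set \(\overline{\mathcal V'}\times\Gamma\), for a slightly smaller closed neighborhood \(\mathcal V'\), because of the uniform distance between the spectrum and \(\Gamma\); the same holds for each of its \(\theta\)-derivatives up to order \(r\), by the formula for derivatives of an inverse. Third, differentiation under the integral over the compact contour is therefore justified, giving \(C^r\) regularity on \(\mathcal V'\). Since \(\theta_0\) is arbitrary, this gives \(C^r\) on \(\mathcal U\). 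The final claim, mixed regularity (\(C^2\) in a scalar coordinate, \(C^1\) jointly), follows from the same argument: each partial derivative of the integrand in the scalar coordinate is a polynomial in the resolvent and the corresponding derivatives of \(B\), and it remains continuous whenever those derivatives of \(B\) are.

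The main obstacle is the uniform local choice of \(\Gamma\): I must ensure that the enclosed eigenvalue count stays exactly \(k\) and that \(0\) stays outside throughout a neighborhood. That follows from Weyl's perturbation bound \(|\lambda_j\{B(\theta)\}-\lambda_j\{B(\theta_0)\}|\le\|B(\theta)-B(\theta_0)\|\) together with continuity of \(B\). Everything else is routine.
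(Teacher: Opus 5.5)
Your proposal is correct and follows essentially the same route as the paper: a local Riesz contour enclosing the positive top-\(k\) cluster while excluding zero and the rest of the spectrum, the representation \(B(\theta)^{(k)+}=(2\pi i)^{-1}\oint_\Gamma z^{-1}\{zI-B(\theta)\}^{-1}\,dz\), and differentiation of the resolvent under the integral. Your residue verification of the formula and your Weyl-based explicit choice of \(\Gamma\) fill in details that the paper's proof leaves implicit.
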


\begin{proof}
The claim is local. Fix \(\theta_0\in\mathcal U\), and choose a positively oriented
contour \(\Gamma\) in the complex plane that encloses the positive top-\(k\) cluster
of \(B(\theta_0)\) and excludes the remaining spectrum and zero. By spectral
continuity, the same contour separates the cluster for all \(\theta\) in a
neighborhood of \(\theta_0\). On this neighborhood,
\[
    B(\theta)^{(k)+}
    =\frac{1}{2\pi i}\int_\Gamma
      \zeta^{-1}\{\zeta I-B(\theta)\}^{-1}\,d\zeta .
\]
The resolvent is \(C^r\) in \(\theta\) because inversion is smooth on nonsingular
matrices and \(\zeta I-B(\theta)\) is nonsingular uniformly for \(\zeta\in\Gamma\).
Differentiation under the finite contour integral gives the asserted smoothness.
When the lemma is invoked on a compact set \(\mathcal U_0\subset\mathcal U\) on which the spectral-gap hypothesis holds uniformly, finitely many such local contours cover \(\mathcal U_0\), and the same conclusion (with derivative bounds uniform on \(\mathcal U_0\)) follows on each piece of the cover.
\end{proof}

\begin{lemma}[Smooth normalized right eigenvectors]\label{lem:app-eigenvector-smooth}
Let \(H(t)\) be a \(C^1\) curve of real matrices. Suppose \(H(0)o_j=\lambda_jo_j\), where \(\lambda_j\) is simple and \(\ones^\top o_j=1\). Then the normalized right eigenvector is \(C^1\) near zero.
\end{lemma}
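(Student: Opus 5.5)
The plan is to reduce the claim to the implicit function theorem applied to the eigen-equation together with the affine normalization. Write \(\lambda(t)\) and \(o(t)\) for the unknown eigenvalue and eigenvector. Define the map
\[
    F(o,\lambda,t):=\begin{pmatrix}(H(t)-\lambda I)o\\ \ones^\top o-1\end{pmatrix}\in\R^{d+1},
\]
with \((o,\lambda)\in\R^d\times\R\). This map is \(C^1\) jointly because \(H(\cdot)\) is \(C^1\) and \(F\) is polynomial in \((o,\lambda)\) given \(H\). At \(t=0\) we have \(F(o_j,\lambda_j,0)=0\). It therefore suffices to show that the partial Jacobian in \((o,\lambda)\) at this point,
\[
    J=\begin{pmatrix}H(0)-\lambda_jI & -o_j\\ \ones^\top & 0\end{pmatrix},
\]
is nonsingular. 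The implicit function theorem then yields a unique \(C^1\) branch \((o(t),\lambda(t))\) near \(0\) solving \(F=0\). That is, \(o(t)\) is a right eigenvector of \(H(t)\) with eigenvalue \(\lambda(t)\) and with \(\ones^\top o(t)=1\).

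The main step is the nonsingularity of \(J\), and this is where simplicity is used. Suppose \(J(x,c)^\top=0\), that is, \((H(0)-\lambda_jI)x=c\,o_j\) and \(\ones^\top x=0\). Let \(\ell_j\) be a left eigenvector for \(\lambda_j\). Algebraic simplicity gives \(\ell_j^\top o_j\ne0\), since a simple eigenvalue has a one-dimensional generalized eigenspace with no Jordan chain. Multiplying the first equation on the left by \(\ell_j^\top\) gives \(0=c\,\ell_j^\top o_j\), so \(c=0\). Then \(x\) lies in the kernel of \(H(0)-\lambda_jI\), which is \(\mathrm{span}(o_j)\) by simplicity, so \(x=a\,o_j\). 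The condition \(\ones^\top x=0\) together with \(\ones^\top o_j=1\) forces \(a=0\). Hence \(J\) is injective and therefore invertible. I read ``simple'' as algebraically simple. If only geometric simplicity were intended, the argument would fail at exactly this point, so I would state this convention explicitly. In the paper's application this is harmless, because the population ordering eigenvalues are distinct and nonzero by Assumption~\ref{ass:ordering}.

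Finally, I would note that the branch is the one selected by the estimator. By continuity of eigenvalues, for small \(t\) the eigenvalue \(\lambda(t)\) stays simple and stays separated from the other eigenvalues of \(H(t)\). The normalized eigenvector for \(\lambda(t)\) is therefore unique, because \(\ones^\top o_j\ne0\) persists and pins down the scale. Consequently the labeling by decreasing real eigenvalue coincides with the implicit-function branch. Real-valuedness is automatic: \(H(t)\) is real, so the complex conjugate of \((o(t),\lambda(t))\) also solves \(F=0\), and by uniqueness it equals \((o(t),\lambda(t))\). Smoothness in several parameters, such as \((z,\tau)\), follows by the same argument, with \(t\) replaced by a vector parameter.
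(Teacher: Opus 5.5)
Your proof is correct and follows the paper's argument. You use the same map \(F\), the same bordered Jacobian, and the same left-eigenvector fact \(\ell_j^\top o_j\ne0\) to show the kernel is trivial; the only cosmetic difference is that you eliminate \(c\) directly by premultiplying with \(\ell_j^\top\), where the paper splits into the cases \(\mu=0\) and \(\mu\ne0\). Your remark that ``simple'' must mean algebraically simple is apt, since the paper's description of the range of \(H(0)-\lambda_jI\) as \(\{u:\ell_j^\top u=0\}\) relies on the same convention without saying so.
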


\begin{proof}
The hypothesis \(\ones^\top o_j=1\) implies in particular \(\ones^\top o_j\ne0\), which is what makes the affine normalization transverse to the eigenline. Apply the implicit function theorem to
\[
    F(o,\lambda;t):=\bigl(H(t)o-\lambda o,\;\ones^\top o-1\bigr)=0 .
\]
The Jacobian with respect to \((o,\lambda)\) at \((o_j,\lambda_j,0)\) is
\[
    J=\begin{pmatrix} H(0)-\lambda_j I & -o_j\\ \ones^\top & 0\end{pmatrix}.
\]
Take \((v,\mu)\in\ker J\), so that \((H(0)-\lambda_j I)v=\mu\,o_j\) and \(\ones^\top v=0\). If \(\mu=0\), then \(v\) lies in the right-eigenspace of \(\lambda_j\), which is one-dimensional because \(\lambda_j\) is simple; combined with \(\ones^\top v=0\) and \(\ones^\top o_j=1\), this forces \(v=0\). If \(\mu\ne0\), solvability of \((H(0)-\lambda_j I)v=\mu\,o_j\) requires \(o_j\) to lie in the range of \(H(0)-\lambda_j I\). For a simple eigenvalue, that range is \(\{u:\ell_j^\top u=0\}\), where \(\ell_j\) is the left eigenvector dual to \(o_j\) with \(\ell_j^\top o_j=1\); hence \(o_j\) is not in the range, contradicting \(\mu\ne0\). Therefore \(\ker J=\{0\}\), \(J\) is nonsingular, and the implicit function theorem yields a \(C^1\) map \(t\mapsto(o_j(t),\lambda_j(t))\) near zero.
\end{proof}

\begin{lemma}[Smooth estimator maps]\label{lem:app-smooth-estimator-maps}
Under Assumptions~\ref{ass:finite-lda} and \ref{ass:alpha-probes}, the following
commutator maps are smooth enough for the concentration expansion: there is a
neighborhood \({\cal K}\) of \(z_0\) such that, for each fixed probe \(s_\ell\),
\[
    (\tau,z)\mapsto H_\tau(P(u)s_\ell;z),
    \qquad
    (\tau,z)\mapsto H_\tau^y(z)
\]
are twice continuously differentiable in \(\tau\) and continuously differentiable in \(z\)
on \(\Theta\times{\cal K}\). Consequently the finite-probe commutator map \(g(\tau,z)\)
in \eqref{eq:app-g-finite-map} has the same smoothness on \(\Theta\times{\cal K}\), and the
first-order-condition map
\[
        S(\tau,z):=\{\partial_\tau g(\tau,z)\}^\top g(\tau,z)
\]
is continuously differentiable in a neighborhood of \((\alpha_0,z_0)\).

If Assumption~\ref{ass:ordering} also holds, then there is a neighborhood \({\cal N}\) of
\((\alpha_0,z_0)\) such that the local topic map \((\tau,z)\mapsto O_r(z,\tau)\) and the
coefficient map \((\tau,z)\mapsto b_r(z,\tau)\) are continuously differentiable on \({\cal N}\).
\end{lemma}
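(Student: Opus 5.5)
The plan is to treat every map as a composition of elementary smooth operations (polynomials in \((\tau,z)\), rational functions of \(\tau\) with nonvanishing denominators on \(\Theta\subset(0,\infty)\), the projection \(u\mapsto P(u)\), the truncated pseudoinverse, and normalized simple eigenvectors), and then read smoothness off the chain rule.

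\emph{Corrected moments.} Away from \(u=0\), the map \(u\mapsto P(u)=I_d-uu^\top/\|u\|^2\) is \(C^\infty\). Since \(\mu\ne0\) (it lies in the simplex), it is smooth on a neighborhood of \(z_0\). Hence \(v_\ell(u)=P(u)s_\ell\) is smooth in \(z\). The matrices \(B_\tau(z)\), \(A_\tau(a;z)\) and \(A_\tau^y(z)\) are polynomial in \((z,a)\) and have coefficients \(\tau/(\tau+1)\), \(\tau/(\tau+2)\) and \(2\tau^2/\{(\tau+1)(\tau+2)\}\). These coefficients are \(C^\infty\) on \((0,\infty)\). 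Therefore \((\tau,z)\mapsto A_\tau(v_\ell(u);z)\) and \((\tau,z)\mapsto A^y_\tau(z)\) are \(C^\infty\).

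\emph{Truncated inverse, uniformly on \(\Theta\).} This is the main obstacle, because \(\Theta\) is a whole compact interval rather than a neighborhood of \(\alpha_0\). I would first use \Cref{lem:id-corrected-factorization}: \(B_\tau(z_0)=OS_B(\tau)O^\top\) with \(S_B(\tau)\succ0\) for every \(\tau>0\). It follows that \(B_\tau(z_0)\) has exactly \(k\) positive eigenvalues and \(d-k\) zero eigenvalues. The \(k\)-th eigenvalue is continuous in \(\tau\), so compactness of \(\Theta\) gives \(\inf_{\tau\in\Theta}\lambda_k\{B_\tau(z_0)\}=:2\delta>0\).

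Next, \((\tau,z)\mapsto B_\tau(z)\) is continuous and \(\Theta\) is compact. Weyl's inequality then yields a neighborhood \(\mathcal K\) of \(z_0\) on which, for all \(\tau\in\Theta\), \(\lambda_k\{B_\tau(z)\}>3\delta/2\) and \(\lambda_{k+1}\{B_\tau(z)\}<\delta/2\). To apply the lemma on an open set, I extend to an open interval \(\Theta'\supset\Theta\) (slightly larger, still in \((0,\infty)\)) on which the same bounds hold. Applying \Cref{lem:app-truncated-pinv} on \(\mathcal U=\Theta'\times\mathcal K\) shows that \(B_\tau(z)^{(k)+}\) is \(C^\infty\) there. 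The lemma's compact-cover remark gives uniform derivative bounds.

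\emph{Commutator and first-order condition.} Multiplying the smooth pieces gives \(H_\tau(v_\ell(u);z)\) and \(H_\tau^y(z)\) as \(C^\infty\) maps (in particular \(C^2\) in \(\tau\) and \(C^1\) in \(z\)). The commutators and their stacked vectorization \(g(\tau,z)\) in \eqref{eq:app-g-finite-map} are polynomial in these maps, so \(g\) is \(C^\infty\). Then \(S=(\partial_\tau g)^\top g\) is \(C^1\) near \((\alpha_0,z_0)\). Assumption~\ref{ass:alpha-probes} is not needed for smoothness itself; it is only what makes \(S\) useful later through \Cref{lem:finite-probe-local}.

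\emph{Topic and coefficient maps.} Under Assumption~\ref{ass:ordering}, set \(\eta=P(\mu)r\). Equation \eqref{eq:app-H-true-diagonal} shows that \(h^o(z_0,\alpha_0)=O\{\lambda\diag(O^\top\eta)\}O^+\) has \(k\) real, simple, distinct nonzero eigenvalues \(\lambda(O^\top\eta)_j\), with eigenvectors \(O_{:j}\) satisfying \(\ones^\top O_{:j}=1\). The remaining eigenvalue \(0\) has multiplicity \(d-k\).

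Since \(h^o\) is smooth in \((\tau,z)\), I apply \Cref{lem:app-eigenvector-smooth} to each \(j\). Its implicit-function proof works verbatim with a multi-parameter \((\tau,z)\) in place of \(t\). It gives \(C^1\) maps \((o_j,\lambda_j)\) on a neighborhood \(\mathcal N\). The perturbed eigenvalues are roots of the real characteristic polynomial and come from simple real roots, so they remain real. Shrinking \(\mathcal N\) keeps them mutually separated, separated from the zero cluster, and in the same decreasing order. Consequently the labeling is locally constant and \(O_r(z,\tau)=[o_1,\ldots,o_k]\) is \(C^1\).

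Near \(O\) this matrix has full column rank, so \(O_r^+=(O_r^\top O_r)^{-1}O_r^\top\) is \(C^1\). Finally, \(b_r\) in \eqref{eq:app-beta-full-map} is a product of \(C^1\) maps with \((\tau+2)/2\) and the linear diagonal extraction. Hence \(b_r\) is \(C^1\) on \(\mathcal N\).
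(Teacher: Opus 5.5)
Your proposal is correct and follows essentially the same route as the paper's proof. Both arguments rest on the same ingredients:
\begin{itemize}
\item the polynomial/rational structure of the corrected moments;
\item a uniform spectral gap for \(B_\tau(z)\) over \(\Theta\times\mathcal K\), obtained from \(B_\tau=OS_B(\tau)O^\top\) with \(S_B(\tau)\succ0\) and compactness of \(\Theta\);
\item the contour-integral truncated-pseudoinverse lemma;
\item the implicit-function eigenvector lemma near \((\alpha_0,z_0)\);
\item smoothness of the Moore--Penrose inverse on the full-column-rank stratum.
\end{itemize}

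Your extra details make explicit what the paper leaves implicit. These are the open enlargement \(\Theta'\supset\Theta\), Weyl's inequality for the uniform gap, and the check that the selected eigenvalues stay real and keep their order. One small correction: compactness of \(\Theta\subset(0,\infty)\), which you do use, is itself part of Assumption~\ref{ass:alpha-probes}. So only the non-collinearity part of that assumption is unused here.
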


\begin{proof}
The maps \((z,\tau)\mapsto B_\tau(z)\), \(A_\tau(a;z)\), \(A_\tau^y(z)\), and \(P(u)s_\ell\)
are rational or polynomial maps of finite-dimensional arguments and are smooth when \(u\ne0\)
and \(\tau>0\). By \eqref{eq:app-SB}, \(B_\tau(z_0)\) has exactly \(k\) positive eigenvalues
for every \(\tau\in\Theta\), and the smallest positive eigenvalue is bounded away from zero
uniformly on \(\Theta\) by Lemma~\ref{lem:app-uniform-spectral-gap}. After shrinking
\({\cal K}\), the same spectral separation holds for \(B_\tau(z)\) uniformly over
\(\Theta\times{\cal K}\). Lemma~\ref{lem:app-truncated-pinv} then gives the required
\(C^2\)-in-\(\tau\) and \(C^1\)-in-\(z\) smoothness of the truncated inverse on this set, hence of \(H_\tau\), \(H_\tau^y\), and the
finite collection of commutators defining \(g\). The displayed smoothness of \(S\) follows by
composition. If Assumption~\ref{ass:ordering} holds, then at \((z_0,\alpha_0)\) the ordering
operator has simple nonzero eigenvalues that are separated from the zero cluster, so
Lemma~\ref{lem:app-eigenvector-smooth} gives differentiability of \(O_r\) after
restricting to a local neighborhood \({\cal N}\). The
Moore--Penrose inverse is smooth on the full-column-rank stratum, so \(b_r\) is differentiable
by composition.
\end{proof}

\begin{lemma}[Topic perturbations do not enter the diagonal coefficient map]\label{lem:app-beta-cancellation}
Let
\[
    b(\tau,O,H)=\frac{\tau+2}{2}\diag(O^+HO),
\]
where \(O\) has full column rank. At the true supervised operator \(H^y=O\Lambda_yO^+\), with \(\Lambda_y=2\diag(\beta)/(\alpha_0+2)\), the first-order contribution of a perturbation of \(O\) to the diagonal of \(O^+H^yO\) is zero. Consequently, for perturbations \((\dot\tau,\dot O,\dot H)\),
\begin{equation}\label{eq:app-db-truth}
    Db[\dot\tau,\dot O,\dot H]
    =\frac{\dot\tau}{\alpha_0+2}\beta
     +\frac{\alpha_0+2}{2}\diag(O^+\dot H O)
\end{equation}
at the truth.
\end{lemma}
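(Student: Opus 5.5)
The argument is a direct first-order expansion of \(b(\tau,O,H)\) around \((\alpha_0,O,H^y)\), using the product rule on \(O^+HO\) and the fact that \(O^+H^yO=\Lambda_y\) is diagonal. Write \(F(O,H):=O^+HO\), so that \(b=\tfrac{\tau+2}{2}\diag F\). Since \(O\) has full column rank, the map \(O\mapsto O^+=(O^\top O)^{-1}O^\top\) is smooth on a neighborhood. Because \(O^+O=I_k\) holds identically on this stratum, differentiating gives
\[
\dot{(O^+)}\,O=-O^+\dot O .
\]
This identity is the only fact about the derivative of the Moore--Penrose inverse that I need. The full derivative of \(O^+\), including the component orthogonal to the column space of \(O\), never enters, because \(\dot{(O^+)}\) always appears multiplied on the right by \(H^y\), and \(H^y=O\Lambda_yO^+\) has range inside \(\operatorname{col}(O)\).

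Next I differentiate \(F\) at \((O,H^y)\):
\[
\dot F=\dot{(O^+)}H^yO+O^+\dot H O+O^+H^y\dot O .
\]
Substituting \(H^y=O\Lambda_yO^+\) and using \(O^+O=I_k\), the first term becomes \(\dot{(O^+)}O\Lambda_y=-O^+\dot O\,\Lambda_y\), and the third term becomes \(\Lambda_yO^+\dot O\). Writing \(E:=O^+\dot O\in\R^{k\times k}\), the total contribution of \(\dot O\) is the commutator \(\Lambda_yE-E\Lambda_y=[\Lambda_y,E]\). Its \((j,j)\) entry is \((\Lambda_y)_{jj}E_{jj}-E_{jj}(\Lambda_y)_{jj}=0\), since \(\Lambda_y\) is diagonal. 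So \(\diag[\Lambda_y,E]=0\) for every perturbation \(\dot O\), and this holds without assuming the entries of \(\beta\) are distinct. The remaining term contributes \(\diag(O^+\dot HO)\).

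Finally I add the \(\tau\) direction. The prefactor has derivative \(\tfrac{\dot\tau}{2}\), and at the truth it multiplies \(\diag F=\diag\Lambda_y=2\beta/(\alpha_0+2)\), giving \(\dot\tau\,\beta/(\alpha_0+2)\). The \(F\)-variation is multiplied by \((\alpha_0+2)/2\). Combining these yields \eqref{eq:app-db-truth}. Differentiability of \(b\) itself, so that these directional derivatives assemble into a genuine derivative, follows from smoothness of \(O^+\) on the full-column-rank stratum, as in Lemma~\ref{lem:app-smooth-estimator-maps}. There is no real obstacle; the only point needing care is to justify \(\dot{(O^+)}O=-O^+\dot O\) and to observe that the orthogonal-complement part of \(\dot{(O^+)}\) is annihilated by \(H^y\).
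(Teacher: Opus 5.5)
Your proposal is correct and follows essentially the same route as the paper's proof: differentiate \(O^+O=I_k\) to get \(D(O^+)[\dot O]\,O=-O^+\dot O\), use \(H^yO=O\Lambda_y\) and \(O^+H^y=\Lambda_yO^+\) to reduce the \(\dot O\) contribution to the commutator \([\Lambda_y,O^+\dot O]\), whose diagonal vanishes, and then add the prefactor and \(\dot H\) terms. Your remark that the component of \(D(O^+)[\dot O]\) orthogonal to the column space of \(O\) is annihilated is just another way of stating the paper's substitution \(H^yO=O\Lambda_y\).
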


\begin{proof}
Let \(R=O^+\dot O\). Differentiating \(O^+O=I_k\) gives \(D(O^+)[\dot O]\,O=-R\). At the population value, \(H^y=O\Lambda_yO^+\) with \(O^+O=I_k\), so \(H^yO=O\Lambda_y\) and \(O^+H^y=\Lambda_yO^+\). Hence the first-order contribution of \(\dot O\) to \(O^+H^yO\) is
\[
    D(O^+)[\dot O]\,H^yO + O^+H^y\,\dot O
    = D(O^+)[\dot O]\,O\Lambda_y + \Lambda_y\,O^+\dot O
    =-R\Lambda_y+\Lambda_y R
    =[\Lambda_y,R].
\]
This is a commutator with a diagonal matrix, so its diagonal is zero. Differentiating the multiplicative factor \((\tau+2)/2\) gives the first term in \eqref{eq:app-db-truth}, and differentiating \(H\) gives the second.
\end{proof}

\subsection{A smooth minimum-distance expansion}\label{app:minimum-distance-lemma}

The concentration estimator is a one-dimensional smooth minimum-distance estimator.

\begin{lemma}[One-dimensional minimum-distance expansion]\label{lem:app-md-expansion}
Let \(g(\tau,z)\in\R^q\), and define
\[
        S(\tau,z):=\{\partial_\tau g(\tau,z)\}^\top g(\tau,z).
\]
Suppose \(S\) is continuously differentiable in a neighborhood of \((\alpha_0,z_0)\),
\(g(\alpha_0,z_0)=0\), and \(G_\tau:=\partial_\tau g(\alpha_0,z_0)\ne0\). Let
\(G_z:=D_zg(\alpha_0,z_0)\). If \(\hat z-z_0=O_p(n^{-1/2})\), \(\hat\alpha\to_p\alpha_0\), and,
with probability tending to one, \(\hat\alpha\) is an interior solution of
\[
        S(\hat\alpha,\hat z)=0,
\]
then
\begin{equation}\label{eq:app-md-expansion}
    \sqrt n(\hat\alpha-\alpha_0)
    =-(G_\tau^\top G_\tau)^{-1}G_\tau^\top G_z\sqrt n(\hat z-z_0)+o_p(1).
\end{equation}
\end{lemma}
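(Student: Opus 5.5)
The plan is to Taylor-expand the first-order condition \(S(\hat\alpha,\hat z)=0\) around \((\alpha_0,z_0)\) and then evaluate the resulting derivatives at the truth, where \(g(\alpha_0,z_0)=0\) makes most terms drop out.

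First, compute the partial derivatives of \(S\) at \((\alpha_0,z_0)\). Because \(S=\{\partial_\tau g\}^\top g\), the product rule gives
\[
    \partial_\tau S=\{\partial_\tau^2 g\}^\top g+\{\partial_\tau g\}^\top\partial_\tau g,
    \qquad
    D_zS=\{D_z\partial_\tau g\}^\top g+\{\partial_\tau g\}^\top D_zg .
\]
Since \(g(\alpha_0,z_0)=0\), the second-derivative terms are multiplied by zero. Hence \(\partial_\tau S(\alpha_0,z_0)=G_\tau^\top G_\tau>0\) and \(D_zS(\alpha_0,z_0)=G_\tau^\top G_z\).

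To make this rigorous without assuming \(g\) is \(C^2\) jointly, I would work directly with the hypothesis that \(S\) is \(C^1\) and simply identify its derivative at the point. The value of \(D_zS\) there needs the mixed derivative \(D_z\partial_\tau g\) only to exist, since it multiplies \(g=0\). In our application, Lemma~\ref{lem:app-smooth-estimator-maps} supplies the needed smoothness.

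Second, use the mean value theorem. With probability tending to one, \(\hat\alpha\) is interior and \(S(\hat\alpha,\hat z)=0\). Also \(S(\alpha_0,z_0)=0\) because \(g(\alpha_0,z_0)=0\). Expanding componentwise (\(S\) is scalar) along the segment joining \((\alpha_0,z_0)\) to \((\hat\alpha,\hat z)\) gives
\[
    0=\partial_\tau S(\bar\tau,\bar z)(\hat\alpha-\alpha_0)+D_zS(\bar\tau,\bar z)(\hat z-z_0)
\]
for some intermediate point \((\bar\tau,\bar z)\). That point converges in probability to \((\alpha_0,z_0)\) because \(\hat\alpha\to_p\alpha_0\) and \(\hat z\to_p z_0\).

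Third, apply continuity of the derivatives. Continuity of \(DS\) gives \(\partial_\tau S(\bar\tau,\bar z)\to_p G_\tau^\top G_\tau\), which is nonzero, so the coefficient is invertible with probability tending to one. Solving, multiplying by \(\sqrt n\), and using \(\sqrt n(\hat z-z_0)=O_p(1)\) yields
\[
    \sqrt n(\hat\alpha-\alpha_0)=-(G_\tau^\top G_\tau)^{-1}G_\tau^\top G_z\sqrt n(\hat z-z_0)+o_p(1).
\]
The \(o_p(1)\) term comes from the difference between the random coefficients and their limits, multiplied by an \(O_p(1)\) factor.

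The main obstacle is the first step, identifying \(DS\) at the truth. It relies on \(g(\alpha_0,z_0)=0\) to kill the second-derivative terms. The key check is that the stated hypothesis (\(S\) is \(C^1\)) is enough to carry this out without requiring joint \(C^2\) smoothness of \(g\).
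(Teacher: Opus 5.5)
Your proof is correct, but it follows a different route from the paper's. Both arguments start from the same product-rule computation, $\partial_\tau S(\alpha_0,z_0)=G_\tau^\top G_\tau>0$ and $D_zS(\alpha_0,z_0)=G_\tau^\top G_z$.

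The paper then applies the implicit function theorem to $S(\tau,z)=0$. This gives a $C^1$ map $a(z)$ with $a(z_0)=\alpha_0$, which is the unique zero of $S(\cdot,z)$ in a neighborhood $\mathcal V$ of $\alpha_0$. Consistency of $\hat\alpha$ and this local uniqueness force $\hat\alpha=a(\hat z)$ with probability tending to one. The expansion is then the delta method for $a$, with $Da(z_0)=-(G_\tau^\top G_\tau)^{-1}G_\tau^\top G_z$.

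Your scalar mean-value expansion of the first-order condition dispenses with the uniqueness step. It applies to any consistent root, and the root-$n$ rate of $\hat\alpha$ comes out of the expansion itself. A single intermediate point is legitimate because $S$ is real-valued. If you want to avoid measurability questions about $(\bar\tau,\bar z)$, use the integral form $\int_0^1 DS\{(\alpha_0,z_0)+t(\hat\alpha-\alpha_0,\hat z-z_0)\}\,dt$ on a convex neighborhood; this converges in probability to $DS(\alpha_0,z_0)$.

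The paper's route buys structure: it exhibits $\hat\alpha_0$ as a locally $C^1$ function of $\bar Z$. Then $D_\alpha$ is literally the derivative of a map, matching the chain rule $D_{\beta,r}=D^{\rm fix}_{\beta,r}+D_{\beta,\alpha}D_\alpha$ and the implicit-derivative description in the sandwich implementation. However, Steps 1--3 of the main proof use only the linear representation and the rate of $\hat\alpha_0$, which your argument delivers.

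On the concern you flag, neither proof actually needs second derivatives of $g$ at the truth. Since $g(\alpha_0,z_0)=0$, you have $S(\alpha_0,z)=\partial_\tau g(\alpha_0,z)^\top\{g(\alpha_0,z)-g(\alpha_0,z_0)\}$. So continuity of $\partial_\tau g$ at $(\alpha_0,z_0)$ and differentiability of $g$ already give $D_zS(\alpha_0,z_0)=G_\tau^\top G_z$. The same device gives $\partial_\tau S(\alpha_0,z_0)=G_\tau^\top G_\tau$.
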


\begin{proof}
At the population value, \(g(\alpha_0,z_0)=0\), so \(S(\alpha_0,z_0)=0\). Differentiating \(S=(\partial_\tau g)^\top g\),
\[
    \partial_\tau S=\bigl(\partial_\tau^2 g\bigr)^\top g+(\partial_\tau g)^\top \partial_\tau g,
    \qquad
    D_z S=\bigl(D_z\partial_\tau g\bigr)^\top g+(\partial_\tau g)^\top D_z g .
\]
Evaluating at \((\alpha_0,z_0)\) and using \(g(\alpha_0,z_0)=0\) to kill the leading terms,
\[
        \partial_\tau S(\alpha_0,z_0)=G_\tau^\top G_\tau>0,
        \qquad
        D_zS(\alpha_0,z_0)=G_\tau^\top G_z .
\]
The strict positivity \(\partial_\tau S(\alpha_0,z_0)>0\) means that \(S(\cdot,z_0)\) crosses zero with a positive derivative at \(\alpha_0\); hence on a sufficiently small open neighborhood \(\mathcal V\) of \(\alpha_0\), \(\alpha_0\) is the unique zero of \(S(\cdot,z_0)\) in \(\mathcal V\). The implicit function theorem then gives an open neighborhood of \(z_0\) and a \(C^1\) map \(a(z)\) with \(a(z_0)=\alpha_0\), \(a(z)\in\mathcal V\), and \(S\{a(z),z\}=0\), with \(a(z)\) the unique solution in \(\mathcal V\). Its derivative is
\[
        Da(z_0)=-(G_\tau^\top G_\tau)^{-1}G_\tau^\top G_z.
\]
Because \(\hat\alpha\to_p\alpha_0\), eventually \(\hat\alpha\in\mathcal V\); because the first-order condition \(S(\hat\alpha,\hat z)=0\) holds with probability tending to one and \(\hat z\) lies in the neighborhood of \(z_0\) on which \(a(\cdot)\) is defined, local uniqueness in \(\mathcal V\) forces \(\hat\alpha=a(\hat z)\) with probability tending to one. A first-order expansion of \(a(\hat z)\) around \(z_0\) gives \eqref{eq:app-md-expansion}.
\end{proof}

\subsection{Consistency lemmas}\label{app:consistency-lemmas}

\begin{lemma}[Uniform plug-in continuity]\label{lem:app-uniform-plugin}
Let \(\Theta\) be compact and let \(K\) be a compact convex subset of the finite-dimensional moment space. Suppose \(f(\tau,z)\) is continuously differentiable in \(z\) on an open set containing \(\Theta\times K\). Then there exists \(L<\infty\) such that
\[
    \sup_{\tau\in\Theta}\|f(\tau,z_1)-f(\tau,z_2)\|
    \le L\|z_1-z_2\|,
    \qquad z_1,z_2\in K .
\]
Consequently, if \(\bar Z\xrightarrow{p}z_0\in K\) and \(\Pr(\bar Z\in K)\to1\), then
\[
    \sup_{\tau\in\Theta}\|f(\tau,\bar Z)-f(\tau,z_0)\|\xrightarrow{p}0 .
\]
\end{lemma}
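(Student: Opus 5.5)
The plan is the mean value theorem applied along the segment joining \(z_1\) and \(z_2\), which lies in \(K\) because \(K\) is convex. The constant \(L\) will come from compactness.

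First I would produce a uniform derivative bound. Since \(f\) is \(C^1\) in \(z\) on an open set containing \(\Theta\times K\), I will assume, as the smoothness lemmas provide, that the map \((\tau,z)\mapsto D_zf(\tau,z)\) is jointly continuous there; mere continuity in \(z\) for each fixed \(\tau\) would not suffice. The set \(\Theta\times K\) is compact, so
\[
    L:=\sup_{(\tau,z)\in\Theta\times K}\|D_zf(\tau,z)\|_{\rm op}<\infty .
\]
Next fix \(\tau\in\Theta\) and \(z_1,z_2\in K\), and put \(z_t=z_2+t(z_1-z_2)\) for \(t\in[0,1]\). Convexity gives \(z_t\in K\). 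The fundamental theorem of calculus along the segment gives
\[
    f(\tau,z_1)-f(\tau,z_2)=\int_0^1 D_zf(\tau,z_t)(z_1-z_2)\,dt .
\]
For vector-valued \(f\) I will use this integral form rather than a pointwise mean value theorem, which can fail for vector-valued maps. Taking norms bounds the difference by \(L\|z_1-z_2\|\). The bound does not depend on \(\tau\), so it holds for the supremum over \(\Theta\).

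For the consequence, work on the event \(\{\bar Z\in K\}\), whose probability tends to one. Apply the Lipschitz bound with \(z_1=\bar Z\) and \(z_2=z_0\). This gives \(\sup_\tau\|f(\tau,\bar Z)-f(\tau,z_0)\|\le L\|\bar Z-z_0\|\), and the right side is \(o_p(1)\) because \(\bar Z\to_p z_0\). Off the event the bound may fail, but that event has vanishing probability, so convergence in probability still follows.

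The only delicate point is securing joint continuity of \(D_zf\) in \((\tau,z)\), since this is what makes the supremum finite. In the applications, \(f\) is built from \(B_\tau\), \(A_\tau\) and the truncated pseudoinverse, and Lemmas~\ref{lem:app-truncated-pinv} and \ref{lem:app-smooth-estimator-maps} give joint \(C^1\) smoothness with uniform derivative bounds on compacta. I would state the continuity interpretation explicitly in the proof.
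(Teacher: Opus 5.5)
Your proposal is correct and follows the paper's proof: it bounds \(D_zf\) uniformly on the compact set \(\Theta\times K\), applies the mean-value argument (in integral form) along the segment inside the convex set \(K\), and uses the resulting Lipschitz bound on the event \(\{\bar Z\in K\}\). Reading the hypothesis as joint continuity of \((\tau,z)\mapsto D_zf(\tau,z)\) is exactly what the paper's appeal to ``continuity of \(D_zf\) on the compact set \(\Theta\times K\)'' relies on, and the integral form is a slight tightening of the paper's wording for vector-valued \(f\).
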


\begin{proof}
Continuity of \(D_zf\) on the compact set \(\Theta\times K\) implies \(\sup_{\Theta\times K}\|D_zf\|<\infty\). The mean-value theorem along the segment joining \(z_1\) and \(z_2\) gives the first display. The stochastic conclusion follows by taking \(z_1=\bar Z\), \(z_2=z_0\), and using \(\Pr(\bar Z\in K)\to1\).
\end{proof}

\begin{lemma}[Uniform spectral separation]\label{lem:app-uniform-spectral-gap}
Under Assumptions~\ref{ass:finite-lda} and \ref{ass:alpha-probes},
\[
    \inf_{\tau\in\Theta}\lambda_k(B_\tau)>0,
    \qquad
    \lambda_{k+1}(B_\tau)=0\quad (k<d),
\]
where the eigenvalues are ordered nonincreasingly. Moreover, there is a neighborhood \(K\) of \(z_0\) and a constant \(c>0\) such that for every \((\tau,z)\in\Theta\times K\),
\[
    \lambda_k\{B_\tau(z)\}\ge c,
    \qquad
    \lambda_{k+1}\{B_\tau(z)\}\le c/2\quad (k<d).
\]
\end{lemma}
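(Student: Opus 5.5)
The plan has two steps: lower-bound the positive cluster of the population corrected second moment uniformly over the compact interval \(\Theta\), then transfer that bound to nearby moment points by eigenvalue perturbation.

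\emph{Population bound.} By \eqref{eq:app-SB}, \(B_\tau=OS_B(\tau)O^\top\) with \(S_B(\tau)\succ0\) for every \(\tau>0\) (Lemma~\ref{lem:id-corrected-factorization}). Since \(O\) has full column rank \(k\), \(B_\tau\) is symmetric positive semidefinite of rank exactly \(k\). Hence \(\lambda_{k+1}(B_\tau)=0\) when \(k<d\).

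For the lower bound, take any unit \(x\) in the column space of \(O\) and write \(x=Oa\) with \(a=O^+x\). Then
\[
x^\top B_\tau x=(O^\top x)^\top S_B(\tau)(O^\top x)\ge \lambda_{\min}\{S_B(\tau)\}\,\sigma_{\min}(O)^2 .
\]
By Courant--Fischer restricted to \(\mathrm{col}(O)\), this gives \(\lambda_k(B_\tau)\ge \lambda_{\min}\{S_B(\tau)\}\sigma_{\min}(O)^2\).

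The proof of Lemma~\ref{lem:id-corrected-factorization} supplies an explicit bound on \(\lambda_{\min}\{S_B(\tau)\}\). For \(\tau\le\alpha_0\), \(S_B(\tau)\succeq D/C_2\). For \(\tau>\alpha_0\), the Cauchy step gives \(a^\top S_B(\tau)a\ge a^\top Da/\{\alpha_0(\tau+1)\}\). In both cases
\[
\lambda_{\min}\{S_B(\tau)\}\ge \frac{\min_j\alpha_j}{\alpha_0(\max\{\alpha_0,\overline\tau\}+1)},
\]
uniformly over \(\tau\in\Theta\). This gives \(\inf_\Theta\lambda_k(B_\tau)\ge 2c_0>0\) with an explicit \(c_0\). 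Alternatively, one could use continuity of \(\tau\mapsto\lambda_k(B_\tau)\) on the compact \(\Theta\) together with pointwise positivity.

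\emph{Perturbation step.} The map \((\tau,z)\mapsto B_\tau(z)=M-\frac{\tau}{\tau+1}uu^\top\) is jointly continuous and Lipschitz in \(z\) uniformly over \(\tau\in\Theta\), because \(\tau/(\tau+1)\le1\). Explicitly,
\[
\|B_\tau(z)-B_\tau(z_0)\|_{\rm op}\le \|M-M_2\|+\|uu^\top-\mu\mu^\top\|,
\]
which is small on a small ball \(K\) around \(z_0\), with no dependence on \(\tau\). Weyl's inequality for symmetric matrices then bounds each eigenvalue shift by this operator-norm distance. Choosing \(K\) so the distance is at most \(c_0/2\), and setting \(c=c_0\), yields \(\lambda_k\{B_\tau(z)\}\ge 2c_0-c_0/2\ge c\) and \(\lambda_{k+1}\{B_\tau(z)\}\le c_0/2=c/2\) for all \((\tau,z)\in\Theta\times K\). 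Here \(M\) is symmetric because \(K\) lies in \(\mathcal Z_{\rm sym}\), so Weyl's inequality applies.

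There is no real obstacle. The only point needing care is uniformity in \(\tau\) when \(\tau>\alpha_0\), where the rank-one term in \(S_B(\tau)\) is negative. The Cauchy bound from the lemma's proof handles this, and compactness of \(\Theta\), through \(\overline\tau<\infty\), keeps the constant positive. Assumption~\ref{ass:alpha-probes} is used only for the compactness of \(\Theta\) and \(\alpha_0\in\Theta\).
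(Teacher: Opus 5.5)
Your proof is correct and follows essentially the same route as the paper: the factorization \(B_\tau=OS_B(\tau)O^\top\) with \(S_B(\tau)\succ0\) gives rank exactly \(k\), compactness of \(\Theta\) gives the uniform lower bound on \(\lambda_k(B_\tau)\), and eigenvalue perturbation carries it to a neighborhood of \(z_0\). Where the paper appeals to continuity and compactness, you give quantitative versions: the explicit Cauchy-based bound on \(\lambda_{\min}\{S_B(\tau)\}\), and a Weyl bound that is uniform in \(\tau\) because \(\tau/(\tau+1)\le1\). This makes the uniformity over \(\Theta\times K\) fully transparent.
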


\begin{proof}
By \eqref{eq:app-SB}, \(B_\tau=OS_B(\tau)O^\top\), with \(S_B(\tau)\succ0\) for all \(\tau>0\). Since \(O\) has full column rank, \(B_\tau\) has exactly \(k\) positive eigenvalues. The map \(\tau\mapsto B_\tau\) is continuous and \(\Theta\) is compact, so the smallest positive eigenvalue is bounded away from zero. The perturbation statement follows from continuity of ordered eigenvalues and compactness of \(\Theta\).
\end{proof}

\begin{lemma}[Rank consistency]\label{lem:app-rank-consistency}
Under Assumptions~\ref{ass:finite-lda} and \ref{ass:rank-threshold}, the estimator \(\hat k\) in \eqref{eq:est-khat} satisfies \(\Pr(\hat k=k)\to1\).
\end{lemma}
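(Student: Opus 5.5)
The plan is to compare the eigenvalues of \(\hat M_2\) with those of \(M_2\) through a perturbation bound, and then show that the threshold \(a_n\) asymptotically separates the \(k\) positive population eigenvalues from the \(d-k\) zero ones. Write \(\lambda_1\ge\cdots\ge\lambda_d\) for the eigenvalues of \(M_2\). Under Assumption~\ref{ass:finite-lda}, \(M_2=O\E(hh^\top)O^\top\) with \(\E(hh^\top)\succ0\) by \eqref{eq:app-dirichlet-second}. Since \(O\) has full column rank, this gives \(\lambda_k>0\) and \(\lambda_{k+1}=\cdots=\lambda_d=0\).

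First, I will get the stochastic order of the moment error. The CLT \eqref{eq:app-Z-clt}, restricted to the \(\hat M_{2,i}\) block, gives \(\|\hat M_2-M_2\|_F=O_p(n^{-1/2})\). This block has bounded summands because the tokens are one-hot and \(N\) is fixed, so no moment condition on \(Y\) is even needed for it. Second, both \(\hat M_2\) and \(M_2\) are symmetric: \(\hat M_2\) is symmetric by \eqref{eq:est-M2-counts}. Weyl's inequality therefore gives
\[
    \max_{1\le j\le d}|\tilde\lambda_j-\lambda_j|\le\|\hat M_2-M_2\|_{2}=O_p(n^{-1/2}).
\]

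Third, I split the event \(\{\hat k\ne k\}\) into two parts.

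\emph{Overcounting.} If \(\hat k>k\), then \(\tilde\lambda_{k+1}>a_n\). Because \(\lambda_{k+1}=0\), this forces \(\|\hat M_2-M_2\|_2>a_n\). This happens with probability \(\Pr(\sqrt n\|\hat M_2-M_2\|_2>\sqrt n a_n)\to0\), since the left side is tight and \(\sqrt n a_n\to\infty\).

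\emph{Undercounting.} If \(\hat k<k\), then \(\tilde\lambda_k\le a_n\). This forces \(\|\hat M_2-M_2\|_2\ge\lambda_k-a_n\), and \(\lambda_k-a_n\ge\lambda_k/2\) for large \(n\) because \(a_n\to0\). The probability of this event tends to zero by consistency of \(\hat M_2\).

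Finally, I need \(\hat k=\sum_j\mathbf 1\{\tilde\lambda_j>a_n\}\) to equal \(k\) exactly when \(\tilde\lambda_k>a_n\ge\tilde\lambda_{k+1}\). This holds because the \(\tilde\lambda_j\) are ordered nonincreasingly, so the count above the threshold is determined by where the threshold falls in the ordered list. Combining the two probabilities gives \(\Pr(\hat k=k)\to1\).

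There is no real obstacle in this argument. The only points needing care are that the rate of \(a_n\) must beat the \(n^{-1/2}\) fluctuation of the null eigenvalues, which is exactly the condition \(\sqrt n a_n\to\infty\), and that \(a_n\to0\) keeps it below the fixed gap \(\lambda_k\).
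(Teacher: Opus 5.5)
Your proposal is correct and follows essentially the same route as the paper: you use \(\rank(M_2)=k\) from \(M_2=O\E(hh^\top)O^\top\), the \(O_p(n^{-1/2})\) bound on \(\hat M_2-M_2\), Weyl's inequality, and the two rate conditions on \(a_n\) to separate the \(k\) positive eigenvalues from the null ones. Your split into overcounting and undercounting events is only a more explicit version of the paper's single high-probability event \(\max_j|\tilde\lambda_j-\lambda_j(M_2)|<\min\{\lambda_k(M_2)/2,a_n\}\).
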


\begin{proof}
The population second moment is \(M_2=O\E(hh^\top)O^\top\), where \(\E(hh^\top)\succ0\). Thus \(\rank(M_2)=k\), \(\lambda_k(M_2)>0\), and \(\lambda_{k+1}(M_2)=0\) if \(k<d\). From the primitive CLT, \(\|\hat M_2-M_2\|=O_p(n^{-1/2})\). Weyl's inequality implies
\[
    \max_{1\le j\le d}|\tilde\lambda_j-\lambda_j(M_2)|=O_p(n^{-1/2}).
\]
With probability tending to one, this maximum is smaller than \(\min\{\lambda_k(M_2)/2,a_n\}\). On that event the first \(k\) sample eigenvalues exceed \(a_n\) for all large \(n\), because \(a_n\to0\), and all remaining eigenvalues are below \(a_n\), because \(\sqrt n a_n\to\infty\). Hence \(\hat k=k\) with probability tending to one.
\end{proof}

\begin{lemma}[Concentration estimator: consistency and linearization]\label{lem:app-alpha-md}
Under Assumptions~\ref{ass:finite-lda}, \ref{ass:alpha-probes}, and \ref{ass:rank-threshold}, the estimator \(\hat\alpha_0\) in \eqref{eq:est-alpha0} satisfies \(\hat\alpha_0\xrightarrow{p}\alpha_0\). Furthermore, with \(g\) defined in \eqref{eq:app-g-finite-map},
\[
    G_\tau:=\partial_\tau g(\alpha_0,z_0),
    \qquad
    G_z:=D_zg(\alpha_0,z_0),
\]
we have \(G_\tau^\top G_\tau>0\) and
\begin{equation}\label{eq:app-alpha-md-linear}
    \sqrt n(\hat\alpha_0-\alpha_0)
    =-\big(G_\tau^\top G_\tau\big)^{-1}G_\tau^\top G_z\sqrt n(\bar Z-z_0)+o_p(1).
\end{equation}
\end{lemma}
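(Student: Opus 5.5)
The argument has two parts: consistency through uniform convergence of the criterion together with the population identification result, and then linearization by applying Lemma~\ref{lem:app-md-expansion}. We work throughout on the event \(\{\hat k=k\}\). Its probability tends to one by Lemma~\ref{lem:app-rank-consistency}, so events of vanishing probability change nothing in either the \(\xrightarrow{p}\) or the \(o_p(1)\) statements. On this event, with probability tending to one, the empirical criterion satisfies \(\hat Q^{(k)}_{\mathcal I}(\tau)=\|g(\tau,\bar Z)\|^2\) for every \(\tau\in\Theta\). The reason is that Lemma~\ref{lem:app-uniform-spectral-gap} provides a neighborhood \(K\) of \(z_0\) on which \(\lambda_k\{B_\tau(z)\}\ge c>0\) uniformly in \(\tau\). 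Since \(\bar Z\xrightarrow{p}z_0\) by the primitive CLT \eqref{eq:app-Z-clt}, the \(+\infty\) convention is eventually inactive. We also shrink \(K\) so that \(u\) stays away from zero, which keeps the projected probes \(P(u)s_\ell\) smooth.

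\textbf{Consistency.} Take \(K\) compact and convex inside the neighborhood \({\cal K}\) of Lemma~\ref{lem:app-smooth-estimator-maps}. On an open set containing \(\Theta\times K\), the map \(f(\tau,z)=g(\tau,z)\) is \(C^1\) in \(z\). Lemma~\ref{lem:app-uniform-plugin} then gives \(\sup_{\tau\in\Theta}\|g(\tau,\bar Z)-g(\tau,z_0)\|\xrightarrow{p}0\). Because \(g(\cdot,z_0)\) is bounded on the compact set \(\Theta\), this yields \(\sup_\Theta|\hat Q(\tau)-Q_{\mathcal I}(\tau)|\xrightarrow{p}0\). By Lemma~\ref{lem:finite-probe-local}, \(Q_{\mathcal I}\ge0\) is continuous on \(\Theta\) and vanishes only at \(\alpha_0\). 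Compactness therefore makes the minimizer well separated: \(\inf_{|\tau-\alpha_0|\ge\epsilon}Q_{\mathcal I}(\tau)>0\) for every \(\epsilon>0\). The standard argmin-consistency argument now applies, and it covers any selection from the argmin set, including the ``\(\min\arg\min\)'' rule. Since \(\hat Q(\hat\alpha_0)\le\hat Q(\alpha_0)\xrightarrow{p}0\), we get \(\hat\alpha_0\xrightarrow{p}\alpha_0\).

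\textbf{Linearization.} The conditions of Lemma~\ref{lem:app-md-expansion} are checked as follows.
\begin{enumerate}
\item Continuous differentiability of \(S=(\partial_\tau g)^\top g\) near \((\alpha_0,z_0)\) is given by Lemma~\ref{lem:app-smooth-estimator-maps}.
\item \(g(\alpha_0,z_0)=0\) holds by part (i) of Theorem~\ref{thm:id-alpha-commutativity}.
\item \(G_\tau\ne0\) is the second part of Lemma~\ref{lem:finite-probe-local}, which gives \(G_\tau^\top G_\tau>0\).
\item \(\bar Z-z_0=O_p(n^{-1/2})\) follows from \eqref{eq:app-Z-clt}.
\item The interior-solution requirement comes from \(\alpha_0\in\mathrm{int}(\Theta)\) and consistency: with probability tending to one, \(\hat\alpha_0\) lies in the interior. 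It is then an interior minimizer of the differentiable function \(\tau\mapsto\|g(\tau,\bar Z)\|^2\), so the first-order condition \(2S(\hat\alpha_0,\bar Z)=0\) holds.
\end{enumerate}
Lemma~\ref{lem:app-md-expansion} then delivers \eqref{eq:app-alpha-md-linear} directly.

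\textbf{Main obstacle.} The delicate step is the uniform-in-\(\tau\) smoothness of the criterion together with the event bookkeeping. We need \(g\) to be \(C^1\) in \(z\) uniformly on all of \(\Theta\times K\), not merely near \(\alpha_0\), because consistency needs uniform convergence over the whole compact set. This relies on the uniform spectral gap of \(B_\tau(z)\) from Lemma~\ref{lem:app-uniform-spectral-gap}, passed through the contour-integral representation in Lemma~\ref{lem:app-truncated-pinv}, with a finite cover of \(\Theta\). We must also make sure that the \(z\)-dependence of the probes through \(P(u)\) is \(C^1\). That holds because \(\mu\ne0\): \(\mu\) lies in the simplex.
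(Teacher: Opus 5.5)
Your proposal is correct and follows essentially the same route as the paper. You restrict to \(\{\hat k=k\}\), obtain uniform convergence of the commutator criterion over \(\Theta\) from the spectral-gap, smoothness and uniform plug-in lemmas, use the well-separated unique zero of \(Q_{\mathcal I}\) for argmin consistency, and then apply Lemma~\ref{lem:app-md-expansion} after checking interiority, the first-order condition and \(G_\tau\ne0\). The differences are only cosmetic: you apply the uniform plug-in bound to \(g\) and pass to \(\|g\|^2\) using boundedness of \(g(\cdot,z_0)\), whereas the paper applies it to \(\|g\|^2\) directly, and you spell out the \(+\infty\) convention and \(P(u)\) bookkeeping that the paper leaves to Appendix~\ref{app:finite-dimensional-maps}.
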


\begin{proof}
By Lemma~\ref{lem:app-rank-consistency}, it is enough to work on \(\{\hat k=k\}\).
Lemmas~\ref{lem:app-uniform-spectral-gap} and \ref{lem:app-smooth-estimator-maps} imply that
the commutator map and its first-order-condition map are smooth on a neighborhood of
\(\Theta\times\{z_0\}\) reached by \(\bar Z\) with probability tending to one. Lemma~\ref{lem:app-uniform-plugin} gives
\[
    \sup_{\tau\in\Theta}|\|g(\tau,\bar Z)\|^2-\|g(\tau,z_0)\|^2|=o_p(1).
\]
Write \(Q_0(\tau):=\|g(\tau,z_0)\|^2\) and \(\hat Q(\tau):=\|g(\tau,\bar Z)\|^2\). By Lemma~\ref{lem:finite-probe-local}, \(Q_0\) has unique minimizer \(\alpha_0\) on \(\Theta\) with \(Q_0(\alpha_0)=0\); since \(Q_0\) is continuous and \(\Theta\) is compact, this minimizer is well separated, in the sense that \(\inf\{Q_0(\tau):\tau\in\Theta,\,|\tau-\alpha_0|\ge\epsilon\}>0\) for every \(\epsilon>0\). Using \(\hat Q(\hat\alpha_0)\le\hat Q(\alpha_0)\),
\[
    Q_0(\hat\alpha_0)-Q_0(\alpha_0)
    =\bigl[Q_0(\hat\alpha_0)-\hat Q(\hat\alpha_0)\bigr]
    +\bigl[\hat Q(\hat\alpha_0)-\hat Q(\alpha_0)\bigr]
    +\bigl[\hat Q(\alpha_0)-Q_0(\alpha_0)\bigr]
    \le 2\sup_{\tau\in\Theta}|\hat Q(\tau)-Q_0(\tau)|=o_p(1),
\]
so \(Q_0(\hat\alpha_0)\to_p 0\). The well-separated minimizer property then forces \(\hat\alpha_0\to_p\alpha_0\). Since
\(\alpha_0\in\operatorname{int}(\Theta)\), the sample minimizer is interior with probability tending
to one and hence satisfies the first-order condition
\(S(\hat\alpha_0,\bar Z)=0\). Lemma~\ref{lem:finite-probe-local} also gives \(G_\tau\ne0\).
Applying Lemma~\ref{lem:app-md-expansion} proves \eqref{eq:app-alpha-md-linear}.
\end{proof}

\subsection{Proof of Theorem~\ref{thm:est-main-asymptotics}}\label{app:proof-main-asymptotics}

All statements are conditional on the realized probes and ordering direction. Let \(z_0=Z_0\) denote the population moment vector in \eqref{eq:app-Zi-vector}.

\paragraph{Step 1: rank and concentration.}
Lemma~\ref{lem:app-rank-consistency} gives \(\Pr(\hat k=k)\to1\). On this event, Lemma~\ref{lem:app-alpha-md} gives
\begin{equation}\label{eq:app-alpha-IF}
    \sqrt n(\hat\alpha_0-\alpha_0)
    =D_\alpha\sqrt n(\bar Z-z_0)+o_p(1),
    \qquad
    D_\alpha:=-(G_\tau^\top G_\tau)^{-1}G_\tau^\top G_z .
\end{equation}
Equivalently,
\begin{equation}\label{eq:app-alpha-phi}
    \sqrt n(\hat\alpha_0-\alpha_0)
    =\frac1{\sqrt n}\sum_{i=1}^n\phi_{\alpha,i}+o_p(1),
    \qquad
    \phi_{\alpha,i}:=D_\alpha(Z_i-z_0).
\end{equation}

\paragraph{Step 2: topic estimator.}
Let \(\eta_r=P(\mu)r\). At the truth,
\begin{equation}\label{eq:app-Ho-spectral}
    H^o:=H_{\alpha_0}(\eta_r;z_0)
    =O_r\Lambda_rO_r^+,
    \qquad
    \Lambda_r=\frac{2}{\alpha_0+2}\diag(O_r^\top\eta_r),
\end{equation}
where \(O_r\) is ordered by the entries of \(O^\top\eta_r\). Assumption~\ref{ass:ordering} gives simple nonzero eigenvalues separated from the zero cluster, and the local labeling is by decreasing real eigenvalue within that cluster. Because the population eigenvalues are simple and mutually separated, on a small enough neighborhood of \((z_0,\alpha_0)\) the sample eigenvalues retain this ordering with probability tending to one; the sort permutation is therefore locally constant and contributes no first-order term, and the eigenvector map can be linearized through the implicit function theorem for each ordered eigenpair separately. By Lemma~\ref{lem:app-smooth-estimator-maps}, the local eigenvector map \((z,\tau)\mapsto O_r(z,\tau)\) is continuously differentiable. Therefore
\begin{equation}\label{eq:app-O-expansion-abstract}
    \sqrt n\,\vecop\{\hat O(r)-O_r\}
    =D_O
      \begin{pmatrix}
        \sqrt n(\bar Z-z_0)\\
        \sqrt n(\hat\alpha_0-\alpha_0)
      \end{pmatrix}
      +o_p(1)
\end{equation}
for the derivative \(D_O\) of the normalized eigenvector map at \((z_0,\alpha_0)\). Combining with \eqref{eq:app-alpha-IF}, there is a matrix \(\tilde D_O\) such that
\begin{equation}\label{eq:app-O-expansion}
    \sqrt n\,\vecop\{\hat O(r)-O_r\}
    =\tilde D_O\sqrt n(\bar Z-z_0)+o_p(1)
    =\frac1{\sqrt n}\sum_{i=1}^n\phi_{O,i}(r)+o_p(1),
\end{equation}
where \(\phi_{O,i}(r)=\tilde D_O(Z_i-z_0)\).

\paragraph{Step 3: supervised operator and downstream coefficient.}
Let
\[
    H^y=H_{\alpha_0}^y(z_0)=O_r\Lambda_yO_r^+,
    \qquad
    \Lambda_y=\frac{2}{\alpha_0+2}\diag(\beta_r).
\]
The map \((z,\tau)\mapsto H_\tau^y(z)\) is continuously differentiable by Lemma~\ref{lem:app-smooth-estimator-maps}. Let \(D_H\) denote its derivative at \((z_0,\alpha_0)\), after substituting the influence of \(\hat\alpha_0\) from \eqref{eq:app-alpha-IF}. Then
\begin{equation}\label{eq:app-Hy-expansion}
    \sqrt n(\hat H^y-H^y)
    =D_H\sqrt n(\bar Z-z_0)+o_p(1)
    =\frac1{\sqrt n}\sum_{i=1}^n\Delta_i^y+o_p(1),
\end{equation}
where \(\Delta_i^y=D_H(Z_i-z_0)\). Applying Lemma~\ref{lem:app-beta-cancellation} to the coefficient map gives
\begin{equation}\label{eq:app-beta-expansion}
    \sqrt n\{\hat\beta(r)-\beta_r\}
    =\frac1{\sqrt n}\sum_{i=1}^n\phi_{\beta,i}(r)+o_p(1),
\end{equation}
with
\begin{equation}\label{eq:app-beta-phi}
    \phi_{\beta,i}(r)
    =\frac{\alpha_0+2}{2}\diag\{O_r^+\Delta_i^yO_r\}
      +\frac{\phi_{\alpha,i}}{\alpha_0+2}\beta_r .
\end{equation}
The two terms in \eqref{eq:app-beta-phi} are not double-counting the contribution of \(\hat\alpha_0\). Writing the population map as
\[
    G(\tau,z)=\tfrac{\tau+2}{2}\diag\{O(\tau,z)^+H^y(\tau,z)O(\tau,z)\}
\]
and applying the chain rule \(dG/dz=\partial_z G+\partial_\tau G\cdot D_\alpha\), the diagonal commutator cancellation in Lemma~\ref{lem:app-beta-cancellation} removes the \(\partial_z O\) and \(\partial_\tau O\) contributions inside the diagonal; what remains is
\[
    \frac{dG}{dz}\bigg|_{(\alpha_0,z_0)}
    =\tfrac{\alpha_0+2}{2}\diag\{O_r^+\,D_H\,O_r\}+\tfrac{\beta_r}{\alpha_0+2}D_\alpha,
\]
where \(D_H=\partial_z H^y+(\partial_\tau H^y)D_\alpha\) is the operator derivative used in \eqref{eq:app-Hy-expansion}, and the second piece comes from differentiating the multiplicative factor \((\tau+2)/2\) and using \(\diag\{O_r^+H^yO_r\}=2\beta_r/(\alpha_0+2)\) at the truth. The first term in \eqref{eq:app-beta-phi} carries the influence of \(\hat\alpha_0\) through the operator \(D_H\); the second carries it through the prefactor. No separate term involving \(\phi_{O,i}\) appears because of the diagonal commutator cancellation in Lemma~\ref{lem:app-beta-cancellation}; estimating \(O\) is still needed to form the sample projection and to label the coefficients.

\paragraph{Step 4: joint expansion and variance estimation.}
Combining \eqref{eq:app-alpha-phi}, \eqref{eq:app-O-expansion}, and \eqref{eq:app-beta-expansion}, define
\begin{equation}\label{eq:app-joint-phi}
    \phi_i(r)=
    \begin{pmatrix}
        \phi_{\alpha,i}\\
        \phi_{O,i}(r)\\
        \phi_{\beta,i}(r)
    \end{pmatrix}.
\end{equation}
Then
\begin{equation}\label{eq:app-theta-expansion}
    \sqrt n
    \begin{pmatrix}
        \hat\alpha_0-\alpha_0\\
        \vecop\{\hat O(r)-O_r\}\\
        \hat\beta(r)-\beta_r
    \end{pmatrix}
    =\frac1{\sqrt n}\sum_{i=1}^n\phi_i(r)+o_p(1).
\end{equation}
Conditional on the probes and ordering direction, \(\phi_i(r)\) is a fixed linear transformation of \(Z_i-z_0\). It is therefore iid, mean zero, and has finite second moment. The multivariate CLT applied to \eqref{eq:app-theta-expansion} gives the joint normal limit in Theorem~\ref{thm:est-main-asymptotics}.

For the coefficient block, write \(\phi_{\beta,i}(r)=D_{\beta,r}(Z_i-z_0)\). Then
\[
    V_\beta(r)=D_{\beta,r}\Omega D_{\beta,r}^\top .
\]
The derivative map is continuous at the truth under the same spectral-gap and probe nondegeneracy conditions. Therefore \(\hat D_{\beta,r}\to_pD_{\beta,r}\), and \(\hat\Omega\to_p\Omega\) by \eqref{eq:app-Omega-hat}. Hence
\[
    \hat V_\beta(r)=\hat D_{\beta,r}\hat\Omega\hat D_{\beta,r}^\top
    \xrightarrow{p}V_\beta(r).
\]
The Wald confidence intervals in \eqref{eq:est-ci} follow from Slutsky's theorem. This completes the proof.

\subsection{Implementation of sandwich standard errors}\label{app:implementation-derivatives}

The simulations use the analytic influence-function implementation of the sandwich estimator.  The calculation is fixed-dimensional once the probes, the ordering direction, and any compression matrix have been fixed.

\begin{enumerate}[leftmargin=*,itemsep=2pt]
\item For each document, compute the document-level moment vector used by the estimator. In the no-control case this is \(Z_i^\star=Z_i\) in \eqref{eq:app-Zi-vector}. With observed controls, use the enlarged vector \(Z_i^\star=Z_i^c\), including the fixed-dimensional control moments in Appendix~\ref{app:observed-controls}. With compressed observed controls, form this enlarged vector from the compressed tokens and include the compressed first and corrected second moments used for scale recovery. With unequal document lengths, use the length-normalized version described in Appendix~\ref{app:variable-lengths}.

\item Form \(\bar Z^\star=n^{-1}\sum_i Z_i^\star\) and
\[
    \hat\Omega^\star=\frac1n\sum_{i=1}^n(Z_i^\star-\bar Z^\star)(Z_i^\star-\bar Z^\star)^\top .
\]

\item Compute the point estimates \(\hat\alpha_0\), \(\hat O(r)\), and \(\hat\beta(r)\) from the finite-dimensional maps in Appendix~\ref{app:finite-dimensional-maps}.

\item Let \(g(\tau,z)\) be the stacked finite-probe commutator map in \eqref{eq:app-g-finite-map}.  The derivative of the concentration estimator is the sample analogue of
\[
    D_\alpha
    =-\{G_\tau^\top G_\tau\}^{-1}G_\tau^\top G_z,
    \qquad
    G_\tau=\partial_\tau g(\alpha_0,Z_0),
    \quad
    G_z=\partial_z g(\alpha_0,Z_0).
\]
Equivalently, this is the implicit derivative of the local minimum-distance solution of \(\|g(\tau,z)\|^2\).  The numerical optimizer is used to obtain \(\hat\alpha_0\), but the variance calculation does not differentiate through the optimizer.

\item Differentiate the fixed-\(\alpha_0\) coefficient map \(b_r(z,\alpha_0)\) in \eqref{eq:app-beta-full-map}.  This derivative propagates perturbations through \(B_\tau(z)\), \(A_\tau^y(z)\), the truncated inverse \(B_\tau(z)^{(k)+}\), the normalized right eigenvectors \(O_r(z,\tau)\), and the final diagonal projection.  Let the resulting derivative be \(\hat D_{\beta,r}^{\rm fix}\), and let \(\hat D_{\beta,\alpha}\) be the derivative of \(b_r(\bar Z,\tau)\) with respect to \(\tau\).  The full derivative is
\[
    \hat D_{\beta,r}
    =\hat D_{\beta,r}^{\rm fix}+\hat D_{\beta,\alpha}\hat D_\alpha .
\]

\item If observed controls are used, construct the control-adjusted map in Appendix~\ref{app:observed-controls}. If compressed observed controls are used, first recover the compressed topic scale from \(\hat\mu_R\) and \(\hat B_{\hat\alpha_0,R}\) as in Proposition~\ref{prop:app-compressed-controls}, then construct \(\widehat M_{qh}\), \(\widehat M_{qq}\), \(\hat s_{qY}\), and the control-specific maps \(\hat\psi_{q_a}\). The derivative, denoted \(\hat D^c_{\beta,r}\) in the control-adjusted case, is taken with respect to the enlarged vector \(Z_i^\star\) and includes the scale-recovery and control-adjustment steps.

\item Compute the empirical influence values
\[
    \hat\phi^\star_{\beta,i}(r)=\hat D^\star_{\beta,r}(Z_i^\star-\bar Z^\star),
\]
where \(\hat D^\star_{\beta,r}=\hat D_{\beta,r}\) in the no-control case and \(\hat D^\star_{\beta,r}=\hat D^c_{\beta,r}\) in the control-adjusted case. Report
\[
    \hat V_\beta(r)=\frac1n\sum_{i=1}^n
    \hat\phi^\star_{\beta,i}(r)\hat\phi^\star_{\beta,i}(r)^\top,
    \qquad
    \widehat{\operatorname{se}}\{\hat\beta_j(r)\}
    =\{\hat V_{\beta,jj}(r)/n\}^{1/2}.
\]
\end{enumerate}

The same calculation gives the reported standard error for \(\hat\alpha_0\), using \(\hat\phi_{\alpha,i}=\hat D_\alpha(Z_i^\star-\bar Z^\star)\) with the derivative taken with respect to the word-side components that enter the commutator criterion.

\paragraph{Implementation conventions.}
The simulations and application use raw probe directions drawn from independent standard normal
coordinates and then projected to the empirical mean-orthogonal space by \(P(\hat\mu)\). In the
application, the first-split rank diagnostic uses 20 independent probe sets and five commutator
partner directions for each candidate rank. The reported second-split application table uses probe
seed 13579 and the same number of partner directions. The concentration search interval is
\(\Theta=[0.05,30]\); scalar minimization is initialized by a grid search over this interval. The
ordering direction is generated by the same projected-probe construction. The ordering step selects
the nonzero eigenvalue cluster of the ordering operator; within that selected cluster, eigenpairs are
labeled by decreasing real eigenvalue. Negative selected ordering eigenvalues are not discarded.
In the Monte Carlo Hellinger summaries, estimated topic columns are used only for simulation scoring: columns are
sign-normalized, negative entries are clipped to zero, and the columns are renormalized before
computing Hellinger distances; topic labels are then aligned to the truth by Hungarian matching.

\subsection{Observed controls}\label{app:observed-controls}

This subsection records the extension to fixed-dimensional observed controls. Let
\(q_i\in\R^p\), with fixed \(p\), be observed together with the document and the
response. The word model is unchanged: conditional on \((h_i,q_i)\), the token
variables are generated as in Section~\ref{sec:model}. The response residual
is assumed to be orthogonal to the corresponding word moments conditional on
the observed controls, as implied by \eqref{eq:app-controls-model}. In particular, for
\(j\ne \ell\),
\[
    \E(x_{ij}\mid h_i,q_i)=Oh_i,
    \qquad
    \E(x_{ij}x_{i\ell}^\top\mid h_i,q_i)=(Oh_i)(Oh_i)^\top .
\]
The response restriction is replaced by
\begin{equation}\label{eq:app-controls-model}
    \E(Y_i\mid h_i,q_i,x_{i1},\ldots,x_{iN})
    =\beta^\top h_i+\delta^\top q_i,
\end{equation}
where \(\delta\in\R^p\). The controls may be correlated with the latent topic
shares; independence of \(q_i\) and \(h_i\) is not required. A constant control
is not separately identified from a common shift in the entries of \(\beta\),
because \(\ones^\top h_i=1\). Thus either no intercept is included in \(q_i\),
as assumed below, or an additional normalization on \(\beta\) is imposed.

For any scalar document-level variable \(R_i\), define
\[
    m_R=\E R_i,
    \qquad
    v_R=\E(R_ix_{i1}),
    \qquad
    T^R=\E(R_ix_{i1}x_{i2}^\top),
\]
where the two token positions are distinct. Let
\begin{equation}\label{eq:app-controls-AR}
\begin{aligned}
    A_\tau^R
    :={}&T^R-\frac{\tau}{\tau+2}
       \{v_R\mu^\top+\mu v_R^\top+m_RM_2\}  \\
    &\quad
     +\frac{2\tau^2}{(\tau+1)(\tau+2)}m_R\mu\mu^\top,
    \qquad
    H_\tau^R:=A_\tau^RB_\tau^+ .
\end{aligned}
\end{equation}
At the true concentration define the linear functional
\begin{equation}\label{eq:app-controls-psi}
    \psi(R)
    :=\frac{\alpha_0+2}{2}
      \diag\{O^+H_{\alpha_0}^RO\}\in\R^k .
\end{equation}
If \(R_i=a^\top h_i\), Theorem~\ref{thm:id-beta-supervised} gives
\(\psi(R)=a\). For a general control coordinate \(q_a\), the matrix
\(O^+H_{\alpha_0}^{q_a}O\) need not be diagonal; only its diagonal is used.
Writing
\[
    \psi_Y:=\psi(Y),
    \qquad
    \Psi_q:=\{\psi(q_1),\ldots,\psi(q_p)\}\in\R^{k\times p},
\]
linearity of \(A_\tau^R\), \(H_\tau^R\), and \(\psi(R)\), together with
\eqref{eq:app-controls-model}, gives
\begin{equation}\label{eq:app-controls-first-eq}
    \psi_Y=\beta+\Psi_q\delta .
\end{equation}
The error component contributes zero because its conditional mean given
\((h_i,q_i,x_{i1},\ldots,x_{iN})\) is zero, and hence its contributions to
\(m_R\), \(v_R\), and \(T^R\) vanish.

A second set of observed moments separates \(\beta\) from \(\delta\). Let
\[
    M_{qh}:=\E(q_ih_i^\top),
    \qquad
    M_{qq}:=\E(q_iq_i^\top),
    \qquad
    s_{qY}:=\E(q_iY_i).
\]
The cross-moment \(M_{qh}\) is identified from words and controls, since
\begin{equation}\label{eq:app-controls-Mqh}
    \E(q_ix_{i1}^\top)=M_{qh}O^\top,
    \qquad
    M_{qh}=\E(q_ix_{i1}^\top)O^{+\top} .
\end{equation}
Multiplying \eqref{eq:app-controls-model} by \(q_i\) and taking expectations
gives
\begin{equation}\label{eq:app-controls-second-eq}
    s_{qY}=M_{qh}\beta+M_{qq}\delta .
\end{equation}
Combining \eqref{eq:app-controls-first-eq} and
\eqref{eq:app-controls-second-eq}, define
\begin{equation}\label{eq:app-controls-Gamma}
    \Gamma_q:=M_{qq}-M_{qh}\Psi_q .
\end{equation}
If \(\Gamma_q\) is nonsingular, then
\begin{align}
    \delta
    &=\Gamma_q^{-1}\{s_{qY}-M_{qh}\psi_Y\},
      \label{eq:app-controls-delta}\\
    \beta
    &=\psi_Y-\Psi_q\delta .
      \label{eq:app-controls-beta}
\end{align}
Equivalently,
\begin{equation}\label{eq:app-controls-beta-closed}
    \beta
    =\{I_k+\Psi_q\Gamma_q^{-1}M_{qh}\}\psi_Y
     -\Psi_q\Gamma_q^{-1}s_{qY} .
\end{equation}
This identifies the topic coefficient in the presence of observed controls. The
nonsingularity condition excludes exact collinearity between the observed
controls and the latent-topic component after the moment transformation above.

The sample analogue is obtained by replacing each population moment by its
empirical counterpart. For a scalar document variable \(R_i\), put
\[
    \hat m_R=n^{-1}\sum_{i=1}^nR_i,
    \qquad
    \hat v_R=n^{-1}\sum_{i=1}^nR_i\hat\mu_i,
    \qquad
    \hat T^R=n^{-1}\sum_{i=1}^nR_i\hat M_{2,i}.
\]
Construct \(\hat A_{\hat\alpha_0}^R\) from
\eqref{eq:app-controls-AR}, with population moments replaced by sample moments,
and define
\begin{equation}\label{eq:app-controls-psihat}
    \hat\psi_R(r)
    :=\frac{\hat\alpha_0+2}{2}
       \diag\{\hat O(r)^+\hat A_{\hat\alpha_0}^R
       \hat B_{\hat\alpha_0,\hat k}^+\hat O(r)\} .
\end{equation}
Let
\[
    \hat\psi_Y(r):=\hat\psi_R(r)\big|_{R=Y},
    \qquad
    \hat\Psi_q(r):=\{\hat\psi_{q_1}(r),\ldots,\hat\psi_{q_p}(r)\},
\]
\[
    \hat M_{qh}
    :=\left(n^{-1}\sum_{i=1}^nq_i\hat\mu_i^\top\right)
      \hat O(r)^{+\top},
    \qquad
    \hat M_{qq}:=n^{-1}\sum_{i=1}^nq_iq_i^\top,
    \qquad
    \hat s_{qY}:=n^{-1}\sum_{i=1}^nq_iY_i .
\]
With \(\hat\Gamma_q:=\hat M_{qq}-\hat M_{qh}\hat\Psi_q(r)\), define
\begin{align}
    \hat\delta
    &=\hat\Gamma_q^{-1}\{\hat s_{qY}-\hat M_{qh}\hat\psi_Y(r)\},
      \label{eq:app-controls-deltahat}\\
    \hat\beta_c(r)
    &=\hat\psi_Y(r)-\hat\Psi_q(r)\hat\delta .
      \label{eq:app-controls-betahat}
\end{align}
Here \(\hat\beta_c(r)\) is the control-adjusted topic coefficient in the same
ordering as \(\hat O(r)\).

The asymptotic theory is an immediate delta-method extension of
Theorem~\ref{thm:est-main-asymptotics}. Enlarge the document-level vector
\(Z_i\) in \eqref{eq:app-Zi-vector} to include the fixed-dimensional moments
needed above, for example
\[
    q_i,
    \quad q_iq_i^\top,
    \quad q_iY_i,
    \quad q_i\hat\mu_i^\top,
    \quad q_{ia}\hat\mu_i,
    \quad q_{ia}\hat M_{2,i}\quad (a=1,\ldots,p).
\]
Call the enlarged vector \(Z_i^c\), with mean \(Z_0^c\) and covariance
\(\Omega_c\). If \(\E\|Z_i^c\|^2<\infty\) and \(\Gamma_q\) is nonsingular,
then the maps in \eqref{eq:app-controls-psihat}--\eqref{eq:app-controls-betahat}
are continuously differentiable at the truth under the same rank, probe, and
ordering conditions as before. Therefore, conditionally on the realized probes
and ordering direction,
\begin{equation}\label{eq:app-controls-clt}
    \sqrt n
    \begin{pmatrix}
        \hat\beta_c(r)-\beta_r\\
        \hat\delta-\delta
    \end{pmatrix}
    =D_c\sqrt n(\bar Z^c-Z_0^c)+o_p(1)
    \rightsquigarrow N(0,D_c\Omega_cD_c^\top),
\end{equation}
where \(D_c\) is the derivative of the displayed plug-in map. A consistent
sandwich estimator is obtained by evaluating \(D_c\) at the sample moments and
replacing \(\Omega_c\) by the empirical covariance of \(Z_i^c\).

\subsection{Linear compression and split-sample PCA preprocessing}\label{app:pca-compression}

This subsection records the finite-dimensional compression properties used when the vocabulary dimension is
large. The results are not growing-\(d\) theorems. Rather, they say that the moment construction may be
applied after a fixed-dimensional linear compression, provided the compression does not lose any topic
direction. In algebraic terms, for \(R\in\R^{d\times m}\) with fixed \(m\ge k\), the compressed loading
matrix \(R^\top O\) must have full column rank. This allows the second- and third-order moment
calculations to be carried out in dimension \(m\).

Let \(R\in\R^{d\times m}\), with fixed \(m\ge k\), be a deterministic matrix and define compressed
token vectors
\[
    z_{ij}=R^\top x_{ij}\in\R^m,
    \qquad
    \widetilde O_R:=R^\top O\in\R^{m\times k}.
\]
The columns of \(\widetilde O_R\) are not probability vectors. In sample implementations, compressed
eigenvectors are therefore first recovered only up to nonzero column rescaling. The no-control
coefficient formula below is invariant to such rescalings, but observed-control adjustments also use
the topic-share scale through \(M_{qh}=\E(q_i h_i^\top)\). Proposition~\ref{prop:app-compressed-controls}
therefore gives a moment-based scale-recovery step for compressed observed-control inference.

For the compressed variables write
\[
    \widetilde\mu_R=\E z_{i1},\qquad
    \widetilde M_{2,R}=\E(z_{i1}z_{i2}^\top),\qquad
    \widetilde T_R(a)=\E\{z_{i1}z_{i2}^\top\langle z_{i3},a\rangle\},
\]
where \(a\in\R^m\). Let \(\widetilde B_{\tau,R}\), \(\widetilde A_{\tau,R}(a)\),
\(\widetilde H_{\tau,R}(a)\), and \(\widetilde H^y_{\tau,R}\) denote the analogues of
\eqref{eq:id-Btau}, \eqref{eq:id-Atau}, \eqref{eq:id-Htau}, and \eqref{eq:id-Hytau} formed from
the compressed moments. When \(\widetilde O_R\) has full column rank, the corrected second moment
has rank \(k\), and the rank-\(k\) Moore--Penrose inverse is used in the compressed operator.

\begin{proposition}[Invariance to admissible linear compression]\label{prop:app-linear-compression}
Assume the model of Section~\ref{sec:model}, with \(O\) full column rank. Let
\(R\in\R^{d\times m}\), \(m\ge k\), satisfy \(\rank(R^\top O)=k\). Then the compressed tokens
\(z_{ij}=R^\top x_{ij}\) satisfy the same cross-token moment identities with loading matrix
\(\widetilde O_R=R^\top O\). In particular, at the true concentration,
\begin{equation}\label{eq:app-compressed-H}
    \widetilde H_{\alpha_0,R}(a)
    = \widetilde O_R
      \left\{\frac{2}{\alpha_0+2}\diag(\widetilde O_R^\top a)\right\}
      \widetilde O_R^+,
\end{equation}
and
\begin{equation}\label{eq:app-compressed-Hy}
    \widetilde H^y_{\alpha_0,R}
    = \widetilde O_R
      \left\{\frac{2}{\alpha_0+2}\diag(\beta)\right\}
      \widetilde O_R^+.
\end{equation}
Consequently, the coefficient is identified from compressed moments by
\begin{equation}\label{eq:app-compressed-beta}
    \beta
    =\frac{\alpha_0+2}{2}
      \diag\{\widetilde O_R^+\widetilde H^y_{\alpha_0,R}\widetilde O_R\}.
\end{equation}
The commutativity characterization of \(\alpha_0\) also holds in the compressed coordinates whenever
\(k\ge3\) and the compressed probe directions satisfy the same non-collinearity condition as in
Theorem~\ref{thm:id-alpha-commutativity}.
\end{proposition}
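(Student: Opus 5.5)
The plan is to show that the compressed tokens reproduce the entire finite-LDA cross-token moment structure, with \(O\) replaced by \(\widetilde O_R=R^\top O\). Once that is done, every earlier population identity can be reread with this substitution. The key observation is that nothing in Lemma~\ref{lem:id-corrected-factorization}, Theorem~\ref{thm:id-beta-supervised} or Theorem~\ref{thm:id-alpha-commutativity} used that the columns of \(O\) lie in the simplex or that tokens are one-hot. Those proofs used only three facts:
\begin{itemize}
\item conditional independence of distinct tokens given \(h\);
\item the conditional mean \(\E(x_a\mid h)=Oh\);
\item full column rank of \(O\).
\end{itemize}

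First I will verify the compressed moment identities. Conditionally on \(h\), the \(z_{ij}=R^\top x_{ij}\) are still independent across distinct positions, with \(\E(z_{ij}\mid h)=R^\top Oh=\widetilde O_Rh\). Linear transformation gives
\begin{itemize}
\item \(\widetilde\mu_R=R^\top\mu\);
\item \(\widetilde M_{2,R}=R^\top M_2R=\widetilde O_R\E(hh^\top)\widetilde O_R^\top\);
\item \(\widetilde T_R(a)=R^\top T(Ra)R=\widetilde O_R\E\{hh^\top(\gamma^\top h)\}\widetilde O_R^\top\) with \(\gamma=\widetilde O_R^\top a\).
\end{itemize}
For the response moments, \(\E(Yz_{i1})=R^\top v_y\) and \(\E(Yz_{i1}z_{i2}^\top)=R^\top T^yR\). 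These inherit \eqref{eq:supervised-moment-factorizations} with \(O\) replaced by \(\widetilde O_R\), using only the orthogonality conditions \eqref{eq:response-token-orthogonality}. Because \(\rank(\widetilde O_R)=k\), there is \(\tilde\eta_\beta\in\R^m\) with \(\widetilde O_R^\top\tilde\eta_\beta=\beta\), for example \(\tilde\eta_\beta=\widetilde O_R(\widetilde O_R^\top\widetilde O_R)^{-1}\beta\). The bridge identities \eqref{eq:supervised-bridge-identities} then hold in compressed coordinates.

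Second, I will rerun the algebra of Lemma~\ref{lem:id-corrected-factorization} verbatim. The topic-coordinate matrices \(S_B(\tau)\) and \(S_A(\tau;w)\) depend only on \(\alpha\) and \(w=\widetilde O_R^\top a\), so the factorizations \(\widetilde B_{\tau,R}=\widetilde O_RS_B(\tau)\widetilde O_R^\top\) and \(\widetilde A_{\tau,R}(a)=\widetilde O_RS_A(\tau;w)\widetilde O_R^\top\) follow immediately. The one point needing care is the pseudoinverse step \eqref{eq:app-B-pinv}. This requires \((\widetilde O_RS\widetilde O_R^\top)^+=\widetilde O_R^{+\top}S^{-1}\widetilde O_R^+\), which holds because \(\widetilde O_R\) has full column rank and \(S_B(\tau)\succ0\). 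Since \(\widetilde B_{\tau,R}\) has exact rank \(k\), the rank-\(k\) truncated inverse coincides with the Moore--Penrose inverse. Conjugation then gives \(\widetilde H_{\tau,R}(a)=\widetilde O_RT_\tau(w)\widetilde O_R^+\). Evaluating at \(\tau=\alpha_0\) yields \eqref{eq:app-compressed-H}. The supervised bridge gives \eqref{eq:app-compressed-Hy}, and multiplying by \(\widetilde O_R^+\) on the left and \(\widetilde O_R\) on the right gives \eqref{eq:app-compressed-beta}. Because \(\diag\{\cdot\}\) of a diagonal matrix is unchanged by column rescaling of \(\widetilde O_R\), the formula is also scale-invariant, consistent with the remark preceding the proposition.

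Third, for commutativity, formula \eqref{eq:app-commutator-formula} is a statement purely in latent coordinates, so it transfers unchanged. Two further checks are needed.
\begin{itemize}
\item \(a\in\widetilde\mu_R^\perp\) maps into \(\alpha^\perp\). This holds since \(\widetilde\mu_R=\widetilde O_R\alpha/\alpha_0\).
\item The observed commutator \(\widetilde O_R[T_\tau(w_1),T_\tau(w_2)]\widetilde O_R^+\) vanishes only if the latent one does. This follows from \(\widetilde O_R^+\widetilde O_R=I_k\).
\end{itemize}
Under \(k\ge3\) and the non-collinearity condition on \(\widetilde O_R^\top a_\ell\), Theorem~\ref{thm:id-alpha-commutativity} then applies. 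I expect no real obstacle beyond bookkeeping. The only place the compressed setting genuinely differs is that \(\widetilde O_R\) is not simplex-normalized and \(\widetilde B_{\tau,R}\) may be a strict subspace object in \(\R^m\). Both are handled by the full-rank pseudoinverse identity and by avoiding any use of \(\ones^\top O_{:j}=1\).
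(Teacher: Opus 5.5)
Your proposal is correct and follows essentially the same route as the paper. Both arguments check that the compressed tokens are conditionally independent across distinct positions with conditional mean $\widetilde O_R h$ (equivalently $\widetilde\mu_R=R^\top\mu$, $\widetilde M_{2,R}=R^\top M_2R$, $\widetilde T_R(a)=R^\top T(Ra)R$). They then reuse the corrected-moment factorization, the full-rank pseudoinverse identity, the supervised bridge and the latent commutator formula with $O$ replaced by $\widetilde O_R$; your explicit remarks about the rank-$k$ truncated inverse for every $\tau>0$ and the map $\widetilde\mu_R^\perp\to\alpha^\perp$ only fill in details the paper leaves implicit.
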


\begin{proof}
Conditional on \(h_i\),
\[
    \E(z_{ij}\mid h_i)=R^\top Oh_i=\widetilde O_Rh_i.
\]
For distinct token positions, conditional independence gives
\[
    \E(z_{ij}z_{i\ell}^\top\mid h_i)=\widetilde O_Rh_ih_i^\top\widetilde O_R^\top,
\]
and the analogous third cross-token identity. Equivalently,
\[
    \widetilde\mu_R=R^\top\mu,
    \qquad
    \widetilde M_{2,R}=R^\top M_2R,
    \qquad
    \widetilde T_R(a)=R^\top T(Ra)R,
\]
and therefore
\[
    \widetilde A_{\tau,R}(a)=R^\top A_\tau(Ra)R,
    \qquad
    \widetilde B_{\tau,R}=R^\top B_\tau R.
\]
At \(\tau=\alpha_0\), Lemma~\ref{lem:id-corrected-factorization} gives
\[
    \widetilde B_{\alpha_0,R}
    =\widetilde O_R\frac{D}{C_2}\widetilde O_R^\top,
    \qquad
    \widetilde A_{\alpha_0,R}(a)
    =\widetilde O_R\frac{2}{C_3}
      \diag\{\alpha\circ(\widetilde O_R^\top a)\}\widetilde O_R^\top .
\]
Since \(\widetilde O_R\) has full column rank and \(D\) is positive definite,
\(\widetilde B_{\alpha_0,R}\) has rank \(k\). Multiplying
\(\widetilde A_{\alpha_0,R}(a)\) by the rank-\(k\) Moore--Penrose inverse of
\(\widetilde B_{\alpha_0,R}\) gives \eqref{eq:app-compressed-H}. The supervised identity
\eqref{eq:app-compressed-Hy} follows in the same way from the supervised bridge identities implied
by \eqref{eq:response-token-orthogonality}, exactly as in the proof of Theorem~\ref{thm:id-beta-supervised}.
Equation~\eqref{eq:app-compressed-beta} follows by premultiplying by \(\widetilde O_R^+\),
postmultiplying by \(\widetilde O_R\), and taking the diagonal. The commutativity argument is
identical to the proof of Theorem~\ref{thm:id-alpha-commutativity}, with \(O\) replaced by the
full-column-rank matrix \(\widetilde O_R\). Since the map \(a\mapsto \widetilde O_R^\top a\) is
onto \(\R^k\), the required latent non-collinearity conditions are unchanged.
\end{proof}

\begin{proposition}[Observed controls after compressed scale recovery]\label{prop:app-compressed-controls}
Assume the observed-control model of Appendix~\ref{app:observed-controls}. Let
\(R\in\R^{d\times m}\), with fixed \(m\ge k\), satisfy \(\rank(R^\top O)=k\), and write
\(\widetilde O_R=R^\top O\). Suppose the compressed spectral step has recovered the compressed topic
directions in the correct order but with arbitrary nonzero column scale,
\[
        \bar O_R=\widetilde O_R C,
        \qquad C=\diag(c_1,\ldots,c_k),
\]
where \(C\) is nonsingular. Let
\[
        \widetilde\mu_R=\E z_{i1},
        \qquad
        \widetilde B_{\alpha_0,R}
        =\E(z_{i1}z_{i2}^\top)-\frac{\alpha_0}{\alpha_0+1}\widetilde\mu_R\widetilde\mu_R^\top,
\]
and set \(L=\bar O_R^+\). For \(j=1,\ldots,k\), define
\[
        \theta_j=(L\widetilde\mu_R)_j,
        \qquad
        b_j=\{L\widetilde B_{\alpha_0,R}L^\top\}_{jj},
        \qquad
        c_j^0=\frac{\theta_j}{(\alpha_0+1)b_j}.
\]
Then, at the population value, \(c_j^0=c_j\) for every \(j\). Hence
\[
        \bar O_R\diag(c_1^0,\ldots,c_k^0)^{-1}=\widetilde O_R,
\]
so the correctly scaled compressed loading matrix is identified from compressed first and second
corrected moments.

Let \(U_i\) be any scalar document-level variable and define
\[
        \widetilde\psi(U)
        =\frac{\alpha_0+2}{2}\diag\{\widetilde O_R^+\widetilde H^U_{\alpha_0,R}\widetilde O_R\},
\]
where \(\widetilde H^U_{\alpha_0,R}\) is the compressed corrected operator formed with \(U_i\) in
place of \(Y_i\). Put
\(\widetilde\psi_Y=\widetilde\psi(Y)\) and
\(\widetilde\Psi_q=\{\widetilde\psi(q_1),\ldots,\widetilde\psi(q_p)\}\). Then
\[
        \widetilde\psi_Y=\beta+\widetilde\Psi_q\delta .
\]
Moreover, with \(\widetilde M_{qz}=\E(q_i z_{i1}^\top)\),
\[
        M_{qh}=\widetilde M_{qz}\,\widetilde O_R^{+\top}.
\]
Therefore, if \(\Gamma_q=M_{qq}-M_{qh}\widetilde\Psi_q\) is nonsingular, then
\[
        \delta=\Gamma_q^{-1}\{s_{qY}-M_{qh}\widetilde\psi_Y\},
        \qquad
        \beta=\widetilde\psi_Y-\widetilde\Psi_q\delta .
\]
\end{proposition}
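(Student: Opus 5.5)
The plan is to treat the two claims separately: first the scale recovery, then the control adjustment. Both are population calculations that reduce to the compressed identities of Proposition~\ref{prop:app-linear-compression} and the uncompressed control algebra of Appendix~\ref{app:observed-controls}.

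\emph{Scale recovery.} Start from Lemma~\ref{lem:id-corrected-factorization} applied to the compressed tokens, which Proposition~\ref{prop:app-linear-compression} justifies. It gives
\[
\widetilde\mu_R=\widetilde O_R\alpha/\alpha_0
\quad\text{and}\quad
\widetilde B_{\alpha_0,R}=\widetilde O_R D\widetilde O_R^\top/C_2 .
\]
Because \(\bar O_R=\widetilde O_R C\) has full column rank, \(L=\bar O_R^+=C^{-1}\widetilde O_R^+\), and hence \(L\widetilde O_R=C^{-1}\). This yields
\[
\theta_j=\frac{\alpha_j}{\alpha_0 c_j},
\qquad
b_j=\frac{\alpha_j}{C_2c_j^2}.
\]
Then
\[
c_j^0=\frac{\theta_j}{(\alpha_0+1)b_j}
=\frac{\alpha_j/(\alpha_0c_j)}{(\alpha_0+1)\alpha_j/(\alpha_0(\alpha_0+1)c_j^2)}
=c_j .
\]
The denominators are nonzero because \(\alpha_j>0\). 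Rescaling \(\bar O_R\) by \(\diag(c^0)^{-1}\) therefore returns \(\widetilde O_R\). The only care needed is that \(\widetilde O_R^+\widetilde O_R=I_k\), which uses full column rank.

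\emph{Control equations.} For any scalar \(U_i\), the compressed corrected moment \(\widetilde A^U_{\tau,R}\) is linear in the triple \((m_U,v_U,T^U)\) of compressed moments. So \(\widetilde\psi(U)\) is linear in \(U\) for fixed \(\alpha_0\) and \(\widetilde O_R\).

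Decompose \(Y_i=\beta^\top h_i+\delta^\top q_i+\varepsilon_i\), where \(\varepsilon_i\) has conditional mean zero given \((h_i,q_i,x_{i\cdot})\).
\begin{itemize}
\item The \(\varepsilon\) part contributes zero to \(m,v,T\), by iterated expectations since compressed tokens are functions of \(x\).
\item The part \(a^\top h_i\) gives \(\widetilde\psi=a\), by \eqref{eq:app-compressed-beta} of Proposition~\ref{prop:app-linear-compression}. The bridge argument there uses only the conditional moments of tokens given \(h\), which remain valid given \((h,q)\).
\item The \(q\) part gives \(\widetilde\Psi_q\delta\).
\end{itemize}
Hence \(\widetilde\psi_Y=\beta+\widetilde\Psi_q\delta\).

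Next, conditional on \((h_i,q_i)\), the mean of \(z_{i1}\) is \(\widetilde O_Rh_i\). So \(\widetilde M_{qz}=M_{qh}\widetilde O_R^\top\), and right multiplication by \(\widetilde O_R^{+\top}\) recovers \(M_{qh}\). Multiplying the response model by \(q_i\) gives \(s_{qY}=M_{qh}\beta+M_{qq}\delta\). Substituting \(\beta=\widetilde\psi_Y-\widetilde\Psi_q\delta\) yields
\[
\Gamma_q\delta=s_{qY}-M_{qh}\widetilde\psi_Y,
\]
and nonsingularity of \(\Gamma_q\) finishes the argument.

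\emph{Main obstacle.} The step I expect to need care is checking that the linearity argument for \(\widetilde\psi\) is legitimate when \(q\) is correlated with \(h\). The point is that \(\widetilde\psi(q_a)\) is defined by the moment map, not by a diagonal representation, so only linearity is used and no diagonality is claimed. The other thing to verify is that \(\widetilde\psi(a^\top h)=a\) holds whether the latent response is \(h\)-measurable alone or comes with \(q\). Both follow once we note that the token conditional moments are unchanged by additionally conditioning on \(q\).
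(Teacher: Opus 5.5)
Your proposal is correct and follows the paper's proof essentially step for step. The paper uses the same ingredients: the computation giving \(\theta_j=\pi_j/c_j\) and \(b_j=\pi_j/\{(\alpha_0+1)c_j^2\}\), linearity of the corrected operator in \(U\) with a zero residual contribution, \(\widetilde M_{qz}=M_{qh}\widetilde O_R^\top\), and the two-equation solve through \(\Gamma_q\). Your explicit check of \(L=C^{-1}\widetilde O_R^+\), and of \(\widetilde\psi(a^\top h)=a\) under conditioning on \((h,q)\), spells out what the paper leaves implicit; the argument is the same.
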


\begin{proof}
The compressed word model is
\[
        \E(z_{ij}\mid h_i)=R^\top Oh_i=\widetilde O_Rh_i .
\]
Thus the compressed first moment is \(\widetilde\mu_R=\widetilde O_R\pi\), where
\(\pi=\alpha/\alpha_0\), and the corrected compressed second moment at the true concentration is
\[
        \widetilde B_{\alpha_0,R}
        =\widetilde O_R \frac{D}{\alpha_0(\alpha_0+1)}\widetilde O_R^\top .
\]
Since \(\bar O_R=\widetilde O_RC\) and \(\widetilde O_R\) has full column rank,
\[
        \bar O_R^+\widetilde\mu_R=C^{-1}\pi,
\]
so \(\theta_j=\pi_j/c_j\). Similarly,
\[
        \bar O_R^+\widetilde B_{\alpha_0,R}\bar O_R^{+\top}
        =C^{-1}\frac{D}{\alpha_0(\alpha_0+1)}C^{-1},
\]
and hence
\[
        b_j=\frac{\alpha_j}{\alpha_0(\alpha_0+1)c_j^2}
        =\frac{\pi_j}{(\alpha_0+1)c_j^2}.
\]
Therefore
\[
        \frac{\theta_j}{(\alpha_0+1)b_j}
        =\frac{\pi_j/c_j}{\pi_j/c_j^2}=c_j,
\]
which proves the scale-recovery claim.

For any scalar \(U_i\), the compressed corrected moments satisfy the same factorization as the
uncompressed moments, with \(O\) replaced by \(\widetilde O_R\). By linearity of the corrected
operator in \(U\), and by the response model with the low-order residual contribution equal to zero,
\[
        \widetilde\psi_Y=\beta+\widetilde\Psi_q\delta .
\]
The compressed control-word cross moment is
\[
        \widetilde M_{qz}=\E(q_i z_{i1}^\top)=\E(q_i h_i^\top)\widetilde O_R^\top
        =M_{qh}\widetilde O_R^\top .
\]
Multiplying on the right by \(\widetilde O_R^{+\top}\) gives
\(\widetilde M_{qz}\widetilde O_R^{+\top}=M_{qh}\). The two equations
\[
        \widetilde\psi_Y=\beta+\widetilde\Psi_q\delta,
        \qquad
        s_{qY}=M_{qh}\beta+M_{qq}\delta
\]
then yield the displayed formulas for \(\delta\) and \(\beta\) when
\(\Gamma_q=M_{qq}-M_{qh}\widetilde\Psi_q\) is nonsingular.
\end{proof}

The sample version of Proposition~\ref{prop:app-compressed-controls} replaces the population moments by compressed empirical moments and \(\alpha_0\) by \(\hat\alpha_0\). Provided the scale denominators \(b_j\) and the smallest singular value of \(\widetilde O_R\) are bounded away from zero, and \(\Gamma_q\) is nonsingular, the scale-recovery and observed-control maps are smooth finite-dimensional maps of the empirical moments. The same delta-method argument as in Appendix~\ref{app:observed-controls} therefore applies after enlarging the moment vector to include the compressed control moments; the derivative includes the scale-recovery step.

The proposition is deterministic in \(R\). It therefore leads to a simple sample-splitting construction.
Split the documents into two independent parts, \(I_1\) and \(I_2\), with \(|I_s|=n_s\) and
\(n_s/n\) bounded away from zero. Use \(I_1\) to estimate a \(d\times m\) orthonormal matrix
\(\hat R_1\), with fixed \(m\ge k\). For example, \(\hat R_1\) may be the matrix of the leading
\(m\) eigenvectors of the symmetrized first-half uncentered cross-token moment. Then use only
\(I_2\) to compute compressed tokens \(\hat z_{ij}=\hat R_1^\top x_{ij}\) and apply the moment
estimator in dimension \(m\), using a rank-\(k\) truncated inverse of the compressed corrected
second moment.

\begin{proposition}[Split-sample compression]\label{prop:app-split-compression}
Assume the conditions of Theorem~\ref{thm:est-main-asymptotics}, and suppose \(k\) is treated as
known in the compressed analysis. Let \(m\ge k\) be fixed. Let \(\hat R_1\in\R^{d\times m}\) be
estimated from \(I_1\), independently of the second-half moments, and suppose
\begin{equation}\label{eq:app-compression-admissible}
    \Pr\{\sigma_{\min}(\hat R_1^\top O)>c\}\to1
\end{equation}
for some \(c>0\). Conditional on \(\hat R_1\), apply the estimator of Section~\ref{sec:estimation}
to the compressed second-half tokens \(\hat z_{ij}=\hat R_1^\top x_{ij}\), with rank \(k\), with the
same concentration search, with compressed probe and ordering directions satisfying the
non-degeneracy conditions of Theorem~\ref{thm:est-main-asymptotics}, and with compressed
eigenvectors normalized by a fixed smooth rule rather than by the vocabulary-simplex rule. Let
\(\beta_{\hat R_1,r}\) denote \(\beta\) in the ordering induced by the realized compressed ordering
operator. Then, on events whose probability tends to one,
\begin{equation}\label{eq:app-compressed-al}
    \sqrt{n_2}\{\hat\beta_{\hat R_1}(r)-\beta_{\hat R_1,r}\}
    =\frac{1}{\sqrt{n_2}}\sum_{i\in I_2}\phi_{\hat R_1,i}(r)+o_p(1\mid \hat R_1),
\end{equation}
where \(\E\{\phi_{\hat R_1,i}(r)\mid \hat R_1\}=0\) and
\(\E\{\|\phi_{\hat R_1,i}(r)\|^2\mid \hat R_1\}<\infty\). Thus the first-half
estimation error in \(\hat R_1\) does not appear as an additional first-order
term in the second-half expansion. Conditional sandwich standard errors computed
from the compressed second-half moments are valid for the conditional law given
\(\hat R_1\). If the corresponding conditional covariance converges in probability to a nonrandom limit \(V_R\), then \(\sqrt{n_2}\{\hat\beta_{\hat R_1}(r)-\beta_{\hat R_1,r}\}\rightsquigarrow N(0,V_R)\) unconditionally. If fixed-dimensional observed controls are included and the compressed scale-recovery step in Proposition~\ref{prop:app-compressed-controls} is used, the same conditional expansion holds for the control-adjusted estimator after enlarging the second-half moment vector as in Appendix~\ref{app:observed-controls}, provided the population scale-recovery denominators are bounded away from zero and the corresponding \(\Gamma_q(\hat R_1)\) is nonsingular. The derivative used for sandwich standard errors includes the scale-recovery map.
\end{proposition}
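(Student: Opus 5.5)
The plan is to condition on \(\mathcal F_{\hat R_1}:=\sigma(\hat R_1)\), together with the probes and the ordering direction, which are independent of \(I_2\). On the event \(\mathcal E_n=\{\sigma_{\min}(\hat R_1^\top O)>c\}\) we then treat \(\hat R_1=R\) as a fixed deterministic compression and reduce to Theorem~\ref{thm:est-main-asymptotics}, applied in dimension \(m\) with loading matrix \(\widetilde O_R=R^\top O\). By Proposition~\ref{prop:app-linear-compression}, the compressed second-half documents satisfy the same finite-LDA cross-token moment structure. In particular, the corrected operators diagonalize at \(\alpha_0\), the commutator characterization of \(\alpha_0\) holds, and \(\beta\) is recovered by \eqref{eq:app-compressed-beta}. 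The compressed document vector \(Z_{R,i}\) is obtained by replacing \(x_{ij}\) with \(R^\top x_{ij}\). Its word components are bounded, since \(\|R^\top x\|\le\|R\|\) and \(R\) is orthonormal. The second moments of its response components are controlled by \(\E Y^2\). Conditionally on \(R\), the second-half documents are iid, so the primitive CLT \eqref{eq:app-Z-clt} holds for \(\sqrt{n_2}(\bar Z_R-Z_{R,0})\).

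Next we verify the deterministic regularity inputs of the proof of Theorem~\ref{thm:est-main-asymptotics} for \(\widetilde O_R\), and we want them to hold \emph{uniformly} over \(R\) in the admissible set \(\mathcal R_c=\{R:R^\top R=I_m,\ \sigma_{\min}(R^\top O)\ge c\}\), which is compact. Lemma~\ref{lem:app-uniform-spectral-gap} transfers because \(\widetilde B_{\tau,R}=\widetilde O_RS_B(\tau)\widetilde O_R^\top\). Its \(k\)-th eigenvalue is at least \(c^2\lambda_{\min}(S_B(\tau))\), which is bounded below uniformly on \(\Theta\times\mathcal R_c\). The commutator derivative \(G_\tau\) is nonzero by Lemma~\ref{lem:finite-probe-local} applied with \(\widetilde O_R\), using the assumed compressed non-collinearity. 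The ordering eigenvalues are simple and nonzero by assumption. Simplex normalization is replaced by a fixed smooth rule. Lemma~\ref{lem:app-eigenvector-smooth} still applies provided the normalizing functional is transverse to each eigenline, and the diagonal map in \eqref{eq:app-compressed-beta} is invariant to column rescaling. Moreover, the cancellation in Lemma~\ref{lem:app-beta-cancellation} does not depend on how \(O\) is normalized. Lemmas~\ref{lem:app-smooth-estimator-maps}, \ref{lem:app-md-expansion} and \ref{lem:app-alpha-md} then give, conditionally on \(R\),
\[
\sqrt{n_2}\{\hat\beta_R(r)-\beta_{R,r}\}=D_{\beta,R,r}\sqrt{n_2}(\bar Z_R-Z_{R,0})+o_p(1\mid R).
\]
This yields \eqref{eq:app-compressed-al} with \(\phi_{\hat R_1,i}=D_{\beta,\hat R_1,r}(Z_{\hat R_1,i}-Z_{\hat R_1,0})\), which has conditional mean zero and finite conditional second moment. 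Because \(\hat R_1\) is held fixed in the map, first-half estimation error contributes no first-order term.

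The main obstacle is making the \(o_p(1\mid\hat R_1)\) remainder meaningful when \(\hat R_1\) is random. Pointwise-in-\(R\) results would not suffice. The approach is to show that the consistency argument and the remainder of the delta method are uniform over \(\mathcal R_c\). The population maps \((R,\tau,z)\mapsto g,\ O_r,\ b_r\) are continuous in \(R\) on the compact set \(\mathcal R_c\times\Theta\), so the identification gap \(\inf_{|\tau-\alpha_0|\ge\epsilon}Q_{0,R}(\tau)\), the lower bounds on \(\|G_\tau\|\) and on the eigengaps, and the moduli of continuity of the derivatives are all bounded uniformly in \(R\). The compressed moment vector has variance uniformly bounded over \(R\), so \(\bar Z_R-Z_{R,0}=O_p(n_2^{-1/2})\) uniformly. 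Combining these with \(\Pr(\mathcal E_n)\to1\) gives the conditional expansion along the realized \(\hat R_1\). The conditional sandwich consistency then follows as in Step~4, since \(\hat\Omega_R\) and \(\hat D_{\beta,R,r}\) converge uniformly in \(R\). The unconditional normal limit follows from bounded convergence of conditional characteristic functions, given \(V_{\hat R_1}\to_p V_R\).

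For the control-adjusted version, the same argument applies to the enlarged vector \(Z^c_{R,i}\). The two additional map components are the scale-recovery map of Proposition~\ref{prop:app-compressed-controls} and \(\Gamma_q(R)^{-1}\). Both are smooth provided \(b_j\) and the smallest singular value of \(\Gamma_q\) are bounded away from zero, which is assumed.
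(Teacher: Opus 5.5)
Your core reduction is the same as the paper's. You condition on \(\hat R_1\) (and on the probes and ordering direction), use Proposition~\ref{prop:app-linear-compression} to get the same corrected-moment structure with loading \(\widetilde O_R=\hat R_1^\top O\) and the same target \(\beta\), and rerun the fixed-dimensional argument of Theorem~\ref{thm:est-main-asymptotics} in dimension \(m\). The paper's proof stops at exactly this pointwise, fixed-\(\hat R_1\) application. Your pointwise checks are correct and in places sharper than the paper's: the bound \(\lambda_k(\widetilde B_{\tau,R})\ge c^2\lambda_{\min}\{S_B(\tau)\}\), the transversality requirement on the smooth normalization, and invariance of the diagonal map to column rescaling.

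The gap is in the extra step you yourself flag as the crux. You claim that continuity on the compact set \(\mathcal R_c\times\Theta\) yields uniform lower bounds on \(\inf_{|\tau-\alpha_0|\ge\epsilon}Q_{0,R}(\tau)\), on \(\|G_\tau\|\), and on the ordering eigengaps. Compactness gives a positive minimum only for a function that is positive on all of \(\mathcal R_c\), and these quantities are not.

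To see this, note that \(\mathcal R_c\) is invariant under \(R\mapsto RU\) for any \(m\times m\) orthogonal \(U\). Since \(P(U^\top a)=U^\top P(a)U\), a compressed direction \(s\) has latent coordinates \(\widetilde O_{RU}^\top P(\widetilde\mu_{RU})s=\widetilde O_R^\top P(\widetilde\mu_R)Us\). The map \(\widetilde O_R^\top P(\widetilde\mu_R)\) has rank \(k-1\), so its kernel has dimension \(m-k+1\ge1\). Two choices of \(U\) then break your bounds:
\begin{itemize}
\item choosing \(U\) with \(Ur\) in the kernel makes every ordering eigenvalue zero;
\item choosing \(U\) so that \(\operatorname{span}(Us_1,Us_2)\) contains a kernel vector makes that tested pair collinear, so with a single tested pair \(Q_{0,RU}\equiv0\) and \(G_\tau=0\).
\end{itemize}
Hence \(\sigma_{\min}(R^\top O)\ge c\) controls only the \(B_\tau\) gap and the moment bounds.

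Under the stated hypotheses, non-degeneracy is imposed only at the realized \(\hat R_1\). The probe and ordering margins can therefore drift to zero along \(\hat R_1=\hat R_1^{(n)}\). When they do, your uniform remainder bound, your uniform sandwich consistency, and your characteristic-function step for the unconditional limit all fail.

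To close the gap you need an additional hypothesis in the spirit of \eqref{eq:app-compression-admissible}. With probability tending to one, the latent coordinates \(w_1,w_2\) of a tested pair should satisfy \(\|w_1w_2^\top-w_2w_1^\top\|_F\ge c'\), and the ordering coordinates should be nonzero and mutually separated by at least \(c'\). Your uniform argument then goes through over the compact set of \((R,\text{probes},r)\) obeying all three margins. Without such a condition, your argument delivers no more than the paper's pointwise conditional one, which does not address the triangular-array issue you correctly raised.
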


\begin{proof}
On the event in \eqref{eq:app-compression-admissible}, the matrix
\(\hat R_1^\top O\) has full column rank. Conditional on the first half, \(\hat R_1\)
is fixed, and the second-half compressed observations are iid with the same
cross-token moment identities and loading matrix \(\hat R_1^\top O\). Proposition~\ref{prop:app-linear-compression}
shows that the population target of the compressed procedure is the same
\(\beta\) as in the original coordinates, with ordering determined by the realized compressed ordering operator. The fixed-dimensional asymptotic
argument of Theorem~\ref{thm:est-main-asymptotics} can therefore be applied
conditionally on \(\hat R_1\), with dimension equal to \(m\) and rank \(k\). This gives
\eqref{eq:app-compressed-al}. Since the equality of targets holds for every
admissible value of \(R\), the random first-half compression affects conditioning
and finite-sample stability but contributes no separate first-order derivative
term to the second-half estimator. The observed-control statement follows by applying the same conditional argument to the smooth compressed scale-recovery and control-adjustment map in Proposition~\ref{prop:app-compressed-controls}.
\end{proof}

The observed-control clause is a conditional delta-method statement. Conditional on \(\hat R_1\), if \(\sigma_{\min}(\hat R_1^\top O)\) is bounded away from zero, the population scale-recovery denominators \(b_j(\hat R_1)\) are bounded away from zero, and \(\Gamma_q(\hat R_1)\) is nonsingular, then the enlarged-moment expansion in \eqref{eq:app-controls-clt} applies on the second split after replacing the uncompressed topic map by the compressed scale-recovery and control-adjustment map of Proposition~\ref{prop:app-compressed-controls}. The empirical sandwich derivative includes this scale-recovery step.

\paragraph{Proof of Corollary~\ref{cor:compressed-control-inference}.}
Conditional on the first split, the realized compression matrix, selected rank,
probes, and ordering direction are fixed. Proposition~\ref{prop:app-linear-compression}
gives the compressed corrected-moment identities when the retained subspace is
admissible, Proposition~\ref{prop:app-compressed-controls} gives the compressed
scale recovery and observed-control map, and Proposition~\ref{prop:app-split-compression}
applies the fixed-dimensional expansion to the independent second-split moments.
Using the enlarged vector \(Z^c_{\hat R_1,i}\) from Appendix~\ref{app:observed-controls}
with compressed tokens gives the derivative \(D^c_{\beta,r}\) and the conditional
influence representation in \eqref{eq:est-compressed-control-corollary}. The
sandwich consistency follows from the conditional iid law of the second-split
moment vectors and continuity of the derivative on the stated nonsingular stratum.

For PCA based on the uncentered cross-token moment, condition
\eqref{eq:app-compression-admissible} is natural. The population matrix
\[
    M_2=O\E(h_ih_i^\top)O^\top
\]
is positive semidefinite with rank \(k\) and column space \(\mathrm{span}(O)\).
When \(d\) is fixed, the symmetrized first-half estimator is root-\(n_1\)
consistent, so its leading \(k\)-dimensional eigenspace is consistent for
\(\mathrm{span}(O)\), using the population eigengap at \(k\). If \(m\ge k\), the
sample leading \(m\)-dimensional subspace contains the sample leading
\(k\)-dimensional eigenspace by construction. Thus
\(\sigma_{\min}(\hat R_1^\top O)\) is bounded away from zero with probability
tending to one. No population eigengap at \(m\) is required. More generally,
any PCA or subspace estimator whose retained subspace contains a consistent
estimate of \(\mathrm{span}(O)\) yields an orthonormal matrix \(\hat R_1\) that obeys
\eqref{eq:app-compression-admissible} with probability tending to one. In
contrast, the centered distinct-token covariance
\(M_2-\mu\mu^\top=O\Var(h_i)O^\top\) has rank at most \(k-1\), because
\(\ones^\top h_i=1\). For a topic-space compression, one should use an uncentered cross-token
second moment, or augment the centered covariance directions with the mean direction.

\begin{remark}[No sample splitting]\label{rem:app-no-split-compression}
The same invariance suggests that using the same observations to estimate
\(R\) and the compressed moments should also have the same first-order limit:
at the population value the coefficient functional is constant over all
admissible \(R\), so its derivative with respect to \(R\) is zero. A formal
no-split proof would nevertheless need to handle the empirical eigenspace map
and stochastic equicontinuity of the projected moment functions. The split-sample
statement above avoids these additional technicalities.
\end{remark}

\subsection{Document lengths other than a common fixed \texorpdfstring{\(N\)}{N}}\label{app:variable-lengths}

The main text assumes a common document length to avoid notational clutter. If observed document lengths \(N_i\ge3\) are allowed, the empirical moments should be normalized document by document:
\[
\begin{aligned}
    \hat\mu_i&=\frac{1}{N_i}\sum_{a=1}^{N_i}x_{ia},\\
    \hat M_{2,i}&=\frac{1}{N_i(N_i-1)}\sum_{a\ne b}x_{ia}x_{ib}^\top,\\
    \hat{\mathcal T}_i
    &=\frac{1}{N_i(N_i-1)(N_i-2)}
      \sum_{a,b,c\;\mathrm{distinct}}x_{ia}\otimes x_{ib}\otimes x_{ic}.
\end{aligned}
\]
Throughout this paper we use the variable-length formulas only under the same sampled-document LDA moment structure as in the main model: the sampled documents have the maintained Dirichlet topic-mixture distribution, and the response orthogonality restrictions hold for the length-normalized moments. Under these conditions the document-level averages above have the same population targets as their common-length analogues. The fixed-dimensional CLT and delta-method arguments are unchanged provided the document-level vector \(Z_i\) has finite second moment.

\section*{Data availability}
The bibliographic metadata and citation counts used in Section~\ref{sec:application}
were obtained from the public OpenAlex Works API, accessed on 25 May 2026. The article texts used for the
application are title and abstract fields reconstructed from OpenAlex metadata.

\bibliographystyle{plainnat}
\bibliography{references}

\end{document}